\documentclass[11pt]{article}

\usepackage[top=1in, bottom=1in, left=1in, right=1in]{geometry}

\usepackage[american]{babel}
\usepackage{paralist}
\usepackage{wrapfig}

\usepackage{amsthm}
\usepackage{amsmath,amsfonts}
\usepackage{amssymb,bm,bbm}

\PassOptionsToPackage{hyphens}{url}
\usepackage[linktocpage,
            colorlinks=true,linkcolor=black,
            citecolor=black,urlcolor=black,
            pdfpagemode=UseNone,
            bookmarks=true,
            breaklinks,
            bookmarksopen=false,
            hyperfootnotes=false,
            pdfhighlight=/I,
            pdftitle={Improved approximation for Fréchet distance on c-packed curves matching conditional lower bounds},
            pdfauthor={Karl Bringmann, Marvin Künnemann}%,pagebackref
            ]{hyperref}

\usepackage{thmtools, thm-restate}
\declaretheoremstyle[
	headfont=\normalfont\bfseries,
	notefont=\normalfont\bfseries,
	bodyfont=\normalfont
]{problemstyle}
\declaretheorem[numberwithin=section]{theorem}

\declaretheorem[sibling=theorem]{lemma}
\declaretheorem[sibling=theorem,name=Lemma]{lem}
\declaretheorem[sibling=theorem]{claim}
\declaretheorem[sibling=theorem]{definition}
\declaretheorem[sibling=theorem,style=problemstyle]{problem}
\declaretheorem[sibling=theorem,style=problemstyle,name=Algorithm]{algenv}
\declaretheorem[sibling=theorem,name=Proposition]{prop}
%\declaretheorem[sibling=theorem]{observation}
%\declaretheorem[sibling=theorem]{corollary}

\usepackage{xspace}
\usepackage[latin1]{inputenc}
\usepackage{url}
\usepackage{tikz}
\usepackage{pgfplots,pgfplotstable,filecontents}
\usepackage{subfig}
%\usepgfplotslibrary{external}
%\tikzexternalize

\pgfplotsset{compat=newest}

\usepackage{float}
\floatstyle{ruled}
\newfloat{myalgorithm}{thp}{lop}
\floatname{myalgorithm}{Algorithm}

\usepackage{hyperref}
\usepackage{algorithm}
\usepackage[noend]{algpseudocode}
\usepackage{gensymb}

\usepackage[numbers,sort&compress,sectionbib]{natbib}
\setlength\bibsep{0.1\baselineskip}

\bibliographystyle{abbrvnat}

\usepackage[labelfont=bf,font=footnotesize,indention=0.3cm,margin=0cm]{caption}  % for hanging captions

%\usepackage[vlined,linesnumbered,ruled,titlenotnumbered]{algorithm2e}
%\DontPrintSemicolon

\usepackage{setspace,color}   % just for \NOTE
\newcounter{nummer}

\newcommand{\ddfr}{\ensuremath{d_{\textup{dF}}}}
\newcommand{\dfr}{\ensuremath{d_{\textup{F}}}}

\newcommand{\Fr}{Fr\'echet }

\newcommand{\SETH}{\ensuremath{\mathsf{SETH}}}

\newcommand{\FS}{\ensuremath{\mathcal{D}}}

\newcommand{\R}{\mathbb{R}}

\newcommand{\N}{\mathbb{N}}
\newcommand{\Oh}{\mathcal{O}}

\newcommand{\pmin}{\tilde{p}}

\newcommand{\IGNORE}[1]{}

\def\min{\operatorname{min}}
\def\max{\operatorname{max}}

\newcommand{\figref}[1]{Figure~\ref{fig:#1}}

\newcommand{\lemref}[1]{Lemma~\ref{lem:#1}}

\newcommand{\lemrefss}[3]{Lemmas~\ref{lem:#1}, \ref{lem:#2}, and~\ref{lem:#3}}

\newcommand{\secref}[1]{Section~\ref{sec:#1}}

\newcommand{\secrefs}[2]{Sections~\ref{sec:#1} and~\ref{sec:#2}}

\renewcommand{\epsilon}{\ensuremath{\varepsilon}}
\newcommand{\eps}{\ensuremath{\varepsilon}}

\newcommand{\temporary}[1]{}

\let\oldsqrt\sqrt
\def\hksqrt{\mathpalette\DHLhksqrt}
\def\DHLhksqrt#1#2{\setbox0=\hbox{$#1\oldsqrt{#2\,}$}\dimen0=\ht0
   \advance\dimen0-0.2\ht0
   \setbox2=\hbox{\vrule height\ht0 depth -\dimen0}%
   {\box0\lower0.4pt\box2}}
\renewcommand{\sqrt}{\hksqrt}

\renewcommand{\le}{\leqslant}
\renewcommand{\ge}{\geqslant}
\renewcommand{\bar}{\overline}

\newcommand*\Let[2]{\State #1 $\gets$ #2}

\global\long\def\R{\mathbb{R}}

\global\long\def\N{\mathbb{N}}

\global\long\def\vis{\mathrm{vis}}

\global\long\def\vis{\mathrm{vis}}

\global\long\def\reach{\mathrm{reach}}

\global\long\def\qmin{q_{\mathrm{min}}}

\global\long\def\pmax{p_{\mathrm{max}}}

\global\long\def\pmin{p_{\mathrm{min}}}

\global\long\def\pcand{p_{\mathrm{cand}}}

\global\long\def\stop{\textsc{stop}}

\global\long\def\pstop{p_{\mathrm{stop}}}

\global\long\def\qstop{q_{\mathrm{stop}}}

\global\long\def\ddF{d_{\mathrm{dF}}}

\global\long\def\bigO{\mathcal{O}}

\global\long\def\MaxGreedyStep{\textsc{MaxGreedyStep}}

\global\long\def\MinGreedyStep{\textsc{MinGreedyStep}}

\global\long\def\GreedyStep{\textsc{GreedyStep}}

\global\long\def\FindSigmaExits{\textsc{Find-\ensuremath{\sigma}-Exits}}

\global\long\def\FindPiExits{\textsc{Find-\ensuremath{\pi}-Exits}}

\global\long\def\PiExitsFromPi{\mbox{\textsc{\ensuremath{\pi}-exits-from-\ensuremath{\pi}}}}

\global\long\def\SigmaExitsFromPi{\mbox{\textsc{\ensuremath{\sigma}-exits-from-\ensuremath{\pi}}}}

\newcommand{\Ind}[1]{\mathrm{D}^{#1}}

% --------------------------------------------------------------------------------
% define negative phantom
\catcode`@=11
\def\nphantom{\v@true\h@true\nph@nt}
\def\nvphantom{\v@true\h@false\nph@nt}
\def\nhphantom{\v@false\h@true\nph@nt}
\def\nph@nt{\ifmmode\def\next{\mathpalette\nmathph@nt}%
  \else\let\next\nmakeph@nt\fi\next}
\def\nmakeph@nt#1{\setbox\z@\hbox{#1}\nfinph@nt}
\def\nmathph@nt#1#2{\setbox\z@\hbox{$\m@th#1{#2}$}\nfinph@nt}
\def\nfinph@nt{\setbox\tw@\null
  \ifv@ \ht\tw@\ht\z@ \dp\tw@\dp\z@\fi
  \ifh@ \wd\tw@-\wd\z@\fi \box\tw@}
% --------------------------------------------------------------------------------
% define \timestamp
\newcount\minute \newcount\hour \newcount\hourMins
\def\now{\minute=\time \hour=\time \divide \hour by 60 \hourMins=\hour \multiply\hourMins by 60
  \advance\minute by -\hourMins \zeroPadTwo{\the\hour}:\zeroPadTwo{\the\minute}}

\def\today{\the\year-\zeroPadTwo{\the\month}-\zeroPadTwo{\the\day}}
\def\zeroPadTwo#1{\ifnum #1<10 0\fi #1}
% --------------------------------------------------------------------------------

%------------------------------------------

\sloppy

\title{
Improved approximation for \Fr distance on $c$-packed curves matching conditional lower bounds
}

\author{Karl Bringmann\thanks{Max Planck Institute for Informatics, Campus E1 4, 66123 Saarbr\"ucken, Germany; \texttt{kbringma@mpi-inf.mpg.de}. Karl Bringmann is a recipient of the \emph{Google Europe Fellowship in Randomized Algorithms}, and this research is supported in part by this Google Fellowship.}
\and Marvin K\"unnemann\thanks{Max Planck Institute for Informatics, Campus E1 4, 66123 Saarbr\"ucken, Germany; \texttt{marvin@mpi-inf.mpg.de}. Saarbrücken Graduate School of Computer Science, Germany}
}

\date{}

\pagestyle{plain}
\pagenumbering{arabic}
\setcounter{page}{1}

\begin{document}

\maketitle

\medskip

\begin{abstract}

The \Fr distance is a well-studied and very popular measure of similarity of two curves. The best known algorithms have quadratic time complexity, which has recently been shown to be optimal assuming the Strong Exponential Time Hypothesis (SETH) [Bringmann FOCS'14]. 

To overcome the worst-case quadratic time barrier, restricted classes of curves have been studied that attempt to capture realistic input curves. The most popular such class are $c$-packed curves, for which the \Fr distance has a $(1+\eps)$-approximation in time $\tilde \Oh(c n /\eps)$ [Driemel et al. DCG'12]. In dimension $d \ge 5$ this cannot be improved to $\Oh((cn/\sqrt{\eps})^{1-\delta})$ for any $\delta > 0$ unless SETH fails [Bringmann FOCS'14]. 

In this paper, exploiting properties that prevent stronger lower bounds, we present an improved algorithm with runtime $\tilde \Oh(cn/\sqrt{\eps})$. This is optimal in high dimensions apart from lower order factors unless SETH fails.
Our main new ingredients are as follows: For filling the classical free-space diagram we project short subcurves onto a line, which yields one-dimensional separated curves with roughly the same pairwise distances between vertices. Then we tackle this special case in near-linear time by carefully extending a greedy algorithm for the \Fr distance of one-dimensional separated curves.

\end{abstract}

\newpage

\section{Introduction} \label{sec:intro}

The \Fr distance is a very popular measure of similarity of two given curves and has two classic variants. Roughly speaking, the \emph{continuous \Fr distance} of two curves $\pi,\sigma$ is the minimal length of a leash required to connect a dog to its owner, as they walk without backtracking along $\pi$ and $\sigma$, respectively. In the \emph{discrete \Fr distance} we replace the dog and its owner by two frogs -- in each time step each frog can jump to the next vertex along its curve or stay where it is. 

In a seminal paper in 1991, Alt and Godau introduced the continuous \Fr distance to computational geometry~\cite{AltG95,Godau91}. For polygonal curves $\pi$ and $\sigma$ with $n$ and $m$ vertices\footnote{We always assume that $m \le n$.}, respectively, they presented an $\Oh(nm \log nm)$ algorithm. The discrete \Fr distance was defined by Eiter and Mannila~\cite{EiterM94}, who presented an $\Oh(n m)$ algorithm.

Since then, \Fr distance has become a rich field of research: The literature contains generalizations to surfaces (see, e.g.,~\cite{AltB10}), approximation algorithms for realistic input curves (\cite{AronovHPKW06,AltKW04,DriemelHPW12}), the geodesic and homotopic \Fr distance (see, e.g.,~\cite{ChambersETAL10,Wenk2010geodesic}), and many more variants (see, e.g.,~\cite{BuchinBW09,DriemelHP13,MaheshwariSSZ11,Indyk02}). As a natural measure for curve similarity~\cite{Alt09}, the \Fr distance has found applications in various areas such as signature verification (see, e.g.,~\cite{MunichP99}), map-matching tracking data (see, e.g.,~\cite{BrakatsoulasPSW05}), and moving objects analysis (see, e.g.,~\cite{BuchinBGLL11}).

Apart from log-factor improvements~\cite{AgarwalBAKS13,BuchinBMM14} the quadratic complexity of the classic algorithms for the continuous and discrete \Fr distance are still the state of the art. In fact, the first author recently showed a conditional lower bound: Assuming the Strong Exponential Time Hypothesis (\SETH) there is no algorithm for the (continuous or discrete) \Fr distance in time $\Oh((n m)^{1-\delta})$ for any $\delta > 0$, so apart from lower order terms of the form $n^{o(1)}$ the classic algorithms are optimal~\cite{Bringmann14}.

In attempts to obtain faster algorithms for realistic inputs, various restricted classes of curves have been considered, such as backbone curves~\cite{AronovHPKW06}, $\kappa$-bounded and $\kappa$-straight curves~\cite{AltKW04}, and $\phi$-low density curves~\cite{DriemelHPW12}. The most popular model of realistic inputs are \emph{$c$-packed curves}.  A curve~$\pi$ is $c$-packed if for any point $z \in \R^d$ and any radius $r>0$ the total length of $\pi$ inside the ball $B(z,r)$ is at most $c r$, where $B(z,r)$ is the ball of radius $r$ around~$z$. 
This model has been used for several generalizations of the \Fr distance, such as map matching~\cite{ChenDGNW11}, the mean curve problem~\cite{HarPeledR11}, a variant of the \Fr distance allowing shortcuts~\cite{DriemelHP13}, and \Fr matching queries in trees~\cite{GudmundssonS13}.
Driemel et al.~\cite{DriemelHPW12} introduced $c$-packed curves and presented a $(1+\eps)$-approximation for the continuous \Fr distance in time $\Oh(cn/\eps + cn \log n)$, which works in any $\R^d$, $d \ge 2$.
Assuming \SETH, the following lower bounds have been shown for $c$-packed curves: 
(1) For sufficiently small constant $\eps > 0$ there is no $(1+\eps)$-approximation in time $\Oh((c n)^{1-\delta})$ for any $\delta > 0$~\cite{Bringmann14}. Thus, for constant $\eps$ the algorithm by Driemel et al.\ is optimal apart from lower order terms of the form $n^{o(1)}$. (2) In any dimension $d \ge 5$ and for varying $\eps>0$ there is no $(1+\eps)$-approximation in time $\Oh((c n /\sqrt{\eps})^{1-\delta})$ for any $\delta > 0$~\cite{Bringmann14}. Note that this does not match the runtime of the algorithm by Driemel et al.\ for any $\eps = n^{-b}$ and constant $b >0$. 

In this paper we improve upon the algorithm by Driemel et al.~\cite{DriemelHPW12} by presenting an algorithm that matches the conditional lower bound of~\cite{Bringmann14}.

\begin{theorem}
  For any $0<\eps\le 1$ we can compute a $(1+\eps)$-approximation on $c$-packed curves for the continuous and discrete \Fr distance in time $\tilde \Oh(c n /\sqrt{\eps})$.
\end{theorem}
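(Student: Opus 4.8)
The plan is the familiar two-phase recipe for the \Fr distance on $c$-packed curves --- reduce to a gap decision, simplify the curves, then compute reachability in the free-space diagram --- with a faster procedure for the last, bottleneck step. By a constant-factor approximation together with a geometric search over the target value, exactly as in Driemel et al., it suffices to solve, for a given $\delta>0$, the gap problem: answer that $\dfr(\pi,\sigma)\le\delta$ or that $\dfr(\pi,\sigma)>(1+\eps)\delta$ (the discrete distance $\ddF$ is treated identically). Fix such a $\delta$, set $\mu:=\Theta(\eps\delta)$, and replace $\pi,\sigma$ by their $\mu$-simplifications: this moves the \Fr distance by only $\Oh(\mu)=\Oh(\eps\delta)$, which the gap absorbs, and it makes both curves $\mu$-separated. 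Henceforth the only property of $c$-packed curves we need is the one isolated by Driemel et al.: after $\mu$-simplification every ball of radius $r=\Omega(\delta)$ meets $\Oh(cr/\mu)=\Oh(c/\eps)$ vertices of each curve (long edges, which no ball argument controls, are handled separately, as below).

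\emph{Blocking the free-space diagram.} Cut each simplified curve into subcurves of diameter $\Oh(\ell)$, where $\ell:=\Theta(\sqrt\eps\,\delta)$, keeping each long edge as its own piece. Call an ordered pair of subcurves $(\pi',\sigma')$ \emph{relevant} if they come within distance $\Oh(\delta)$. A fixed short subcurve $\pi'$ sits in a ball of radius $\Oh(\delta)$ together with all of its relevant partners, so by $c$-packedness it has only $\Oh(c/\sqrt\eps)$ of them, while each subcurve carries at most $\Oh(\ell/\mu)=\Oh(1/\sqrt\eps)$ vertices; since the subcurves partition the curves, the total size of all relevant blocks is $\tilde\Oh(cn/\sqrt\eps)$. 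Hence, if reachability inside a single relevant block can be computed in time near-linear in the block's size, the running time telescopes to $\tilde\Oh(cn/\sqrt\eps)$, and the per-block answers are combined by the standard propagation of reachable boundary intervals through the free-space dynamic program. It remains to solve one relevant block quickly.

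\emph{Projecting a block onto a line.} Both pieces $\pi',\sigma'$ of a relevant block lie in one ball of radius $\Oh(\delta)$ and each has diameter $\Oh(\ell)$. Choose representatives $p_0\in\pi'$, $q_0\in\sigma'$, let $g$ be the unit vector along $p_0-q_0$, and project $\pi'$ and $\sigma'$ orthogonally onto the line spanned by $g$. Orthogonal projection is $1$-Lipschitz, so the one-dimensional free space of each cell of the block at threshold $\delta$ contains the true one; in particular an unreachable one-dimensional block certifies an unreachable true block, i.e.\ $\dfr(\pi,\sigma)>\delta$. For the converse, fix a cell $(e,e')$ of the block: the vectors $e(s)-e'(t)$ form a set of diameter $\Oh(\ell)$, and by the choice of $g$ every such vector that is one-dimensionally $\delta$-close has $g$-coordinate in $[-\delta,\delta]$ and transverse component of norm $\Oh(\ell)$, hence true norm at most $\sqrt{\delta^2+\Oh(\ell^2)}=\delta\sqrt{1+\Oh(\eps)}\le(1+\Oh(\eps))\delta$ --- this is exactly where $\ell=\Theta(\sqrt\eps\,\delta)$ enters. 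So the one-dimensional free space at threshold $\delta$ lies inside the true free space at threshold $(1+\Oh(\eps))\delta$, and a reachable one-dimensional instance yields $\dfr(\pi,\sigma)\le(1+\eps)\delta$ after rescaling $\eps$ by a constant. The gap decision has thus been reduced to gap decisions for pairs of ``separated'' one-dimensional curves of total size $\tilde\Oh(cn/\sqrt\eps)$.

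\emph{The one-dimensional solver, and the main obstacle.} Each block is now a $\dfr$- (or $\ddF$-) reachability computation for two $\mu$-separated curves in $\R$; there the free space of a cell is an interval and reachable boundary portions are unions of few intervals, so they can be pushed forward by a greedy left-to-right sweep. Separation rules out the oscillation that forces quadratic time in general, and a carefully extended version of the known greedy algorithm for one-dimensional separated curves runs in time near-linear in the block size and emits exactly the reachable part of the block's exit boundary --- precisely what the gluing step consumes; the discrete case is identical with cell intervals replaced by sets of compatible grid positions. I expect the bulk of the work, and the main obstacle, to lie here: establishing correctness of the greedy algorithm when it has to be robust to the $(1+\eps)$ slack and to compose block-by-block into the global reachability, and doing the bookkeeping for long edges so that they too remain within the $\tilde\Oh(cn/\sqrt\eps)$ budget (a refinement of the Driemel et al.\ counting). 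The other ingredients --- the search over $\delta$, the simplification estimates, and the one-line projection bound above --- are routine.
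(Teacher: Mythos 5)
Your high-level plan coincides with the paper's: cut the curves into pieces of diameter $\Theta(\sqrt{\eps}\,\delta)$ plus long edges, bound the total size of all relevant piece-pairs by $\tilde\Oh(cn/\sqrt{\eps})$ via the packing/charging argument, and handle each relevant pair by projecting both pieces onto the line through two representative points, where the Pythagorean bound $\sqrt{\delta^2+\Oh(\eps)\delta^2}\le(1+\Oh(\eps))\delta$ turns the $1$-Lipschitz projection into a valid gap reduction. (The paper skips your $\mu$-simplification and works with the original vertices, which changes nothing essential. One detail you elide: to get genuinely \emph{separated} one-dimensional curves you must first dispose of the case where the two pieces are so close that every pair of points is within $\delta$ --- then the block is entirely free --- and otherwise observe that the two pieces lie in disjoint balls centered on the projection line.)

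The genuine gap is the step you yourself flag as the main obstacle: the near-linear-time solver for a single block. The known greedy algorithm for one-dimensional separated curves only \emph{decides} whether $\ddfr(\pi',\sigma')\le\delta$, i.e., reachability from the single corner $(1,1)$ to $(n,m)$. Inside the dynamic program you need the full free-space \emph{region} problem: given arbitrary entry intervals on the bottom and left boundary of the block, output (approximately) all exit intervals on the top and right boundary. Asserting that ``a carefully extended version of the known greedy algorithm'' accomplishes this is not a proof; that extension is the bulk of the paper and requires genuinely new machinery: a reduction from the continuous to the discrete Fr\'echet distance that is special to one-dimensional separated curves, a rounding and parallel-binary-search scheme that converts interval endpoints into $\Oh(\log 1/\eps)$ discrete queries on vertices that are multiples of $\tfrac13\eps\delta$, composition lemmas for feasible traversals (union, crossing, and initial-fragment arguments), and algorithms that process a whole set of entry points without iterating over them individually, by identifying for each entry $p$ the quantities $u(p)$, $Q(p)$, $P(p)$ and discarding dominated entries. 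Without this component the argument does not close; with it, your outline matches the paper's proof.
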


Specifically, our runtime is $\Oh(\tfrac{cn}{\sqrt{\eps}} \log(1/\eps) + cn \log n)$ for the discrete variant and $\Oh(\tfrac{cn}{\sqrt{\eps}} \log^2(1/\eps) + cn \log n)$ for the continuous variant. 

We want to highlight that in general dimensions (specifically, $d \ge 5$) this runtime is \emph{optimal} (apart from lower order terms of the form $n^{o(1)}$ unless \SETH\ fails~\cite{Bringmann14}). Moreover, we obtained our new algorithm by investigating why the conditional lower bound~\cite{Bringmann14} cannot be improved and exploiting the discovered properties. Thus, the above theorem is the outcome of a synergetic effect of algorithms and lower bounds\footnote{This yields one more reason why conditional lower bounds such as~\cite{Bringmann14} should be studied, as they can show tractable cases and suggest properties that make these cases tractable.}. 

We remark that the same algorithm also yields improved runtime guarantees for other models of realistic input curves, like $\kappa$-bounded and $\kappa$-straight curves, where we are also able to essentially replace $\eps$ by $\sqrt{\eps}$ in the runtime bound. In contrast to $c$-packed curves, it is not clear how far these bounds are from being optimal. See \secref{kappa} for details.

\paragraph{Outline}
We give an improved algorithm that approximately decides whether the \Fr distance of two given curves $\pi,\sigma$ is at most $\delta$. Using a construction of~\cite{DriemelHP13} to search over possible values of~$\delta$, this yields an improved approximation algorithm. We partition our curves into subcurves, each of which is either a \emph{long segment}, i.e., a single segment of length at least $\Lambda = \Theta(\sqrt{\eps} \delta)$, or a \emph{piece}, i.e., a subcurve staying in the ball of radius $\Lambda$ around its initial vertex. Now we run the usual algorithm that explores the reachable free-space (see \secref{preliminaries} for definitions), however, we treat regions spanned by a piece $\pi'$ of $\pi$ and a piece $\sigma'$ of $\sigma$ in a special way. Typically, if $\pi', \sigma'$ consist of $n',m'$ segments then their free-space would be resolved in time $\Oh(n' m')$. Our overall speedup comes from reducing this runtime to $\tilde \Oh(n' + m')$, which is our first main contribution. To this end, we consider the line through the initial vertices of the pieces $\pi',\sigma'$, and project $\pi',\sigma'$ onto this line to obtain curves $\hat \pi, \hat \sigma$. Since $\pi',\sigma'$ are \emph{pieces}, i.e., they stay within distance $\Lambda = \Theta(\sqrt{\eps} \delta)$ of their initial vertices, this projection does not change distances from $\pi$ to $\sigma$ significantly (it follows from the Pythagorean theorem that any distance of approximately $\delta$ is changed, by the projection, by less than $\eps \delta$). Thus, we can replace $\pi',\sigma'$ by $\hat\pi,\hat\sigma$ without introducing too much error. Note that $\hat\pi, \hat\sigma$ are \emph{one-dimensional} curves; without loss of generality we can assume that they lie on $\R$. Moreover, we show how to ensure that $\hat\pi, \hat\sigma$ are \emph{separated}, i.e., all vertices of $\hat\pi$ lie above 0 and all vertices of $\hat\sigma$ lie below 0. Hence, we reduced our problem to resolving the free-space region of one-dimensional separated curves. 

It is known\footnote{We thank Wolfgang Mulzer for pointing us to this result by Matias
Korman and Sergio Cabello (personal communication). To the best of our knowledge this result is not published.
} that the \Fr distance of one-dimensional separated curves can be computed in near-linear time, essentially since we can walk along $\pi$ and $\sigma$ with \emph{greedy steps} to either find a feasible traversal or bottleneck subcurves. However, we face the additional difficulty that we have to resolve the \emph{free-space region} of one-dimensional separated curves, i.e., given entry points on $\hat \pi$ and $\hat \sigma$, compute all exits on $\hat \pi$ and $\hat \sigma$. Our second main contribution is that we present an extension of the known result to handle this much more complex problem.

\paragraph{Organization} We start with basic definitions and techniques borrowed from~\cite{DriemelHP13} in \secref{preliminaries}. In \secref{decider} we present our approximate decision procedure which reduces the problem to one-dimensional separated curves. We solve the latter in \secref{onedim}. In the whole paper, we focus on the continuous \Fr distance. It is straightforward to obtain a similar algorithm for the discrete variant, in fact, then \secref{reductioncontdisc} becomes obsolete, which is why we save a factor of $\log 1/\eps$ in the running time.

\section{Preliminaries}
\label{sec:preliminaries}

For $z \in R^d$, $r>0$ we let $B(z,r)$ be the ball of radius $r$ around $z$. 
For $i,j \in \N$, $i \le j$, we let $[i..j] := \{i,i+1,\ldots,j\}$, which is not to be confused with the real interval $[i,j] = \{x \in \R \mid i \le x \le j\}$. Throughout the paper we fix the dimension $d \ge 2$. A (polygonal) curve~$\pi$ is defined by its vertices $(\pi_1,\ldots,\pi_n)$ with $\pi_p \in \R^d$, $p \in [1..n]$. We let $|\pi| = n$ be the number of vertices of $\pi$ and $\|\pi\|$ be its total length $\sum_{i=1}^{n-1} \|p_i - p_{i+1}\|$. 
We write $\pi_{p..b}$ for the subcurve $(\pi_p,\pi_{p+1},\ldots,\pi_b)$. 
Similarly, for an interval $I = [p..b]$ we write $\pi_I = \pi_{p..b}$.
We can also view $\pi$ as a continuous function $\pi\colon [1,n] \to \R^d$ with $\pi_{p+\lambda} = (1-\lambda) \pi_p + \lambda \pi_{p+1}$ for $p \in [1..n-1]$ and $\lambda \in [0,1]$.
For the second curve $\sigma = (\sigma_1,\ldots,\sigma_m)$ we will use indices of the form $\sigma_{q..d}$ for the reader's convenience.

\paragraph{Variants of the \Fr distance}
Let $\Phi_n$ be the set of all continuous and non-decreasing functions $\phi$ from $[0,1]$ onto $[1,n]$. The \emph{continuous \Fr distance} between two curves $\pi,\sigma$ with $n$ and $m$ vertices, respectively, is defined as
$$ \dfr(\pi,\sigma) := \inf_{\substack{\phi_1 \in \Phi_n \\\phi_2 \in \Phi_m}} \max_{t \in [0,1]} \|\pi_{\phi_1(t)} - \sigma_{\phi_2(t)}\|, $$
where $\|.\|$ denotes the Euclidean distance. We call $\phi := (\phi_1,\phi_2)$ a (continuous) \emph{traversal} of $(\pi,\sigma)$, and say that it has \emph{width} $\max_{t \in [0,1]} \|\pi_{\phi_1(t)} - \sigma_{\phi_2(t)}\|$.

In the discrete case, we let $\Delta_n$ be the set of all non-decreasing functions $\phi$ from $[0,1]$ onto~$[1..n]$. We obtain the \emph{discrete \Fr distance} $\ddfr(\pi,\sigma)$ by replacing $\Phi_n$ and $\Phi_m$ by $\Delta_n$ and $\Delta_m$.
We obtain an analogous notion of a (discrete) \emph{traversal} and its \emph{width}. Note that any $\phi \in \Delta_n$ is a staircase function attaining all values in $[1..n]$. Hence, $(\phi_1(t),\phi_2(t))$ changes only at finitely many points in time $t$. At any such \emph{time step}, we jump to the next vertex in $\pi$ or $\sigma$ or both.

\paragraph{Free-space diagram} The discrete \emph{free-space} of curves $\pi,\sigma$ is defined as $\FS^d_{\le \delta}(\pi,\sigma) := \{(p,q) \in [1..n]\times[1..m] \mid \| \pi_p - \sigma_q\| \le \delta\}$. Note that any discrete traversal of $\pi,\sigma$ of width at most $\delta$ corresponds to a monotone sequence of points in the free-space where at each point in time we increase $p$ or $q$ or both. Because of this property, the free-space is a standard concept used in many algorithms for the \Fr distance. 

The continuous free-space is defined as $\FS_{\le \delta}(\pi,\sigma) := \{(p,q) \in [1,n]\times[1,m] \mid \| \pi_p - \sigma_q\| \le \delta\}$. Again, a monotone path from $(1,1)$ to $(n,m)$ in $\FS_{\le \delta}(\pi,\sigma)$  corresponds to a traversal of width at most $\delta$. 
It is well-known~\cite{AltG95,Godau91} that each \emph{free-space cell} $C_{i,j} := \{(p,q) \in [i,i+1]\times[j,j+1] \mid \| \pi_p - \sigma_q\| \le \delta\}$ (for $i \in [1..n-1], j \in [1..m-1]$) is convex, specifically it is the intersection of an ellipsoid with $[i,i+1]\times[j,j+1]$. In particular, the intersection of the free-space with any interval $[i,i+1] \times \{j\}$ (or $\{i\}\times[j,j+1]$) is an interval $I_{i,j}^h$ (or $I_{i,j}^v$), and for any such interval the subset that is reachable by a monotone path from $(1,1)$ is an interval $R_{i,j}^h$ (or $R_{i,j}^v$). Moreover, in constant time one can solve the following \emph{free-space cell problem}: Given intervals $R^h_{i,j} \subseteq [i,i+1]\times\{j\}, R^v_{i,j} \subseteq \{i\}\times[j,j+1]$, determine the intervals $R^h_{i,j+1} \subseteq [i,i+1]\times\{j+1\}, R^v_{i+1,j} \subseteq \{i+1\}\times[j,j+1]$ consisting of all points that are reachable from a point in $R^h_{i,j} \cup R^v_{i,j}$ by a monotone path within the free-space cell~$C_{i,j}$.
Solving this problem for all cells from lower left to upper right we determine whether $(n,m)$ is reachable from $(1,1)$ by a monotone path and thus decide whether the \Fr distance is at most~$\delta$.

\paragraph{From approximate deciders to approximation algorithms}

An \emph{approximate decider} is an algorithm that, given curves $\pi,\sigma$ and $\delta>0,0<\eps\le 1$, returns one of the outputs (1) $\dfr(\pi,\sigma) > \delta$ or (2) $\dfr(\pi,\sigma) \le (1+\eps) \delta$. In any case, the returned answer has to be correct.
In particular, if $\delta < \dfr(\pi,\sigma) \le (1+\eps)\delta$ the algorithm may return either of the two outputs. 

Let $D(\pi,\sigma,\delta,\eps)$ be the runtime of an approximate decider and set $D(\pi,\sigma,\eps) := \max_{\delta > 0} D(\pi,\sigma,\delta,\eps)$. We assume polynomial dependence on $\eps$, in particular, that there are constants $0 < c_1 < c_2 < 1$ such that for any $1<\eps \le 1$ we have $c_1 D(\pi,\sigma,\eps/2) \le D(\pi,\sigma,\eps) \le c_2 D(\pi,\sigma,\eps/2)$.
Driemel et al.~\cite{DriemelHPW12} gave a construction of a $(1+\eps)$-approximation for the \Fr distance given an approximate decider. (This follows from~\cite[Theorem 3.15]{DriemelHPW12} after replacing their concrete approximate decider with runtime ``$\Oh(N(\eps,\pi,\sigma))$'' by any approximate decider with runtime $D(\pi,\sigma,\eps)$.)

\begin{lemma} \label{lem:approxfromdecider}
  Given an approximate decider with runtime $D(\pi,\sigma,\eps)$ we can construct a $(1+\eps)$-approximation for the \Fr distance with runtime 
  $\Oh\big(D(\pi,\sigma,\eps) + D(\pi,\sigma,1) \log n\big)$.
\end{lemma}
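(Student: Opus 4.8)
The plan is to invoke the construction of Driemel et al.~\cite[Theorem 3.15]{DriemelHPW12} essentially verbatim, observing that the only place their argument uses properties of their specific decider is through a small number of black-box guarantees, all of which are satisfied by an arbitrary approximate decider with runtime $D(\pi,\sigma,\eps)$. First I would recall the high-level structure of their approximation scheme: one computes a polynomial-size set of ``candidate'' values $v_1 < v_2 < \dots < v_k$ such that the true \Fr distance $\dfr(\pi,\sigma)$ is known to lie in one of the intervals $[v_i, v_{i+1}]$ with a bounded \emph{relative} gap, i.e.\ $v_{i+1} \le (1+\Theta(1)) v_i$; this candidate set is produced in time $\Oh(D(\pi,\sigma,1)\,\poly\log n)$ — actually $\Oh(n \log n)$ plus $\Oh(\log n)$ calls to a constant-factor decider — using the well-separated pair decomposition / simplification machinery of~\cite{DriemelHPW12,DriemelHP13}. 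Then one runs a binary search over these candidates using the approximate decider: each call with parameter $\eps$ on a candidate value either certifies $\dfr > v$ or $\dfr \le (1+\eps) v$, and because consecutive candidates are within a constant factor, $\Oh(\log n)$ constant-$\eps$ decider calls suffice to isolate the correct interval, after which a single run of the decider with the actual target accuracy $\eps$ pins the answer down to a $(1+\eps)$-factor.

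The key steps, in order, are: (i) quote the candidate-generation step from~\cite{DriemelHPW12}, noting its cost is $\Oh(cn\log n)$-type preprocessing plus $\Oh(\log n)$ invocations of an $\Oh(1)$-approximate decider, contributing the $D(\pi,\sigma,1)\log n$ term; (ii) run the binary search over the $\poly(n)$ candidates with a constant-factor decider, again costing $\Oh(D(\pi,\sigma,1)\log n)$; (iii) perform one final decider call at accuracy $\eps$ on the isolated candidate, costing $D(\pi,\sigma,\eps)$; (iv) sum the three contributions to get $\Oh\big(D(\pi,\sigma,\eps) + D(\pi,\sigma,1)\log n\big)$. The polynomial-dependence assumption on $D$ (the constants $c_1, c_2$) is what lets us freely substitute $\eps$ by $\eps/2$ or by a constant inside the recursion without changing the asymptotics, exactly as in the original proof; I would point out explicitly that this is the one structural hypothesis replacing the concrete ``$\Oh(N(\eps,\pi,\sigma))$'' bound of~\cite{DriemelHPW12}.

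The only real obstacle is bookkeeping: one must check that every place where~\cite[Theorem 3.15]{DriemelHPW12} refers to ``$\Oh(N(\eps,\pi,\sigma))$'' is used only as an upper bound on the cost of one decider call, and never exploits, say, the exact form of $N$ or any monotonicity beyond what follows from the $c_1, c_2$ assumption. I expect this to be routine — the decider is genuinely used as a black box in their search — so the cleanest write-up is simply: ``This follows from~\cite[Theorem 3.15]{DriemelHPW12} by replacing their concrete approximate decider of runtime $\Oh(N(\eps,\pi,\sigma))$ with an arbitrary approximate decider of runtime $D(\pi,\sigma,\eps)$; the assumed polynomial dependence of $D$ on $\eps$ supplies the only property of $N$ that their proof uses.'' If a self-contained argument is desired instead, one spells out steps (i)–(iv) above with the constant-factor gap between consecutive candidates made explicit, which is immediate from the construction in~\cite{DriemelHP13}.
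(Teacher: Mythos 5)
Your proposal matches the paper's treatment exactly: the paper proves this lemma solely by citing \cite[Theorem 3.15]{DriemelHPW12} and noting that their concrete decider of runtime $\Oh(N(\eps,\pi,\sigma))$ can be replaced by an arbitrary approximate decider of runtime $D(\pi,\sigma,\eps)$, with the polynomial-dependence assumption playing the role you identify. Your optional spelled-out version (candidate generation plus $\Oh(\log n)$ constant-factor decider calls, then one call at accuracy $\eps$) is a faithful account of where the two terms in the runtime come from, so the proposal is correct and takes the same route.
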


\section{The approximate decider}
\label{sec:decider}

\begin{figure}
\centering
\subfloat[This figure illustrates our partitioning of a curve into \emph{pieces} (contained in dashed circles) and \emph{long segments} (bold edges).]{
  \includegraphics[width=0.45\textwidth]{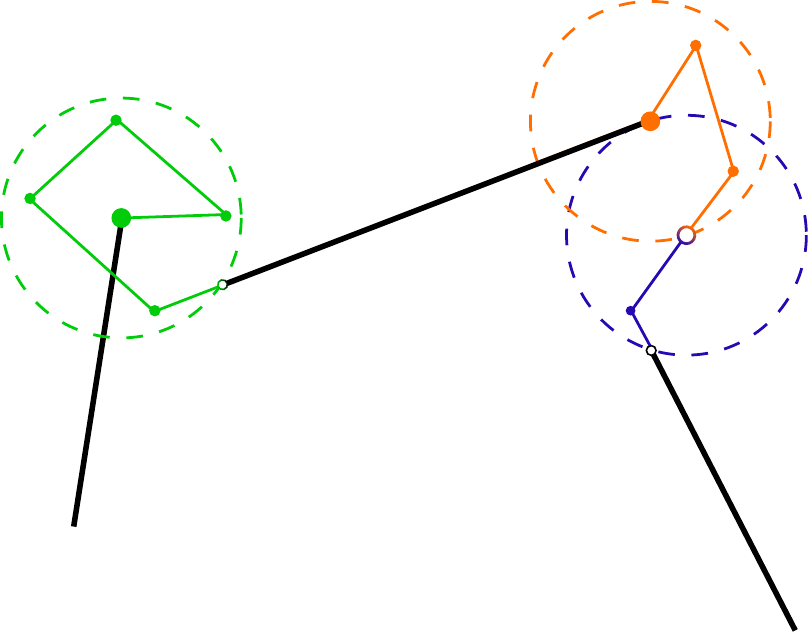}
  \label{fig:pieces}
}
\quad
\subfloat[The free-space problem for pieces $\pi'$ and $\sigma'$ in the free-space diagram of $\pi$ and $\sigma$. Given entry intervals on the lower and left boundary of the region, compute exit intervals on the upper and right boundary.]{
  \includegraphics[width=0.45\textwidth]{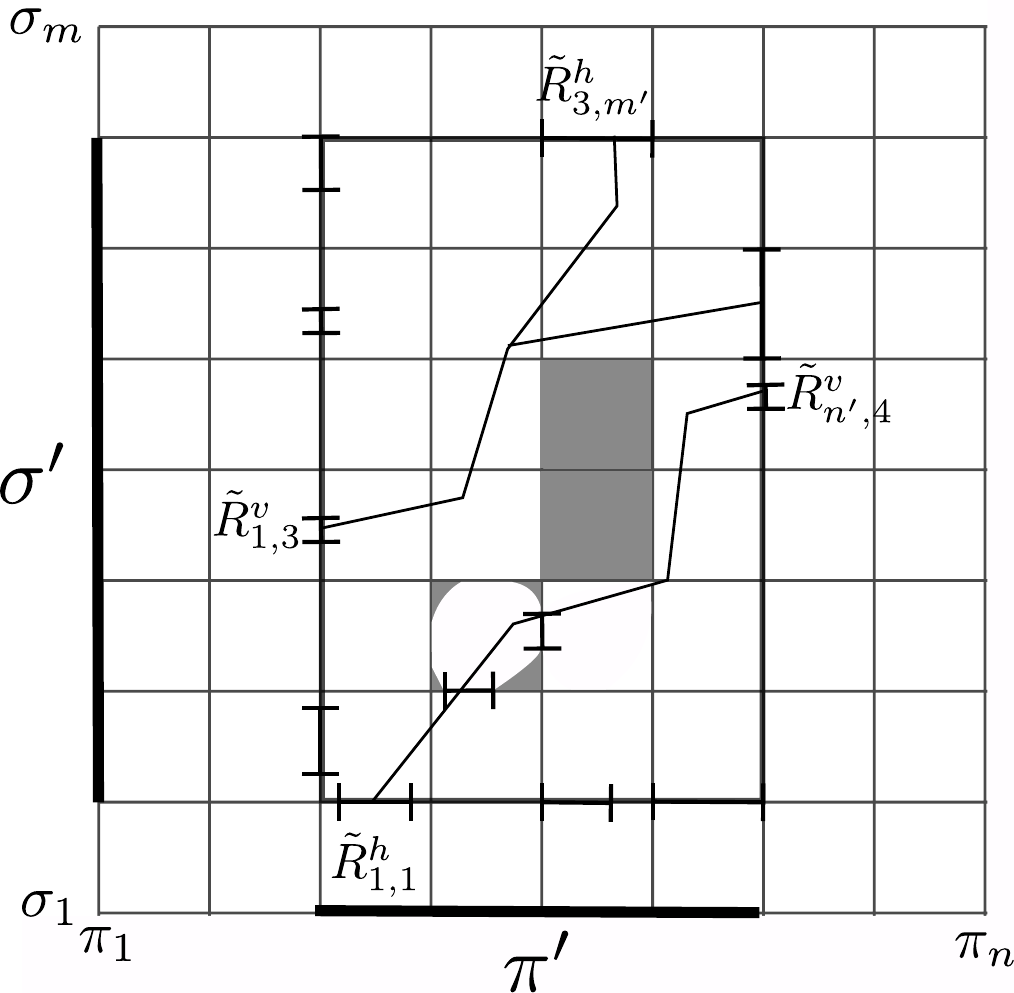}
  \label{fig:freespace}
}
\caption{Definition and treatment of pieces.}
\end{figure}

Let $\pi,\sigma$ be curves for which we want to (approximately) decide whether $\dfr(\pi,\sigma) > \delta$ or $\dfr(\pi,\sigma) \le (1+\eps)\delta$. 
We modify the curve $\pi$ by introducing new vertices as follows. Start with the initial vertex $\pi_1$ as current vertex. If the segment following the current vertex has length at least $\Lambda = \Lambda_{\eps,\delta} := \min\{\tfrac 12 \sqrt{\eps}, \tfrac 14\}\cdot \delta$ then mark this segment as \emph{long} and set the next vertex as the current vertex. Otherwise follow $\pi$ from the current vertex $\pi_x$ to the first point $\pi_y$ such that $\|\pi_{x} - \pi_y\| = \Lambda$ (or until we reach the last vertex of $\pi$). If $\pi_y$ is not a vertex, but lies on some segment of~$\pi$, then introduce a new vertex at~$\pi_y$. Mark $\pi_{x..y}$ as a \emph{piece} of $\pi$ and set $\pi_y$ as current vertex. Repeat until $\pi$ is completely traversed. Since this procedure introduces at most $|\pi|$ new vertices and does not change the shape of $\pi$, with slight abuse of notation we call the resulting curve again $\pi$ and set $n := |\pi|$. This partitions $\pi$ into subcurves $\pi^1,\ldots,\pi^k$, with $\pi^s = \pi_{p_s..b_s}$, where every part $\pi^s$ is either (see also \figref{pieces})
\begin{itemize}
  \item a \emph{long segment}: $b_s = p_s+1$ and $\|\pi_{p_s}-\pi_{b_s}\| \ge \Lambda$, or 
  \item a \emph{piece}: $\|\pi_{p_s} - \pi_{b_s}\| = \Lambda$ and $\|\pi_{p_s} - \pi_{x}\| < \Lambda$ for all $x \in [p_s,b_s)$.
\end{itemize}
Note that the last piece actually might have distance $\|\pi_{p_s} - \pi_{b_s}\|$ less than $\Lambda$, however, for simplicity we assume equality for all pieces (in fact, a special handling of the last piece would only be necessary in \lemref{cpackedcomplex}).
Similarly, we introduce new vertices on $\sigma$ and partition it into subcurves $\sigma^1,\ldots,\sigma^\ell$, with $\sigma^t = \sigma_{q_t..d_t}$, each of which is a long segment or a piece. Let $m := |\sigma|$.

We do not want to resolve each free-space cell on its own, as in the standard decision algorithm for the \Fr distance. Instead, for any pair of pieces we want to consider the free-space region spanned by the two pieces at once, see \figref{freespace}. This is made formal by the following subproblem.

\begin{problem}[Free-space region problem]
\label{prob:freespace}
Given $\delta>0$, $0<\eps\le 1$, curves $\pi, \sigma$ with $n$ and $m$ vertices, and \emph{entry intervals} $\tilde R^h_{i,1} \subseteq [i,i+1] \times \{1\}$ for $i \in [1..n)$ and $\tilde R^v_{1,j} \subseteq \{1\} \times [j,j+1]$ for $j \in [1..m)$, compute \emph{exit intervals} $\tilde R^h_{i,m} \subseteq [i,i+1]\times\{m\}$ for $i \in [1..n)$ and $\tilde R^v_{n,j} \subseteq \{n\} \times [j,j+1]$ for $j \in [1..m)$ such that (1) the exit intervals contain all points reachable from the entry intervals by a monotone path in $\FS_{\le \delta}(\pi,\sigma)$ and (2) all points in the exit intervals are reachable from the entry intervals by a monotone path in $\FS_{\le (1+\eps)\delta}(\pi,\sigma)$. 
\end{problem}

To stress that we work with approximations, we denote reachable intervals by $\tilde R$ instead of $R$ in the remainder of the paper.

The standard solution to the free-space region problem would split it up into $n \cdot m$ free-space cells and resolve each cell in constant time, resulting in an $\Oh(n \cdot m)$ algorithm (this solves the problem even exactly, i.e., for $\eps=0$). Restricted to pieces, we will show the following improvement, which will yield the desired overall speedup of a factor of $\sqrt{\eps}$.
\begin{lemma} \label{lem:fspp}
  If $\pi$ and $\sigma$ are pieces then the free-space region problem can be solved in time $\Oh((n + m) \log^2 1/\eps)$.
\end{lemma}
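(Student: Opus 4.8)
The plan is to reduce the free-space region problem for two pieces $\pi,\sigma$ to the free-space region problem for \emph{one-dimensional separated} curves, and then invoke the near-linear-time machinery announced in the outline (to be developed in \secref{onedim}). The first step is the projection: let $\pi_1$ and $\sigma_1$ be the initial vertices of the two pieces, let $\ell$ be the line through $\pi_1$ and $\sigma_1$ (if $\pi_1=\sigma_1$, pick any line through that point), and let $\hat\pi,\hat\sigma$ be the orthogonal projections of $\pi,\sigma$ onto $\ell$, viewed as curves in $\R$. Since $\pi$ and $\sigma$ are pieces, every vertex $\pi_p$ satisfies $\|\pi_p-\pi_1\|\le\Lambda$ and every $\sigma_q$ satisfies $\|\sigma_q-\sigma_1\|\le\Lambda$ with $\Lambda=\Lambda_{\eps,\delta}\le\tfrac12\sqrt{\eps}\,\delta$. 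I would then prove the key distortion estimate: for all $p,q$, if $\|\pi_p-\sigma_q\|$ is ``around $\delta$'' (say within a constant factor), then $\bigl|\,\|\pi_p-\sigma_q\|-\|\hat\pi_p-\hat\sigma_q\|\,\bigr|\le\eps\delta$ — more precisely, the component of $\pi_p-\sigma_q$ orthogonal to $\ell$ has length at most $2\Lambda\le\sqrt{\eps}\,\delta$, so by the Pythagorean theorem the squared distance changes by at most $\eps\delta^2$, hence a distance near $\delta$ changes by at most roughly $\tfrac12\eps\delta$. The precise bookkeeping of this error (and how it interacts with the $(1+\eps)$ slack allowed in Problem \ref{prob:freespace}, possibly after rescaling $\eps$ by a constant) is routine but needs to be done carefully; I expect this to be a short computation.

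The second step is to make the projected curves \emph{separated}, i.e., to arrange (after a translation of coordinates on $\ell\cong\R$) that all vertices of $\hat\pi$ are $\ge 0$ and all vertices of $\hat\sigma$ are $\le 0$. The idea: by construction both projected pieces live in an interval of length $2\Lambda$ on $\ell$, namely $\hat\pi$ lies within $[\hat\pi_1-\Lambda,\hat\pi_1+\Lambda]$ and $\hat\sigma$ within $[\hat\sigma_1-\Lambda,\hat\sigma_1+\Lambda]$. If $\|\pi_1-\sigma_1\|$ (hence $|\hat\pi_1-\hat\sigma_1|$, since $\pi_1,\sigma_1\in\ell$) is large compared to $\Lambda$ — say at least $4\Lambda$ — then the two intervals are disjoint and we can simply place the origin strictly between them; this costs nothing. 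If instead $\|\pi_1-\sigma_1\|<4\Lambda\le 2\sqrt{\eps}\,\delta$, then the entire free-space region has all pairwise distances $\|\pi_p-\sigma_q\|\le\|\pi_p-\pi_1\|+\|\pi_1-\sigma_1\|+\|\sigma_1-\sigma_q\|\le \Lambda+4\Lambda+\Lambda=6\Lambda\le 3\sqrt\eps\,\delta\le\delta$, so the \emph{entire} region is free in $\FS_{\le\delta}$ and the answer is trivial: every exit interval is the full boundary segment provided some entry interval is nonempty (and reachability within the region is immediate since everything is reachable from everything by monotone paths when the whole rectangle is free). So in the nontrivial case we genuinely obtain one-dimensional separated curves; we may also need to shift $\hat\pi$ up and $\hat\sigma$ down by a tiny amount $O(\eps\delta)$ to enforce strict separation, absorbing this into the error budget as above.

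The third step is to hand the separated one-dimensional instance, together with the transported entry intervals, to the solver developed in \secref{onedim}, whose guarantee is exactly: given entry intervals on the two ``low'' boundaries of the free-space region of one-dimensional separated curves with $n'$ and $m'$ vertices, compute exit intervals on the two ``high'' boundaries satisfying the reachability conditions (1)–(2) of Problem \ref{prob:freespace}, in time $\Oh((n'+m')\log^2(1/\eps))$. Composing this with the $O(1)$-time-per-vertex projection and translation, and noting $n'=n$, $m'=m$, yields the claimed bound $\Oh((n+m)\log^2(1/\eps))$. The combined error is then: projection error $\le\eps\delta$, separation-shift error $O(\eps\delta)$, and the $\eps\delta$ slack already built into the one-dimensional solver — all of which sum to $O(\eps\delta)$, so after replacing $\eps$ by $c\eps$ for a suitable constant $c$ at the outset we land inside $\FS_{\le(1+\eps)\delta}$ as required. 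The main obstacle is \emph{not} in this lemma at all but is deferred to \secref{onedim}: resolving the free-space \emph{region} (all entries to all exits) of one-dimensional separated curves in near-linear time, which requires substantially extending the greedy one-dimensional \Fr-distance algorithm; within the present lemma the only genuinely delicate point is verifying that the two-part case analysis (pieces far apart vs.\ close together) together with the projection distortion bound cleanly covers all cases and keeps the total error within the allotted $(1+\eps)$ factor.
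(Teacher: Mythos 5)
Your proposal follows essentially the same route as the paper: project both pieces onto the line through their initial vertices, bound the distortion via the Pythagorean theorem ($(2\Lambda)^2 \le \eps\delta^2$, absorbed by halving $\eps$ in the recursive call), and hand the resulting one-dimensional separated instance to the solver of \secref{onedim}. The projection, the distortion bookkeeping, and the deferral of the real work are all correct and match the paper.

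However, your case analysis for ensuring separation has a concrete error in the constants. In the ``pieces close together'' branch you assume $\|\pi_1-\sigma_1\|<4\Lambda$ and conclude that all pairwise distances are at most $6\Lambda\le 3\sqrt{\eps}\,\delta\le\delta$; the last inequality requires $\eps\le 1/9$ and fails otherwise (e.g.\ for $\eps=1/4$ one has $\Lambda=\delta/4$ and $6\Lambda=1.5\delta>(1+\eps)\delta$), so declaring the whole region free would violate condition (2) of \probref{prob:freespace}{}. Meanwhile your ``far apart'' branch only covers $\|\pi_1-\sigma_1\|\ge4\Lambda$, so for $\eps>1/9$ there are instances handled by neither branch correctly. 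The fix is to split at $\delta-2\Lambda$ instead of $4\Lambda$: if $\|\pi_1-\sigma_1\|\le\delta-2\Lambda$ then every pairwise distance is at most $\Lambda+(\delta-2\Lambda)+\Lambda=\delta$ and the region is genuinely entirely free (note that even then the trivial answer still requires respecting monotonicity from the entry points, a detail you gloss over); otherwise $\|\pi_1-\sigma_1\|>\delta-2\Lambda\ge\delta/2\ge2\Lambda$, so the two balls of radius $\Lambda$ are disjoint and the projections are separated. The paper unifies the two cases more elegantly by translating $\pi$ so that $\|\pi_1-\sigma_1\|=\delta-2\Lambda$ exactly whenever the pieces are closer than that, which preserves the (trivial, all-free) answer while guaranteeing disjoint balls, so that the projection argument applies uniformly; your extra $O(\eps\delta)$ shift for ``strict'' separation is unnecessary, since only non-strict separation by $0$ is needed.

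\newcommand{\probref}[2]{Problem~\ref{prob:#1}#2}
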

We will prove this lemma in \secrefs{fspp}{onedim}.

\begin{algenv}\label{alg:freespace} Using an algorithm for the free-space region problem on pieces as in \lemref{fspp}, we obtain an approximate decider for the \Fr distance a follows. We create a directed graph which has a node $v_{s,t}$ for every region $[p_s,b_s]\times[q_t,d_t]$ spanned by pieces $\pi^s$ and $\sigma^t$, and a node $u_{i,j}$ for every remaining region $[i,i+1]\times[j,j+1]$ (which is not contained in any region spanned by two pieces), $i \in [1..n)$, $j \in [1..m)$. We add edges between two nodes whenever their regions touch (i.e., have a common interval~$I$ on their boundary), and direct this edge from the region that is to the left or below $I$ to the other one. With each node $u_{i,j}$ we store the entry intervals $\tilde R^h_{i,j}$ and $\tilde R^v_{i,j}$, and with each node $v_{s,t}$ we store the entry intervals $\tilde R^h_{i,q_t} \subseteq [i,i+1] \times \{q_t\}$ for $i \in [p_s..b_s)$ and $\tilde R^v_{p_s,j} \subseteq \{p_s\} \times [j,j+1]$ for $j \in [q_t..d_t)$.
After correctly initializing the outer reachability intervals $\tilde R^h_{i,1}$ and $\tilde R^v_{1,j}$,
we follow any topological ordering of this graph. For any node $u_{i,j}$, we resolve its region by solving the corresponding free-space \emph{cell} problem in constant time. For any node $v_{s,t}$, we solve the corresponding free-space \emph{region} problem on $\pi' = \pi^s, \sigma' = \sigma^t$ (and $\delta' = \delta, \eps' = \eps$) using \lemref{fspp}. Finally, we return $\dfr(\pi,\sigma) \le (1+\eps)\delta$ if $(n,m) \in \tilde R^h_{n-1,m}$ and $\dfr(\pi,\sigma) > \delta$ otherwise. 
\end{algenv}

\begin{lemma}
  Algorithm~\ref{alg:freespace} is a correct approximate decider.
\end{lemma}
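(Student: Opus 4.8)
The plan is to prove correctness of Algorithm~\ref{alg:freespace} in two parts: first that the directed graph defined on regions is acyclic (so that a topological ordering exists and the algorithm is well-defined), and second that propagating reachability intervals along this ordering correctly computes an approximation of the reachable free-space, from which the final answer follows.

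\medskip

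\textbf{Acyclicity and well-definedness.} First I would observe that each region---whether a cell $[i,i+1]\times[j,j+1]$ associated with a node $u_{i,j}$ or a rectangle $[p_s,b_s]\times[q_t,d_t]$ associated with a node $v_{s,t}$---is an axis-parallel rectangle, and the collection of all these rectangles tiles $[1,n]\times[1,m]$ without overlap (this follows since the $\pi^s$ partition $[1,n]$ and the $\sigma^t$ partition $[1,m]$, and a rectangle $v_{s,t}$ is used exactly when both $\pi^s$ and $\sigma^t$ are pieces, otherwise the constituent cells $u_{i,j}$ are used). An edge goes from a rectangle $A$ to a rectangle $B$ only when they share a boundary interval $I$ with $A$ to the left of, or below, $I$. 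Hence along any directed edge the sum (lower-left corner $x$-coordinate)~$+$~($y$-coordinate) strictly increases; more precisely, the bottom-left corner of $B$ is coordinatewise $\ge$ that of $A$ and differs in at least one coordinate. This gives a topological order, e.g.\ sorting regions by their bottom-left corner lexicographically, so Algorithm~\ref{alg:freespace} is well-defined. I would also note that the "outer" entry intervals $\tilde R^h_{i,1}$, $\tilde R^v_{1,j}$ are initialized correctly: we set $\tilde R^h_{i,1}$ (resp.\ $\tilde R^v_{1,j}$) to be the portion of $\FS_{\le \delta}(\pi,\sigma)$ on $[i,i+1]\times\{1\}$ (resp.\ $\{1\}\times[j,j+1]$) that is reachable from $(1,1)$ within the bottom row (resp.\ left column), which is an interval by the standard convexity of free-space cells, and $(1,1)$ itself is reachable iff $\|\pi_1-\sigma_1\|\le\delta$.

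\medskip

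\textbf{Correctness of the propagation: invariant.} The key is an invariant maintained along the topological order. For a rectangle $A$ with bottom-left corner $(a_1,a_2)$ and top-right corner $(a_1',a_2')$, let its "input boundary" be the union of its bottom and left edges and its "output boundary" the union of its top and right edges. The invariant I would maintain is: when $A$ is processed, for every maximal boundary interval $I$ on its input boundary the stored interval $\tilde R_I$ satisfies (1)~$\tilde R_I$ contains every point of $I$ reachable from $(1,1)$ by a monotone path in $\FS_{\le\delta}(\pi,\sigma)$, and (2)~every point of $\tilde R_I$ is reachable from $(1,1)$ by a monotone path in $\FS_{\le(1+\eps)\delta}(\pi,\sigma)$. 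After processing $A$---by the free-space cell subroutine if $A=u_{i,j}$, or by \lemref{fspp} if $A=v_{s,t}$ (legitimately, since then $\pi^s,\sigma^t$ are pieces, and the subproblem is exactly Problem~\ref{prob:freespace} with $\delta'=\delta$, $\eps'=\eps$)---the same two properties hold for the intervals written on $A$'s output boundary. The induction step combines: the exactness of the cell subroutine (for $\eps=0$ it is exact, hence (1) and (2) hold trivially with no error introduced) and the guarantee of \lemref{fspp} (property~(1): exits contain all points reachable via $\FS_{\le\delta}$; property~(2): all exit points reachable via $\FS_{\le(1+\eps)\delta}$). Because each output interval of a rectangle is, by construction of the edges, the input interval of exactly one neighbouring rectangle, these two properties propagate. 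A point requiring a little care: a monotone path from $(1,1)$ to a point $p$ enters and leaves the rectangles it crosses through their boundaries, and when restricted to a single rectangle it is a monotone path from an input-boundary point to an output-boundary point; so reachability of $p$ "globally" is equivalent to reachability built up rectangle-by-rectangle, which is what justifies the invariant. For property~(2) one notes that concatenating monotone subpaths each of width $\le(1+\eps)\delta$ yields a monotone path of width $\le(1+\eps)\delta$.

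\medskip

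\textbf{Finishing.} Applying the invariant to the final rectangle containing the corner $(n,m)$ on its output boundary: $(n,m)\in\tilde R^h_{n-1,m}$ (as tested by the algorithm) iff, up to the approximation, $(n,m)$ is reachable from $(1,1)$. Concretely, if $(n,m)\in\tilde R^h_{n-1,m}$ then by property~(2) there is a monotone path from $(1,1)$ to $(n,m)$ in $\FS_{\le(1+\eps)\delta}(\pi,\sigma)$, hence $\dfr(\pi,\sigma)\le(1+\eps)\delta$, so answering "$\dfr(\pi,\sigma)\le(1+\eps)\delta$" is correct. Conversely, if $(n,m)\notin\tilde R^h_{n-1,m}$ then by property~(1) $(n,m)$ is not reachable from $(1,1)$ in $\FS_{\le\delta}(\pi,\sigma)$, so by the standard characterization of the \Fr distance via monotone paths in the free-space diagram, $\dfr(\pi,\sigma)>\delta$, and answering "$\dfr(\pi,\sigma)>\delta$" is correct. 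Thus the algorithm always outputs a correct statement, i.e.\ it is an approximate decider.

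\medskip

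The main obstacle I anticipate is not any single hard computation but rather setting up the invariant cleanly enough that the rectangle-by-rectangle decomposition of monotone paths is rigorous---in particular being careful that (i)~an interval stored on a shared edge is used consistently as "output" by one rectangle and "input" by the other (which relies on the tiling and edge-orientation being exactly as described), and (ii)~the two one-sided guarantees (containment for parameter $\delta$, realizability for parameter $(1+\eps)\delta$) compose without the errors compounding, which works precisely because the $\eps$ in property~(2) is a fixed slack applied to the target distance, not an accumulating multiplicative factor. Everything else is bookkeeping on top of the standard free-space machinery recalled in \secref{preliminaries} and the black-box guarantee of \lemref{fspp}.
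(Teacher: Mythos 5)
Your proposal is correct and follows essentially the same approach as the paper: the paper's proof consists only of your ``Finishing'' paragraph (the two implications at the corner $(n,m)$), leaving the inductive invariant over the topological order and the composition of the two one-sided guarantees of Problem~\ref{prob:freespace} as an implicit ``observe that''. Your write-up simply makes that induction, the tiling/acyclicity argument, and the non-compounding of the $\eps$-slack explicit.
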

\begin{proof} 
  Observe that if $(n,m) \in \tilde R^h_{n-1,m}$ then there exists a monotone path from $(1,1)$ to $(n,m)$ in $\FS_{\le (1+\eps)\delta}(\pi,\sigma)$, which implies $\dfr(\pi,\sigma) \le (1+\eps)\delta$. If $\dfr(\pi,\sigma) \le \delta$ then there is a monotone path from $(1,1)$ to $(n,m)$ in $\FS_{\le \delta}(\pi,\sigma)$, implying $(n,m) \in \tilde R^h_{n-1,m}$.
\end{proof}

In the above algorithm we can ignore \emph{unreachable} nodes, i.e., nodes where all stored entry intervals would be empty. To this end, we fix a topological ordering by mapping a node corresponding to a region $[x_1,x_2]\times[y_1,y_2]$ to $x_2+y_2$ and sorting by this value ascendingly. This yields $n+m$ layers of nodes, where the order within each layer is arbitrary. For each layer we build a dictionary data structure (a hash table), in which we store only the reachable nodes of this layer. This allows to quickly enumerate all reachable nodes of a layer. The total overhead for managing the $n+m$ dictionaries is $\Oh(n+m)$.

Let us analyze the runtime of the obtained approximate decider. Let $S$ be the set of non-empty free-space cells $C_{i,j}$ of $\FS_{\le (1+\eps) \delta}(\pi,\sigma)$ such that $i$ or $j$ is not contained in a piece. Moreover, let $T$ be the set of all pairs $(s,t)$ such that $\pi^s,\sigma^t$ are pieces with initial vertices within distance $(1+\eps)\delta + 2 \Lambda$. Define $N(\pi,\sigma,\delta,\eps) := |S| + \sum_{(s,t) \in T} (|\pi^s| + |\sigma^t|)$ and set $N(\pi,\sigma,\eps) := \max_{\delta > 0} N(\pi,\sigma,\delta,\eps)$. Since the algorithm considers only \emph{reachable} cells and any reachable cell is also non-empty, the cost over all free-space cell problems solved by our approximate decider is bounded by $\Oh(|S|)$. Since every reachable (thus non-empty) region spanned by two pieces has initial points within distance $(1+\eps)\delta + 2\Lambda$, the second term bounds the cost over all free-space region problems on pieces (apart from the $\log^2 1/\eps$ factor). Hence, we obtain the following.

\begin{lemma} \label{lem:runtime}
  The approximate decider has runtime $D(\pi,\sigma,\eps) = \Oh(N(\pi,\sigma,\eps) \cdot \log^2 1/\eps)$.
\end{lemma}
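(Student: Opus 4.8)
**Proof proposal for Lemma~\ref{lem:runtime} (the runtime bound $D(\pi,\sigma,\eps) = \Oh(N(\pi,\sigma,\eps) \cdot \log^2 1/\eps)$).**

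The plan is to account for all work done by Algorithm~\ref{alg:freespace} by charging it either to non-empty free-space cells counted in $S$ or to the vertex counts of pairs of pieces counted in $T$, and to verify that the bookkeeping overhead (building and traversing the topological-ordering DAG with its per-layer hash tables) does not dominate. First I would recall the structure of the algorithm: it builds a DAG whose nodes are either single free-space cells $u_{i,j}$ (for cells not contained in any piece-piece region) or piece-piece regions $v_{s,t}$; it processes nodes in topological order, resolving each $u_{i,j}$ in $\Oh(1)$ via the free-space cell problem and each $v_{s,t}$ in $\Oh((|\pi^s|+|\sigma^t|)\log^2 1/\eps)$ via \lemref{fspp}; and it skips every \emph{unreachable} node using the per-layer dictionaries, so that only reachable nodes incur cost.

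The key steps, in order. (1) Observe that a node is processed with nonzero cost only if it is reachable, i.e.\ at least one of its stored entry intervals is nonempty; and any reachable cell is in particular non-empty in $\FS_{\le(1+\eps)\delta}$ (since entry intervals of $v_{s,t}$ and $u_{i,j}$ only ever receive points guaranteed reachable in $\FS_{\le(1+\eps)\delta}$, per Problem~\ref{prob:freespace}(2) and the definition of the cell problem, and a reached interval inside a cell witnesses that cell being non-empty). (2) Bound the total cost of all processed $u_{i,j}$ nodes by $\Oh(|S|)$: each such node corresponds to a free-space cell $C_{i,j}$ with $i$ or $j$ not inside a piece (by construction of the DAG, cells inside a piece-piece region are absorbed into a $v_{s,t}$), it is non-empty in $\FS_{\le(1+\eps)\delta}$ by step (1), hence it lies in $S$, and it is resolved in $\Oh(1)$ time. (3) Bound the total cost of all processed $v_{s,t}$ nodes by $\Oh\big(\sum_{(s,t)\in T}(|\pi^s|+|\sigma^t|)\cdot\log^2 1/\eps\big)$: a processed $v_{s,t}$ is reachable, hence its region $[p_s,b_s]\times[q_t,d_t]$ contains a point of $\FS_{\le(1+\eps)\delta}$, so $\|\pi_{p_s'}-\sigma_{q_t'}\|\le(1+\eps)\delta$ for some $p_s'\in[p_s,b_s]$, $q_t'\in[q_t,d_t]$; since $\pi^s,\sigma^t$ are pieces, $\|\pi_{p_s}-\pi_{p_s'}\|<\Lambda$ and $\|\sigma_{q_t}-\sigma_{q_t'}\|<\Lambda$, so by the triangle inequality the initial vertices satisfy $\|\pi_{p_s}-\sigma_{q_t}\|\le(1+\eps)\delta+2\Lambda$, i.e.\ $(s,t)\in T$; and \lemref{fspp} resolves that region in time $\Oh((|\pi^s|+|\sigma^t|)\log^2 1/\eps)$. (4) Finally, bound the DAG overhead: the number of nodes is $\Oh(nm)$ a priori, but we never materialize unreachable nodes — using the $n+m$ layers (one per value $x_2+y_2$) and a hash table per layer storing only reachable nodes, enumerating and edge-following among reachable nodes costs $\Oh(1)$ per reachable node plus the stated $\Oh(n+m)$ for managing the dictionaries; and $n+m = \Oh(|S|)$ whenever any cell is reachable at all (the first row and column of reached cells alone have this many entries, up to the piece absorption, which at worst replaces a contiguous block of cells by a single $v_{s,t}$ whose piece contributes $|\pi^s|+|\sigma^t|$ to $N$) — so the overhead is absorbed into $\Oh(N(\pi,\sigma,\delta,\eps))$. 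Summing steps (2)--(4) gives $D(\pi,\sigma,\delta,\eps) = \Oh(N(\pi,\sigma,\delta,\eps)\cdot\log^2 1/\eps)$, and taking the maximum over $\delta>0$ yields the claim.

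The main obstacle I expect is step (4): making rigorous that skipping unreachable nodes really keeps the total bookkeeping within $\Oh(N)$, in particular arguing that $n+m$ (the number of layers, and hence a lower bound on unavoidable per-layer overhead) is itself $\Oh(N(\pi,\sigma,\eps))$ in every case where the algorithm does any work, and cleanly handling the degenerate case where almost nothing is reachable (there $N$ may be tiny, but the algorithm must still at least inspect the initialized outer intervals — one checks this is $\Oh(n+m)$ and that this is the right normalization, or that the approximate-decider framework tolerates an additive $\Oh(n+m)$ which is anyway subsumed once $D(\pi,\sigma,1)\log n$ is added in \lemref{approxfromdecider}). The charging in steps (1)--(3) is essentially a definitional unwinding and should be routine given \lemref{fspp} and the specification of Problem~\ref{prob:freespace}.
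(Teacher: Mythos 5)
Your proposal is correct and matches the paper's own argument: the paper likewise charges each resolved cell node to $S$ (reachable implies non-empty), charges each resolved piece-piece region to $T$ via exactly the triangle-inequality observation that a reachable region forces the initial vertices within $(1+\eps)\delta+2\Lambda$, and bounds the dictionary/topological-ordering overhead by $\Oh(n+m)$. Your step (4) is in fact slightly more careful than the paper, which simply asserts the $\Oh(n+m)$ overhead without discussing whether it is absorbed into $N(\pi,\sigma,\eps)$ (it is, since taking $\delta$ large enough makes all of $\pi,\sigma$ into pieces contributing $n+m$ to $N$).
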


\subsection{The free-space complexity of $c$-packed curves}

Recall that a curve~$\pi$ is $c$-packed if for any point $z \in \R^d$ and any radius $r>0$ the total length of $\pi$ inside the ball $B(z,r)$ is at most $c r$. 

\begin{lemma} \label{lem:cpackedcomplex}
  Let $\pi,\sigma$ be $c$-packed curves with $n$ vertices in total and $\eps>0$. Then $N(\pi,\sigma,\eps) = \Oh(cn/\sqrt{\eps})$.
\end{lemma}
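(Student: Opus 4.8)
The plan is to bound the two summands of $N(\pi,\sigma,\delta,\eps) = |S| + \sum_{(s,t)\in T}(|\pi^s|+|\sigma^t|)$ separately by $\Oh(cn/\sqrt\eps)$, uniformly in $\delta$; since $\Lambda = \Theta(\sqrt\eps\,\delta)$ we have $\delta/\Lambda = \Oh(1/\sqrt\eps)$, and as every quantity scales linearly with $\delta$ it suffices to fix one value of $\delta$. Write $d(e,f):=\min\{\|x-y\| : x\in e, y\in f\}$ for segments $e,f$, so that a cell $C_{i,j}$ is non-empty in $\FS_{\le(1+\eps)\delta}$ iff $d(\pi_i\pi_{i+1},\sigma_j\sigma_{j+1})\le(1+\eps)\delta\le 2\delta$. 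The recurring tool is the following consequence of $c$-packedness: if sub-segments of distinct edges of a $c$-packed curve $\gamma$ all lie in a common ball $B(z,r)$ and each has length at least $\lambda$, then there are at most $cr/\lambda$ of them, since their lengths are disjoint contributions to the total length of $\gamma$ inside $B(z,r)$, which is $\le cr$. Two instances do all the work. \emph{(Fact 1)} For any $z$ and $r=\Oh(\delta)$, at most $\Oh(cr/\Lambda)=\Oh(c/\sqrt\eps)$ pieces of $\gamma$ have their initial vertex within distance $r$ of $z$: such a piece lies in $B(z,r+\Lambda)\subseteq B(z,\Oh(\delta))$ and has length $\ge\Lambda$ (only the last piece of $\gamma$ may be shorter, contributing an additive $+1$). \emph{(Fact 2)} For any $z$ and $\rho\le\Lambda$, at most $\Oh(c/\sqrt\eps)$ long segments $e$ of $\gamma$ satisfy $d(e,B(z,\rho))\le 2\delta$: each such $e$ has a point within $2\delta+\rho$ of $z$, and since $\|e\|\ge\Lambda$ it contains a sub-segment of length $\ge\Lambda/2$ inside $B(z,2\delta+\rho+\Lambda/2)\subseteq B(z,\Oh(\delta))$, and these sub-segments are disjoint.

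\emph{Second summand.} For a fixed piece $\pi^s$, the pieces $\sigma^t$ with $(s,t)\in T$ are exactly those whose initial vertex lies within $(1+\eps)\delta+2\Lambda=\Oh(\delta)$ of the initial vertex of $\pi^s$, so by Fact~1 there are $\Oh(c/\sqrt\eps)$ of them. Hence $\sum_{(s,t)\in T}|\pi^s|\le\Oh(c/\sqrt\eps)\sum_s|\pi^s|$, and since consecutive subcurves of $\pi$ overlap only in a shared endpoint, $\sum_s|\pi^s|\le 2|\pi|\le 2n$; this gives $\Oh(cn/\sqrt\eps)$, and $\sum_{(s,t)\in T}|\sigma^t|=\Oh(cn/\sqrt\eps)$ symmetrically.

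\emph{First summand.} Split $S$ into $S_A$ (cells with $e_i=\pi_i\pi_{i+1}$ long and $f_j=\sigma_j\sigma_{j+1}$ an edge of some piece $\sigma^t$), $S_B$ (the symmetric case), and $S_C$ (both $e_i$ and $f_j$ long). For $S_A$: fix a piece $\sigma^t$ with initial vertex $z$ and an edge $f_j$ of it; since $f_j\subseteq\sigma^t\subseteq B(z,\Lambda)$, every long $e_i$ with $d(e_i,f_j)\le 2\delta$ has $d(e_i,B(z,\Lambda))\le 2\delta$, so by Fact~2 there are only $\Oh(c/\sqrt\eps)$ of them. Summing over all edges of all pieces of $\sigma$ yields $|S_A|\le\Oh(c/\sqrt\eps)\cdot|\sigma|=\Oh(cn/\sqrt\eps)$, and $|S_B|=\Oh(cn/\sqrt\eps)$ symmetrically.

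\emph{The case $S_C$ is the main obstacle}, because a long segment can be arbitrarily long: covering $e_i$ by $\Oh(\|e_i\|/\delta)$ balls of radius $\Oh(\delta)$ and counting the long segments of $\sigma$ near each would only give $\Oh(c\sum_i\|e_i\|/\Lambda)$, which is useless since the total length of the long segments of a $c$-packed curve is unbounded. The fix is to charge each pair $(e_i,f_j)\in S_C$ to whichever segment is shorter and then use a ball scaled to that segment. Consider the pairs with $\|e_i\|\le\|f_j\|$; fix a long $e_i$ of $\pi$ with midpoint $m$. Any long $f_j$ of $\sigma$ with $\|f_j\|\ge\|e_i\|$ and $d(e_i,f_j)\le 2\delta$ has a point within $\|e_i\|/2+2\delta$ of $m$, hence (travelling along $f_j$ in the direction of larger remaining length, which is at least $\|f_j\|/2\ge\|e_i\|/2$) a sub-segment of length $\ge\|e_i\|/2$ inside $B(m,\|e_i\|+2\delta)$. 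These sub-segments are disjoint, so by $c$-packedness there are at most $c(\|e_i\|+2\delta)/(\|e_i\|/2)=2c+4c\delta/\|e_i\|\le 2c+4c\delta/\Lambda=\Oh(c/\sqrt\eps)$ such $f_j$. Summing over the long segments $e_i$ of $\pi$ bounds the number of these pairs by $\Oh(c/\sqrt\eps)\cdot|\pi|=\Oh(cn/\sqrt\eps)$, and the pairs with $\|e_i\|>\|f_j\|$ are bounded by $\Oh(cn/\sqrt\eps)$ symmetrically. Hence $|S_C|=\Oh(cn/\sqrt\eps)$, so $|S|=\Oh(cn/\sqrt\eps)$, and combining with the second summand gives $N(\pi,\sigma,\eps)=\Oh(cn/\sqrt\eps)$.
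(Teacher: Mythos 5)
Your proof is correct and follows essentially the same route as the paper: both summands are bounded by charging arguments exploiting $c$-packedness, where each charged segment receives $\Oh(c/\sqrt{\eps})$ charges because the charging segments contribute disjoint lengths of order $\max\{\|v\|,\Lambda\}$ to a ball of radius $\Oh(\delta+\|v\|)$. The only difference is organizational: the paper handles your cases $S_A$, $S_B$, $S_C$ with a single unified charging to the shorter of the two segments (observing that the longer partner always has length at least $\max\{\|v\|,\Lambda\}$), whereas you split according to which segment is long; the underlying counting is identical.
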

\begin{proof}
  Our proof uses a similar argument as~\cite[Lemma 4.4]{DriemelHP13}.
  Let $\delta > 0$ be arbitrary. First consider the set $S$ of non-empty free-space cells $C_{i,j}$ of $\FS_{\le (1+\eps) \delta}(\pi,\sigma)$ such that $i$ or $j$ is not contained in a piece. Then one of the segments $\pi_{i..i+1}$ and $\sigma_{j..j+1}$ is long, i.e., of length at least $\Lambda = \min\{\tfrac 12 \sqrt{\eps}, \tfrac 14\}\cdot \delta$. We charge the cell $C_{i,j}$ to the shorter of the two segments. Let us analyze how often any segment $v = \pi_{i..i+1}$ can be charged. Consider the ball $B$ of radius $r := \tfrac 12 \|v\| + (1+\eps)\delta + \max\{\|v\|, \Lambda\}$ centered at the midpoint of~$v$. Every segment $u = \sigma_{j..j+1}$ with $(i,j) \in S$, which charges $v$, is of length at least $\mu := \max\{\|v\|, \Lambda\}$ (since it is longer than $v$ and a long segment) and contributes at least $\mu$ to the total length of $\sigma$ in $B$. Since $\sigma$ is $c$-packed, the number of such charges is at most 
  $$ \frac{\|\sigma \cap B\|}{\mu} \le \frac {cr}{\mu} \le \frac{c(\tfrac 12 \|v\| + (1+\eps)\delta + \max\{\|v\|, \Lambda\})}{\max\{\|v\|, \Lambda\}} \le \tfrac 32 c + \frac{c(1+\eps)\delta}{\min\{\tfrac 12 \sqrt{\eps}, \tfrac 14\}\cdot \delta} = \Oh\Big(\frac{c}{\sqrt{\eps}}\Big).$$
  Thus, the contribution of $|S|$ to the free-space complexity $N(\pi,\sigma,\eps)$ is $\Oh(cn/\sqrt{\eps})$.
  
  Let $T$ be the set of all pairs $(s,t)$ such that $\pi^s,\sigma^t$ are pieces of $\pi,\sigma$ with initial vertices within distance $(1+\eps)\delta + 2\Lambda$, and consider $\Sigma := \sum_{(s,t) \in T} (|\pi^s| + |\sigma^t|)$. We distribute $\Sigma$ over the segments of $\pi,\sigma$ by charging 1 to every segment of $\pi^s$ and $\sigma^t$ for any pair $(s,t) \in T$. Let us analyze how often any segment $v$ of a piece $\pi^s$ can be charged. Consider the ball $B'$ of radius $r' := (1+\eps)\delta + 3\Lambda$ around the initital vertex $\pi_{p_s}$ of $\pi^s$. Since $\|\sigma^t\| \ge \Lambda$, for any $(s,t) \in T$ the piece $\sigma^t$ contributes at least $\Lambda$ to the total length of $\sigma$ in $B'$. Since $\sigma$ is $c$-packed, the number of such charges to $v$ is at most
  $$ \frac{\|\sigma \cap B'\|}{\Lambda} \le \frac{cr'}{\Lambda} = \frac{c(1+\eps+\tfrac 32 \sqrt{\eps})}{\min\{\tfrac 12 \sqrt{\eps}, \tfrac 14\}} = \Oh\Big( \frac{c}{\sqrt{\eps}} \Big).$$
  Hence, the contribution of $\Sigma$ to the free-space complexity $N(\pi,\sigma,\eps)$ is also at most $\Oh(cn/\sqrt{\eps})$, which finishes the proof.
\end{proof}

Combining \lemrefss{cpackedcomplex}{runtime}{approxfromdecider}, we obtain an approximation algorithm for the \Fr distance with running time $\Oh(\tfrac{cn}{\sqrt{\eps}} \log^2 1/\eps + cn \log n) = \tilde \Oh(\tfrac{cn}{\sqrt{\eps}})$, as desired.

\subsection{The free-space complexity of $\kappa$-bounded and $\kappa$-straight curves}
\label{sec:kappa}

\begin{definition}
  Let $\kappa \ge 1$ be a given parameter. A curve $\pi$ is \emph{$\kappa$-straight} if for any $p,b \in [1,|\pi|]$ we have $\|\pi_{p..b}\| \le \kappa \|\pi_p - \pi_b\|$. A curve $\pi$ is \emph{$\kappa$-bounded} if for all $p,b$ the subcurve $\pi_{p..b}$ is contained in $B(\pi_p,r) \cup B(\pi_b,r)$, where $r = \tfrac \kappa 2 \|\pi_p- \pi_b\|$. 
\end{definition}

The following lemma from~\cite{DriemelHP13} allows us to transfer our speedup for $c$-packed curves directly to $\kappa$-straight curves.

\begin{lemma}
  A $\kappa$-straight curve is $2\kappa$-packed.
\end{lemma}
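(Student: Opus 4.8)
The plan is to check the $c$-packedness condition directly with $c = 2\kappa$. Fix an arbitrary center $z \in \R^d$ and an arbitrary radius $r > 0$; I must bound the total length of $\pi$ inside $B(z,r)$ by $2\kappa r$. Writing $\pi\colon [1,n]\to\R^d$ and $B = B(z,r)$, I would first pass to the parameter set $S := \{ t \in [1,n] \mid \pi(t) \in B \}$. Since $B$ is convex, each segment of $\pi$ meets $B$ in a single (possibly empty or degenerate) sub-interval of parameters, so $S$ is a finite union of maximal closed intervals $[\alpha_1,\beta_1],\dots,[\alpha_k,\beta_k]$ with $\alpha_1 \le \beta_1 \le \alpha_2 \le \dots \le \beta_k$, and the total length of $\pi$ inside $B$ equals $\sum_{j=1}^k \|\pi_{\alpha_j..\beta_j}\|$. (If $S = \emptyset$ there is nothing to prove.)

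The key point is to bound this sum not piece by piece — which would only give $2\kappa r$ per piece — but all at once, by the single enclosing subcurve $\pi_{\alpha_1..\beta_k}$. Since the $[\alpha_j,\beta_j]$ are pairwise disjoint sub-intervals of $[\alpha_1,\beta_k]$ and arc length is additive along $\pi$, we have $\sum_{j=1}^k \|\pi_{\alpha_j..\beta_j}\| \le \|\pi_{\alpha_1..\beta_k}\|$. Applying the $\kappa$-straightness hypothesis to the (in general non-integer) parameters $\alpha_1 \le \beta_k$ gives $\|\pi_{\alpha_1..\beta_k}\| \le \kappa \|\pi_{\alpha_1} - \pi_{\beta_k}\|$, and since $\pi_{\alpha_1}, \pi_{\beta_k} \in B$ the triangle inequality through $z$ yields $\|\pi_{\alpha_1} - \pi_{\beta_k}\| \le \|\pi_{\alpha_1} - z\| + \|z - \pi_{\beta_k}\| \le 2r$. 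Chaining these three inequalities gives total length $\le 2\kappa r$, which is exactly $2\kappa$-packedness; the factor $2$ is intrinsic, since a ball of radius $r$ has diameter $2r$.

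This argument has no genuine obstacle; the only care needed is in two bookkeeping points, and I expect both to be routine. First, one must make sure that ``length of $\pi$ inside $B$'' is correctly modelled as $\sum_j \|\pi_{\alpha_j..\beta_j}\|$, in particular for segments of $\pi$ that only partially poke into the ball — this is precisely why I work with the parameter set $S$ and use convexity of $B$, so that every segment contributes a single clean sub-interval. Second, if $\pi$ itself starts or ends inside $B$ then $\alpha_1 = 1$ (resp.\ $\beta_k = n$) and $\pi_{\alpha_1}$ (resp.\ $\pi_{\beta_k}$) is a curve endpoint rather than a boundary point of $B$; but it still lies in $B$, so the estimate $\|\pi_{\alpha_1} - \pi_{\beta_k}\| \le 2r$ is unaffected and the proof goes through unchanged.
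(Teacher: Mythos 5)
Your proof is correct and is the standard argument (the paper itself states this lemma without proof, importing it from Driemel et al.): bound all of $\pi\cap B$ at once by the arc length of the single subcurve between the first and last parameter hitting $B$, then apply $\kappa$-straightness and the diameter bound $2r$. The two bookkeeping points you flag are indeed routine, and your use of non-integer parameters $\alpha_1,\beta_k$ is covered by the paper's definition of $\kappa$-straightness, which quantifies over all $p,b\in[1,|\pi|]$.
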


In the remainder of this section we consider $\kappa$-bounded curves, closely following~\cite[Sect.\ 4.2]{DriemelHP13}.

\begin{lemma} \label{lem:kappaboundedhelp}
  Let $\delta >0$, $0<\eps\le 1$, $\lambda > 0$, and let $\pi$ be a $\kappa$-bounded curve with disjoint subcurves $\pi^1,\ldots,\pi^k$, where $\pi^s = \pi_{p_s..b_s}$ and $\|\pi_{p_s} - \pi_{b_s}\| \ge \lambda$ for all $s$. Then for any $z \in \R^d$, $r>0$ the number of subcurves~$\pi^s$ intersecting $B(z,r)$ is bounded by $\Oh(\kappa^d(1+r/\lambda)^d)$.
\end{lemma}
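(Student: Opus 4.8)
The plan is to decompose the subcurves into dyadic length classes and bound, for each class separately, the number of members that meet $B(z,r)$. For $i \ge 0$ let $\mathcal{S}_i$ be the set of indices $s$ with $2^i \lambda \le \|\pi_{p_s} - \pi_{b_s}\| < 2^{i+1}\lambda$; since every subcurve has chord at least $\lambda$, the classes $\mathcal{S}_i$ partition all of $\pi^1,\dots,\pi^k$, which I will assume to be indexed in the order they occur along $\pi$. For a subcurve $\pi^s$ meeting $B(z,r)$ I fix a witness point $x_s \in \pi^s \cap B(z,r)$, and I record for later the elementary fact that, by the triangle inequality, at least one endpoint of $\pi^s$ is at distance at least $\tfrac12\|\pi_{p_s}-\pi_{b_s}\| \ge 2^{i-1}\lambda$ from $x_s$.

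The key geometric step is a separation bound within a single class. Suppose $\pi^s,\pi^t \in \mathcal{S}_i$ both meet $B(z,r)$, $\pi^s$ occurs before $\pi^t$, and some subcurve $\pi^u$ of chord at least $2^i\lambda$ occurs strictly between them. Since the subcurves are disjoint, the subcurve $\rho$ of $\pi$ running from $x_s$ to $x_t$ contains $\pi^u$ entirely, so $\diam(\rho) \ge 2^i\lambda$; on the other hand $\rho$ is itself $\kappa$-bounded (being a subcurve of the $\kappa$-bounded curve $\pi$), with endpoints $x_s,x_t$, hence $\rho \subseteq B(x_s,\tfrac\kappa2\|x_s-x_t\|) \cup B(x_t,\tfrac\kappa2\|x_s-x_t\|)$ and therefore $\diam(\rho) \le (1+\kappa)\|x_s-x_t\|$. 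Combining the two inequalities yields $\|x_s-x_t\| \ge 2^i\lambda/(1+\kappa)$. Consequently, listing the members of $\mathcal{S}_i$ meeting $B(z,r)$ in the order they appear along $\pi$, any two at list-distance at least two have witnesses at distance at least $2^i\lambda/(1+\kappa)$ (there is a member of $\mathcal{S}_i$, hence a chord of length at least $2^i\lambda$, strictly between them); splitting the list into its odd- and even-indexed sublists and using that all witnesses lie in $B(z,r)$, a standard volume-packing argument gives $|\{s \in \mathcal{S}_i : \pi^s \cap B(z,r) \neq \emptyset\}| = \Oh\bigl((1 + (1+\kappa)r/(2^i\lambda))^d\bigr)$.

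The remaining obstacle is that a priori arbitrarily large classes could contribute, and $\sum_{i\ge 0}$ of the previous bound diverges. This is ruled out by showing that essentially only small classes matter: if $\pi^s \in \mathcal{S}_i$ meets $B(z,r)$ and $2^i\lambda > 4(1+\kappa)r$, then $\pi^s$ is the first or the last, along $\pi$, among all subcurves meeting $B(z,r)$. Indeed, by the recorded fact some endpoint $e$ of $\pi^s$ satisfies $\|e-x_s\| \ge 2^{i-1}\lambda$, hence $\|e-z\| \ge 2^{i-1}\lambda - r > (1+\kappa)r$; if $e = \pi_{b_s}$ and $\pi^s$ were not the last such subcurve, the subcurve from $x_s$ to the witness of the next subcurve meeting $B(z,r)$ would be confined to $B(z,(1+\kappa)r)$ by the same $\kappa$-bounded argument, yet it contains $e$ — a contradiction; the case $e = \pi_{p_s}$ is symmetric and forces $\pi^s$ to be the first. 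So at most two subcurves from classes with $2^i\lambda > 4(1+\kappa)r$ meet $B(z,r)$, and over the finitely many classes with $2^i\lambda \le 4(1+\kappa)r$ the per-class bounds form a geometric series dominated by its $i=0$ term $\Oh((1+(1+\kappa)r/\lambda)^d)$, plus a negligible $\Oh(\log((1+\kappa)r/\lambda))$ from the ``$+1$''s. Summing everything gives the claimed $\Oh(\kappa^d(1+r/\lambda)^d)$.

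The step I expect to be the real obstacle is keeping the dependence on $\kappa$ down to $\kappa^d$ rather than $\kappa^{2d}$: the naive approach of first confining all relevant subcurves to a ball of radius $(1+\kappa)r$ and then packing their starting points at resolution $\Theta(\lambda/\kappa)$ loses an extra factor $\kappa^d$. The dyadic bookkeeping is what avoids this, since within class $i$ the witnesses are separated at the larger scale $\Theta(2^i\lambda/\kappa)$ and are packed inside the radius-$r$ ball $B(z,r)$ itself rather than an inflated one. I would also check the trivial boundary cases (when $r \lesssim \lambda/\kappa$ there are no non-empty classes and the bound is $\Oh(1)$), but these only decrease the count.
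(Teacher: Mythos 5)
Your proof is correct, and its geometric core --- lower-bounding the distance between two witness points by applying the $\kappa$-bounded property to the subcurve of $\pi$ joining them, using the chord of a subcurve trapped strictly between them, then an odd/even split and a volume-packing argument --- is exactly the paper's. Where you diverge is the bookkeeping: the paper runs this argument \emph{once}, over all intersecting subcurves simultaneously, using only the uniform lower bound $\lambda$ on every chord. Between consecutive odd-indexed witnesses there lies an even intersecting subcurve of chord at least $\lambda$, so the witnesses are $\lambda/(\kappa+1)$-separated, and packing balls of radius $\lambda/(2(\kappa+1))$ around them inside $B(z,r+\lambda)$ immediately gives $\Oh\bigl((\kappa+1)^d(1+r/\lambda)^d\bigr)$. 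The worry motivating your dyadic stratification --- that one loses $\kappa^{2d}$ by first confining everything to a ball of radius $(1+\kappa)r$ and then packing at resolution $\lambda/\kappa$ --- does not arise, because the witnesses themselves already lie in $B(z,r)$ (no inflated ball is needed anywhere) and the single factor $\kappa^d$ comes solely from the packing resolution. Your stratification by chord length, the per-class packing, and the separate treatment of classes with $2^i\lambda > 4(1+\kappa)r$ (where you correctly show that only the first and last intersecting subcurves can belong to such a class) are all sound, but they buy nothing here: longer chords only improve the separation, and the hypothesis already supplies the uniform bound $\lambda$ that makes a single packing suffice.
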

\begin{proof}
  Let $\pi^{s_1},\ldots,\pi^{s_\ell}$ be the subcurves that intersect the ball $B = B(z,r)$. Let $X = \{s_1,s_3,\ldots,\}$ be the odd indices among the intersecting subcurves. For all $s \in X$ pick any point $\pi_{x_s}$ in $\pi^s \cap B$. Between any points $\pi_{x_s}, \pi_{x_{s'}}$ there must lie an even subcurve $\pi^{s_{2i}}$. As the endpoints of this even subcurve have distance at least $\lambda$, we have $\|\pi_{x_s} - \pi_{x_{s'}}\| \ge \lambda/(\kappa+1)$. Otherwise the even part would not fit into $B(\pi_{x_s},r)\cup B(\pi_{x_{s'}},r)$ which has diameter $(\kappa+1) \|\pi_{x_s} - \pi_{x_{s'}}\|$.
  Hence, the balls $B(\pi_{x_s},\lambda/2(\kappa+1))$ are disjoint and contained in $B(z,r+\lambda)$. A standard packing argument now shows that $\ell \le 2\cdot (r+\lambda)^d / (\lambda/2(\kappa+1))^d = \Oh(\kappa^d (1+r/\lambda)^d)$.
\end{proof}

\begin{lemma}
  For any $\kappa$-bounded curves $\pi,\sigma$ with $n$ vertices in total, $0<\eps\le 1$, we have $N(\pi,\sigma,\eps) = \Oh((\kappa/\sqrt{\eps})^d n)$.
\end{lemma}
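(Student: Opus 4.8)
The plan is to mirror the structure of the proof of \lemref{cpackedcomplex}, replacing the $c$-packedness bound on curve length inside a ball by the counting bound from \lemref{kappaboundedhelp}. Recall that $N(\pi,\sigma,\eps) = |S| + \sum_{(s,t)\in T}(|\pi^s|+|\sigma^t|)$, where $S$ is the set of non-empty free-space cells of $\FS_{\le(1+\eps)\delta}(\pi,\sigma)$ with one coordinate outside a piece, and $T$ is the set of pairs of pieces with initial vertices within distance $(1+\eps)\delta + 2\Lambda$. Fix an arbitrary $\delta>0$; it suffices to bound $N(\pi,\sigma,\delta,\eps)$. Note $\Lambda = \Theta(\sqrt{\eps}\delta)$.

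First I would bound $|S|$. As in \lemref{cpackedcomplex}, for each non-empty cell $C_{i,j} \in S$ one of the segments $\pi_{i..i+1}$, $\sigma_{j..j+1}$ is long (length $\ge \Lambda$); charge $C_{i,j}$ to the shorter one, say $v = \pi_{i..i+1}$. Every charging segment $u=\sigma_{j..j+1}$ has length $\ge \mu := \max\{\|v\|,\Lambda\}$ and, since $C_{i,j}$ is non-empty, intersects the ball $B$ of radius $r := \tfrac12\|v\| + (1+\eps)\delta + \mu$ around the midpoint of $v$. Now instead of the $c$-packedness length bound, I apply \lemref{kappaboundedhelp} to the family of charging segments of $\sigma$ (each of Euclidean endpoint-distance $\ge \mu$, hence $\ge \Lambda$), viewing them as disjoint subcurves of the $\kappa$-bounded curve $\sigma$: the number of such segments intersecting $B$ is $\Oh(\kappa^d(1+r/\mu)^d)$. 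Since $r/\mu \le \tfrac32 + (1+\eps)\delta/\Lambda = \Oh(1/\sqrt{\eps})$, each segment $v$ is charged $\Oh((\kappa/\sqrt{\eps})^d)$ times, so $|S| = \Oh((\kappa/\sqrt{\eps})^d n)$.

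Next I would bound $\Sigma := \sum_{(s,t)\in T}(|\pi^s|+|\sigma^t|)$, again copying the second half of the proof of \lemref{cpackedcomplex}: distribute $\Sigma$ by charging $1$ to each segment of $\pi^s$ and $\sigma^t$ for every $(s,t)\in T$. Fix a segment $v$ of a piece $\pi^s$. For any $t$ with $(s,t)\in T$, the piece $\sigma^t$ has initial vertex within distance $(1+\eps)\delta+2\Lambda$ of $\pi_{p_s}$ and endpoint-distance $\|\sigma_{q_t}-\sigma_{d_t}\| = \Lambda$ (pieces have this property by construction), so $\sigma^t$ intersects the ball $B'$ of radius $r' := (1+\eps)\delta+3\Lambda$ around $\pi_{p_s}$. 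Applying \lemref{kappaboundedhelp} to the pieces of $\sigma$ (disjoint subcurves, each of endpoint-distance $\ge\Lambda$), the number of such $\sigma^t$ is $\Oh(\kappa^d(1+r'/\Lambda)^d) = \Oh((\kappa/\sqrt{\eps})^d)$ since $r'/\Lambda = \Oh(1/\sqrt{\eps})$. Each segment $v$ of $\pi$ is therefore charged $\Oh((\kappa/\sqrt{\eps})^d)$ times (and symmetrically for segments of $\sigma$), giving $\Sigma = \Oh((\kappa/\sqrt{\eps})^d n)$. Summing the two contributions yields $N(\pi,\sigma,\eps) = \Oh((\kappa/\sqrt{\eps})^d n)$.

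The only real subtlety, and the main thing to get right, is making sure the families of subcurves to which \lemref{kappaboundedhelp} is applied genuinely consist of \emph{disjoint} subcurves of $\sigma$ with endpoint-distance bounded below by a uniform $\lambda$: in the first part the charging segments of $\sigma$ are individual edges, hence pairwise disjoint (up to shared endpoints, which is harmless), and each has length $\ge\mu\ge\Lambda$, so we may take $\lambda=\Lambda$; in the second part the $\sigma^t$ are pieces from the fixed partition of $\sigma$, hence disjoint by construction, with endpoint-distance exactly $\Lambda$ (modulo the last piece, which as remarked after the partitioning definition needs only a trivial special handling). Everything else is the same ball-and-radius bookkeeping as in \lemref{cpackedcomplex}, with the factor $\Oh(c/\sqrt\eps)$ simply replaced throughout by $\Oh((\kappa/\sqrt\eps)^d)$.
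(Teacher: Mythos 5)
Your proposal is correct and matches the paper's proof in all essentials: both reuse the charging scheme of \lemref{cpackedcomplex} verbatim and replace the $c$-packed length argument with the counting bound of \lemref{kappaboundedhelp} applied to the relevant disjoint subcurves of $\sigma$ (long segments with $\lambda=\mu$, pieces with $\lambda=\Lambda$). The only difference is cosmetic bookkeeping — you organize the charges by the two terms $|S|$ and $\Sigma$ of $N$, while the paper organizes them by whether the charged segment $v$ belongs to a piece or is a long segment — and your handling of the shared-endpoint and last-piece technicalities is consistent with the paper's.
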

\begin{proof}
  Let $\delta > 0$ and consider the partitionings into long segments and pieces $\pi^1,\ldots,\pi^k$, $\sigma^1,\ldots,\sigma^\ell$ computed by our algorithm. Then $\sigma^t = \sigma_{q_t..d_t}$ satisfies $\|\sigma_{q_t} - \sigma_{d_t}\| \ge \Lambda = \min\{\tfrac 12 \sqrt{\eps}, \tfrac 14\}\cdot \delta$ for all~$t$. 
  We use the same charging scheme as in \lemref{cpackedcomplex}. Consider any segment $v$ of a piece $\pi^s$. The segment~$v$ can be charged by a part $\sigma^t$ which is either a long segment or a piece. In both cases, $\sigma^t$ intersects the ball $B$ centered at the midpoint of $\|v\|$ with radius $r := (1+\eps)\delta + 2 \Lambda$. By \lemref{kappaboundedhelp} with $\lambda := \Lambda$, the number of such charges is bounded by $\Oh((\kappa/\sqrt{\eps})^d)$. 
  
  Now consider any long segment $v$ of $\pi$. The segment $v$ can be charged by segments of $\sigma$ which are longer than $v$. Any such charging gives rise to a long segment $\sigma^t$ intersecting the ball $B$ centered at the midpoint of $v$ of radius $r := (1+\eps)\delta + \tfrac 12 \|v\|$. By \lemref{kappaboundedhelp} with $\lambda := \|v\|$, the number of such charges is bounded by $\Oh(\kappa^d (\tfrac32 + (1+\eps)\delta/\|v\|)^d) = \Oh((\kappa/\sqrt{\eps})^d)$, since $\|v\| \ge \Lambda = \min\{\tfrac 12 \sqrt{\eps}, \tfrac 14\}\cdot \delta$. 
  
  Hence, every segment of $\pi$ is charged $\Oh((\kappa/\sqrt{\eps})^d)$ times; a symmetric statement holds for~$\sigma$.
\end{proof}

Plugging the above lemma into \lemref{approxfromdecider} we obtain the following result. The best previously known runtime was $\Oh((\kappa/\eps)^d n + \kappa^d n \log n)$~\cite{DriemelHP13}.

\begin{theorem}
  For any $0<\eps\le 1$ there is a $(1+\eps)$-approximation for the continuous and discrete \Fr distance on $\kappa$-bounded curves with $n$ vertices in total in time $\Oh((\kappa/\sqrt{\eps})^d n \log^2 1/\eps + \kappa^d n \log n) = \tilde\Oh((\kappa/\sqrt{\eps})^d n)$.
\end{theorem}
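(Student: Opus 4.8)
The plan is to assemble the theorem directly from the three building blocks already established, with essentially no new work. First I would invoke the preceding lemma, which bounds the free-space complexity of $\kappa$-bounded curves by $N(\pi,\sigma,\eps) = \Oh((\kappa/\sqrt{\eps})^d n)$, and plug this into \lemref{runtime}. This immediately yields an approximate decider (namely \algref{freespace}, relying on \lemref{fspp}) with runtime $D(\pi,\sigma,\eps) = \Oh((\kappa/\sqrt{\eps})^d n \log^2 1/\eps)$ whenever $\pi,\sigma$ are $\kappa$-bounded.

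Next I would feed this decider into \lemref{approxfromdecider}, which turns it into a $(1+\eps)$-approximation of runtime $\Oh\big(D(\pi,\sigma,\eps) + D(\pi,\sigma,1)\log n\big)$. The first summand is the $\Oh((\kappa/\sqrt{\eps})^d n\log^2 1/\eps)$ term. For the second summand I would observe that at $\eps = 1$ the $\log^2 1/\eps$ factor is a constant, so by the same free-space complexity bound with $\eps$ set to $1$ we get $D(\pi,\sigma,1) = \Oh(N(\pi,\sigma,1)) = \Oh(\kappa^d n)$, whence the second summand is $\Oh(\kappa^d n\log n)$. Adding the two pieces gives exactly the claimed bound $\Oh((\kappa/\sqrt{\eps})^d n\log^2 1/\eps + \kappa^d n\log n) = \tilde\Oh((\kappa/\sqrt{\eps})^d n)$, and since every ingredient in this chain works verbatim for both the continuous and the discrete \Fr distance (for the discrete case one could even drop a $\log 1/\eps$ factor, but we do not need this here), the statement holds in both settings.

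Because all the heavy lifting sits in the earlier lemmas, I do not anticipate a real obstacle; the only point needing a moment of care is the $D(\pi,\sigma,1)$ term, where one must interpret the hidden $\log 1/\eps$ factors as, say, $\log(2/\eps)$ — or simply absorb them into the $\Oh$ — so that they do not degenerate at $\eps = 1$, and one must check the polynomial-in-$\eps$ hypothesis required by \lemref{approxfromdecider}, which holds since $D(\pi,\sigma,\eps)$ scales as $\eps^{-d/2}\log^2 1/\eps$. For completeness I would also contrast the result with the previously best bound $\Oh((\kappa/\eps)^d n + \kappa^d n\log n)$ of~\cite{DriemelHP13}, which is improved by essentially replacing $\eps$ with $\sqrt{\eps}$, mirroring the speedup obtained above for $c$-packed curves.
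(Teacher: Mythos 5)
Your proposal is correct and follows exactly the paper's route: the paper likewise just plugs the free-space complexity bound $N(\pi,\sigma,\eps)=\Oh((\kappa/\sqrt{\eps})^d n)$ into \lemref{runtime} and then into \lemref{approxfromdecider}. Your extra remarks on the $D(\pi,\sigma,1)$ term and the polynomial-dependence hypothesis are sensible bookkeeping that the paper leaves implicit.
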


\subsection{Solving the free-space region problem on pieces} 
\label{sec:fspp}

It remains to prove \lemref{fspp}. 
Let $(\pi,\sigma,\delta,\eps)$ be an instance of the free-space region problem, where $n := |\pi|$, $m := |\sigma|$, with $\|\pi_1 - \pi_x\|, \|\sigma_1 - \sigma_y\| \le \Lambda_{\eps,\delta} = \Lambda$ for any $x \in[1,n]$, $y \in [1,m]$ (and entry intervals $\tilde R^h_{i,1} \subseteq [i,i+1] \times \{1\}$ for $i \in [1..n)$ and $\tilde R^v_{1,j} \subseteq \{1\} \times [j,j+1]$ for $j \in [1..m)$).
We reduce this instance to the free-space region problem on \emph{one-dimensional separated} curves, i.e., curves $\hat \pi, \hat \sigma$ in $\R$ such that all vertices of $\hat \pi$ lie above 0 and all vertices of $\hat \sigma$ lie below~0.

Since $\pi$ and $\sigma$ stay within distance $\Lambda$ of their initial vertices, if their initial vertices are within distance $\|\pi_1 - \sigma_1\| \le \delta - 2\Lambda$ then all pairs of points in $\pi,\sigma$ are within distance $\delta$. In this case, we find a translation of $\pi$ making $\|\pi_1 - \sigma_1\| = \delta - 2\Lambda$ and all pairwise distances are still at most~$\delta$. This ensures that the curves $\pi,\sigma$ are contained in disjoint balls of radius $\Lambda \le \tfrac 14 \delta$ centered at their initial vertices.

\begin{figure}
\centering
\includegraphics[width=0.7\textwidth]{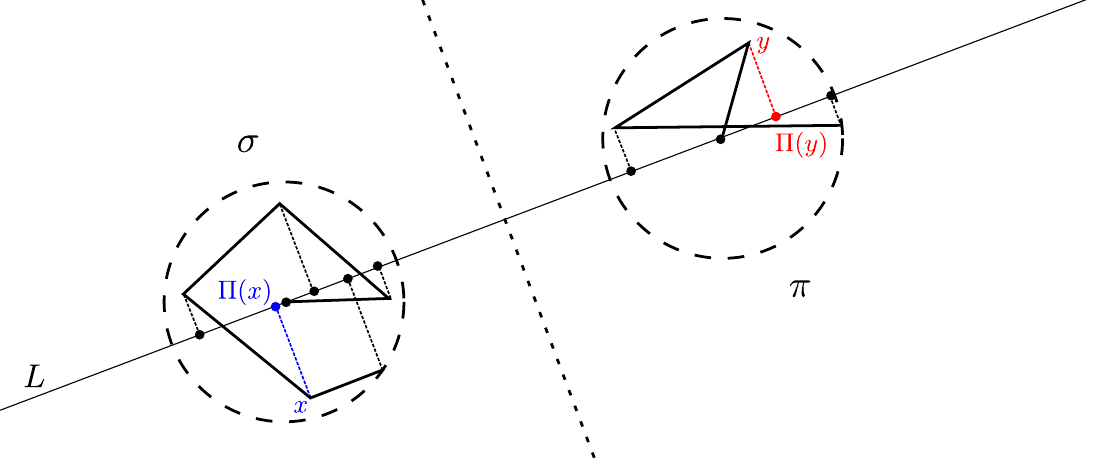}
\caption{Projection of the pieces $\pi,\sigma$ onto the line $L$ through their initial vertices. This yields one-dimensional separated curves $\hat \pi, \hat \sigma$.}
\label{fig:projection}
\end{figure}

Consider the line $L$ through the initial vertices $\pi_1$ and $\sigma_1$. Denote by $\Pi\colon \R^d \to L$ the projection onto~$L$. Now, instead of the pieces $\pi,\sigma$ we consider their projections $\hat\pi := \Pi(\pi) = (\Pi(\pi_{1}),\ldots,\Pi(\pi_{n}))$ and $\hat \sigma := \Pi(\sigma) =  (\Pi(\sigma_{1}),\ldots,\Pi(\sigma_{m}))$, see \figref{projection}. Note that after rotation and translation we can assume that $\hat \pi$ and $\hat \sigma$ lie on $\R \subset \R^d$ and $\hat \pi$ and $\hat \sigma$ are separated by $0 \in \R$ (since $\pi$ and $\sigma$ are contained in disjoined balls centered on $L$). 
Now we solve the free-space region problem on $\hat \pi$, $\hat \sigma$, $\hat \delta := \delta$, and $\hat \eps := \tfrac 12 \eps$ (with the same entry intervals $\tilde R^h_{i,j}, \tilde R^v_{i,j})$).

\begin{lemma}
  Any solution to the the free-space region problem on $(\hat \pi, \hat \sigma, \hat \delta, \hat \eps)$ solves the free-space region problem on $(\pi, \sigma, \delta, \eps)$.
\end{lemma}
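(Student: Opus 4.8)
The plan is to deduce the lemma from the two set inclusions
\[
  \FS_{\le \delta}(\pi,\sigma) \subseteq \FS_{\le \hat\delta}(\hat\pi,\hat\sigma)
  \qquad\text{and}\qquad
  \FS_{\le (1+\hat\eps)\hat\delta}(\hat\pi,\hat\sigma) \subseteq \FS_{\le (1+\eps)\delta}(\pi,\sigma) .
\]
Granting these, fix any solution $\{\tilde R^h_{i,m},\tilde R^v_{n,j}\}$ to the free-space region problem on $(\hat\pi,\hat\sigma,\hat\delta,\hat\eps)$. Projection does not change the number of vertices, so the combinatorial grid and the entry intervals are literally the same in both instances, and I only have to verify properties~(1) and~(2) of Problem~\ref{prob:freespace} for $(\pi,\sigma,\delta,\eps)$. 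For~(1): any monotone path from an entry interval to a point $x$ that lies in $\FS_{\le\delta}(\pi,\sigma)$ also lies in $\FS_{\le\hat\delta}(\hat\pi,\hat\sigma)$ by the first inclusion, so property~(1) of the $(\hat\pi,\hat\sigma)$-solution puts $x$ into an exit interval. For~(2): any point $x$ in an exit interval is, by property~(2) of the $(\hat\pi,\hat\sigma)$-solution, reachable from an entry interval by a monotone path in $\FS_{\le(1+\hat\eps)\hat\delta}(\hat\pi,\hat\sigma)$, and by the second inclusion that same path lies in $\FS_{\le(1+\eps)\delta}(\pi,\sigma)$. Here I only use that a monotone path contained in a free-space is contained in every larger free-space.

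It remains to establish the two inclusions, for which I first argue that the initial translation of $\pi$ (performed precisely when $\|\pi_1-\sigma_1\|\le\delta-2\Lambda$) is harmless: in that regime $\|\pi_p-\sigma_q\| \le \|\pi_p-\pi_1\| + \|\pi_1-\sigma_1\| + \|\sigma_1-\sigma_q\| \le 2\Lambda + (\delta-2\Lambda) = \delta$ both before and after translating, so $\FS_{\le\delta}(\pi,\sigma) = \FS_{\le(1+\eps)\delta}(\pi,\sigma) = [1,n]\times[1,m]$ is unchanged and the set of valid solutions is the same for the translated and the original curve; hence I may assume $\pi,\sigma$ lie in disjoint balls of radius $\Lambda$ centred at $\pi_1,\sigma_1\in L$. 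Now, for vertices $\pi_p,\sigma_q$, split $\pi_p-\sigma_q$ into its component along $L$, which equals $\hat\pi_p-\hat\sigma_q = \Pi(\pi_p)-\Pi(\sigma_q)$, and its component $r_{p,q}$ orthogonal to $L$; the Pythagorean theorem gives $\|\pi_p-\sigma_q\|^2 = \|\hat\pi_p-\hat\sigma_q\|^2 + \|r_{p,q}\|^2$, hence $\|\hat\pi_p-\hat\sigma_q\| \le \|\pi_p-\sigma_q\|$, and with $\hat\delta=\delta$ this is the first inclusion. For the second inclusion, since $\pi_1,\sigma_1\in L$ the vector $\pi_1-\sigma_1$ is parallel to $L$, so $r_{p,q}$ is the orthogonal part of $(\pi_p-\pi_1)-(\sigma_q-\sigma_1)$ and $\|r_{p,q}\| \le \|\pi_p-\pi_1\| + \|\sigma_q-\sigma_1\| \le 2\Lambda \le \sqrt\eps\,\delta$ by the choice $\Lambda\le\tfrac12\sqrt\eps\,\delta$. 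Therefore $\|\hat\pi_p-\hat\sigma_q\| \le (1+\hat\eps)\hat\delta = (1+\tfrac12\eps)\delta$ implies
\[
  \|\pi_p-\sigma_q\|^2 \le \bigl(1+\tfrac12\eps\bigr)^2\delta^2 + \eps\delta^2 = \bigl(1 + 2\eps + \tfrac14\eps^2\bigr)\delta^2 \le (1+\eps)^2\delta^2 ,
\]
which gives the second inclusion.

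The argument is essentially a double application of the Pythagorean theorem, and the only delicate point is the constant bookkeeping in the last display: it goes through precisely because $\Lambda$ was set to $\tfrac12\sqrt\eps\,\delta$, so the squared orthogonal error contributes at most $\eps\delta^2$, and the recursion parameter was halved to $\hat\eps=\tfrac12\eps$, so that squaring $1+\tfrac12\eps$ leaves enough of the $(1+\eps)^2$ budget to absorb this extra $\eps\delta^2$. The secondary things to double-check are the harmlessness of the translation step (handled above by observing both free-spaces become the full grid) and the identity ``$r_{p,q}$ is the orthogonal part of $(\pi_p-\pi_1)-(\sigma_q-\sigma_1)$'', which is immediate since $\Pi$ is the orthogonal projection onto $L$ and $\pi_1,\sigma_1\in L$.
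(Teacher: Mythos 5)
Your proof is correct and follows essentially the same route as the paper: projection only shrinks distances (giving the first free-space inclusion), and the Pythagorean theorem bounds the orthogonal error by $2\Lambda$, so that $\sqrt{(1+\tfrac12\eps)^2+\eps}\le 1+\eps$ yields the second inclusion. Your explicit treatment of the translation step is a small extra care the paper handles outside the lemma, but the core argument and constant bookkeeping are identical.
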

\begin{proof}
  Let $x,y$ be vertices of $\pi,\sigma$, respectively. Clearly, $\|\Pi(x) - \Pi(y) \| \le \| x - y \|$. Hence, any monotone path in $\FS_{\le \delta}(\pi,\sigma)$ yields a monotone path in $\FS_{\le \delta}(\hat \pi, \hat \sigma) = \FS_{\le \hat \delta}(\hat \pi, \hat \sigma)$, so it will be found.
  
  Note that $x$ and $y$ have distance at most $\Lambda$ to $L$. Since $\Pi(x) - \Pi(y)$ and $x-\Pi(x) - (y- \Pi(y))$ are orthogonal, we can use the Pythagorean theorem to obtain 
  $$\|x - y\| = \sqrt{\|\Pi(x) - \Pi(y)\|^2 + \|x-\Pi(x) - (y- \Pi(y))\|^2} \le \sqrt{\|\Pi(x) - \Pi(y)\|^2 + (2\Lambda)^2}.$$ 
  Hence, any monotone path in $\FS_{\le (1+\hat\eps)\hat \delta}(\hat \pi, \hat \sigma)$ yields a monotone path in $\FS_{\le \alpha}(\pi,\sigma)$ with $\alpha \le \sqrt{(1+\hat\eps)^2 \hat \delta^2 + (2\Lambda)^2}$. Plugging in $\hat \delta = \delta$, $\hat \eps = \tfrac 12 \eps$, and $\Lambda = \min\{\tfrac 12 \sqrt{\eps}, \tfrac 14\}\cdot \delta$ we obtain $\alpha \le \sqrt{(1+\tfrac 12 \eps)^2 + \eps} \cdot \delta \le (1+\eps)\, \delta$. Thus, the desired guarantees for the free-space region problem are satisfied.
\end{proof}

We will show the following lemma in \secref{onedim}, concluding the proof of \lemref{fspp}.

\begin{lemma} \label{lem:osfsp}
  The free-space region problem on one-dimensional separated curves 
  can be solved in time $\Oh((n + m) \log^2 1/\eps)$.
\end{lemma}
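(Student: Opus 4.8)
The plan is to reduce the free-space region problem to (many instances of) a one-dimensional \Fr \emph{decision} problem on separated curves, and to show that such a decision can be carried out with a greedy ``walk'' that either completes a traversal or identifies a bottleneck. First I would set up the structure of the free-space diagram for separated curves: since $\hat\pi$ lies above $0$ and $\hat\sigma$ lies below $0$, the signed positions are monotone in a very useful way along each maximal ``ascending'' or ``descending'' run of a curve, so within such a run the free-space cells behave almost like a staircase. I would partition $\hat\pi$ and $\hat\sigma$ into maximal monotone (increasing/decreasing) runs; the crucial point is that on a monotone run the reachability interval on a vertical/horizontal boundary is governed by a single scalar threshold (the distance $|\hat\pi_p - \hat\sigma_q|$), so propagating reachability along a whole run reduces to maintaining a small number of scalar invariants rather than resolving each cell individually.

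Next I would handle the ``exits'' requested by the problem. Recall we are given entry intervals $\tilde R^h_{i,1}$ and $\tilde R^v_{1,j}$ on the lower and left boundary and must output exit intervals $\tilde R^h_{i,m}$ and $\tilde R^v_{n,j}$ on the top and right boundary. I would observe that, because reachability only propagates up and to the right, each exit point is reached from the \emph{lowest} (earliest) feasible entry point that can still reach it; so it suffices, for each boundary segment of $\hat\pi$ (resp.\ $\hat\sigma$), to compute the first exit reachable from a ``frontier'' that is advanced greedily. Concretely I would run a greedy traversal: maintain a current pair of positions $(p,q)$ on $(\hat\pi,\hat\sigma)$ and repeatedly take the \emph{maximal} feasible monotone step — advance along whichever curve keeps $|\hat\pi_p-\hat\sigma_q|\le\hat\delta$ as far as possible. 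On one-dimensional separated curves this greedy step is correct: if a feasible traversal exists from $(p,q)$ to some target, the greedy one does too, because being further along is never worse (the reachable region is ``staircase-convex''). When the greedy walk gets stuck, the subcurves between the last progress point and the obstruction form a \emph{bottleneck} certifying infeasibility for that portion, which translates into the correct (empty or truncated) exit interval — here is where the factor $(1+\eps)$ slack is spent, by rounding thresholds to a grid of $\Oh(\log 1/\eps)$ scales so that near-ties are resolved in favor of feasibility at distance $(1+\eps)\hat\delta$.

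For the running time I would argue that each greedy step can be implemented in $\Oh(\log 1/\eps)$ amortized time using the monotone-run decomposition plus binary search / predecessor queries over the $\Oh(\log 1/\eps)$ distance scales, and that the total number of greedy steps over all entry/exit pairs is $\Oh(n+m)$ because every step permanently advances $p$ or $q$ along its curve and no vertex is revisited (the frontier is monotone). Summing over the two boundary directions and the two curves, and accounting for the extra $\log 1/\eps$ factor from the scale-rounding data structure used twice (once for locating the entry scale, once for the exit scale), yields the claimed $\Oh((n+m)\log^2 1/\eps)$ bound.

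The main obstacle I anticipate is \textbf{not} the plain decision problem (the greedy algorithm for one-dimensional separated \Fr distance is, as the authors note, essentially folklore), but rather the ``region'' version: we must output \emph{all} exits for \emph{all} entry intervals simultaneously, and a naive approach would run one greedy walk per entry segment, giving $\Oh((n+m)^2)$. The hard part will be to show that a single sweep suffices — i.e., that the map from entry points to exit points is monotone and piecewise-nice enough that the frontier can be advanced once, globally, rather than recomputed per entry interval — and to bound the bookkeeping for the $\tilde R$ intervals along partial monotone runs (where an exit may land in the interior of a run, not at a vertex) within the $\Oh(\log 1/\eps)$ per-step budget. This is presumably the content of \secref{onedim}.
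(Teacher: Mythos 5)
Your plan correctly isolates the real difficulty --- computing all exits for all entries in a single sweep rather than one greedy walk per entry --- but two of its load-bearing steps are not right as stated. First, the greedy step you propose (``advance along whichever curve keeps $|\hat\pi_p-\hat\sigma_q|\le\hat\delta$ as far as possible'', justified by ``being further along is never worse'') is incorrect for one-dimensional separated curves. Advancing $p$ to a \emph{higher} vertex of $\hat\pi$ shrinks the set of vertices of $\hat\sigma$ within distance $\delta$, so the farthest feasible point can strictly lose visibility and leave you stuck even though a feasible traversal exists. The correct greedy step advances to a point $p'$ whose visibility set $\vis_{\sigma}(p',q)$ \emph{contains} $\vis_{\sigma}(i,q)$ for all intermediate $i$; proving that such steps preserve reachability, and that getting stuck certifies infeasibility for \emph{all} prefixes (Lemma~\ref{lem:GreedyCorrectness}), is the core of the correctness argument and is absent from your plan. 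Relatedly, the ``monotone run'' decomposition is not what makes the steps fast; the paper implements greedy steps with 2D range-search queries (fractional cascading) on the point sets $\{(i,\pi_i)\}$, exploiting that after rounding there are only $\Oh(1/\eps)$ distinct $y$-values, to get $\Oh(\log 1/\eps)$ per step.

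Second, the single-sweep claim --- that the map from entries to exits is monotone enough that one global frontier advance suffices --- is exactly the statement that needs proof, and you explicitly defer it. The paper establishes it via composition lemmas for traversals of separated curves (union, crossing, and initial-piece lemmas) together with domination arguments: for exits on $\pi$, every entry in $[\hat p, u(\hat p)]$ is dominated by $\hat p$ and all exits reachable from it lie in $[\hat p, u(\hat p)]$, so disjoint calls to the single-entry routine suffice; for exits on $\sigma$, the quantities $Q(p)$ and $P(p)$ and Lemma~\ref{lem:noSkip} play the analogous role. Finally, your accounting of the two logarithmic factors is off: one $\log 1/\eps$ comes from the range-search query inside each greedy step, and the other from a binary search over the $\Oh(1/\eps)$ grid points on each segment of $\pi$ to locate the nontrivial endpoint of each exit interval --- which in turn requires first reducing the continuous region problem to a discrete one via the equality $\dfr=\ddfr$ for separated one-dimensional curves and its stability under subdivision (Lemma~\ref{lem:conttodisc}), a step your plan does not address but which is needed to handle entries and exits at non-vertex points at all.
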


\section{On one-dimensional separated curves} \label{sec:onedim}

In this section, we show how to solve the free-space region problem on one-dimensional separated curves in time $\Oh((n+m) \log^2 1/\eps)$, i.e., we prove \lemref{osfsp}. 

First, in \secref{reductioncontdisc}, we show how to reduce this problem to a \emph{discrete} version, meaning that we can eliminate the continuous \Fr distance and only consider the much simpler discrete \Fr distance (for general curves such a reduction is not known to exist, but we only need it for one-dimensional separated curves). Moreover, we simplify our curves further by rounding the vertices.
This yields a reduction to the following subproblem. Note that we no longer ask for an approximation algorithm.

\begin{problem}[Reduced free-space problem]
 Given one-dimensional separated curves $\pi,\sigma$ with $n,m$ vertices and all vertices being multiples of $\tfrac 13 \eps \delta$, and given an entry set $E \subseteq [1..n]$, compute the exit set $F^\pi \subseteq [1..n]$ consisting of all points $f$ such that $\ddfr(\pi_{e..f},\sigma) \le \delta$ for some $e \in E$ and the exit set $F^\sigma \subseteq [1..m]$ consisting of all points $f$ such that $\ddfr(\pi_{e..n},\sigma_{1..f}) \le \delta$ for some $e \in E$.
\end{problem}

\begin{lemma} \label{lem:redfsp}
  The reduced free-space problem can be solved in time $\Oh((n+m) \log 1/\eps)$.
\end{lemma}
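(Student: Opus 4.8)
The plan is to solve the reduced free-space problem by a careful greedy walk along both curves, maintaining a frontier of reachable configurations. Recall that $\pi$ lies entirely above $0$ and $\sigma$ entirely below $0$, so for a pair $(p,q)$ we have $\|\pi_p - \sigma_q\| = \pi_p + |\sigma_q| = \pi_p - \sigma_q$ (writing $\sigma_q < 0$). The key structural fact this exploits is \emph{monotonicity of the distance in each coordinate separately}: moving to a larger vertex value on $\pi$ or a smaller (more negative) vertex value on $\sigma$ only increases the pairwise distance. In the discrete \Fr traversal, at each time step we may advance $p$, advance $q$, or both; a configuration $(p,q)$ is feasible iff $\pi_p - \sigma_q \le \delta$. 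Because all vertices are multiples of $\tfrac13\eps\delta$, the distinct values taken by the curves fall into $O(1/\eps)$ buckets, which is where the $\log 1/\eps$ factor will ultimately enter.

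First I would set up the right notion of ``state'' to propagate. For a fixed entry point $e$, the set of $(p,q)$ reachable from $(e,1)$ by a monotone discrete path in the free space is, by the one-dimensional separated structure, a \emph{staircase region} that is in fact describable very compactly: for each prefix length $p \ge e$ the set of reachable $q$ is an interval $[\ell_e(p)..r_e(p)]$, and these intervals behave monotonically. The standard greedy observation (the unpublished Korman--Cabello result alluded to in the introduction) is that from a reachable pair one should greedily advance whichever of $p,q$ keeps us feasible and makes the most progress; when both are blocked one has identified a ``bottleneck'' and can conclude infeasibility of that branch. The difficulty is that here we are not deciding a single traversal but must compute, over \emph{all} entry points $e \in E$ simultaneously, the full exit sets $F^\pi$ and $F^\sigma$ — i.e.\ we need the union over $e \in E$ of the reachability staircases, restricted to the right boundary $p = n$ and the top boundary $q = m$.

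So the core of the argument would be: (1) show that the union over $e \in E$ of reachability regions is again a staircase of the same simple form, so it can be represented by two monotone step functions $p \mapsto r(p)$ (the largest reachable $q$ after consuming $p$ vertices of $\pi$, over all valid $e$) and symmetrically; (2) compute these step functions by a single left-to-right sweep that, at each ``jump'' of the involved quantized values, performs a greedy advance — advancing $q$ as far as feasibility $\pi_p - \sigma_q \le \delta$ allows, then advancing $p$, never backtracking, so the total work is $O(n+m)$ plus the cost of the feasibility lookups; (3) handle the entry set $E$ by processing its elements in increasing order and observing that a later entry point's reachability region contains a suffix of the earlier region's, so the sweep can be done incrementally without redoing work. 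The $O((n+m)\log 1/\eps)$ bound then comes from using, for the feasibility queries ``given $\pi_p$, how far can $q$ advance before $\sigma_q$ drops below $\pi_p - \delta$,'' a precomputed structure over the $O(1/\eps)$ distinct quantized $\sigma$-values: for instance, for each of the $O(1/\eps)$ possible threshold values we can binary-search or precompute the relevant index arrays, and each of the $O(n+m)$ greedy steps triggers at most one such query of cost $O(\log 1/\eps)$.

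The main obstacle I anticipate is not the asymptotics but the \emph{correctness of the greedy rule in the region (not single-path) setting}: one must prove that the greedy-maximal frontier computed by the sweep is exactly the union of reachability staircases, i.e.\ that no reachable exit point is missed and no unreachable point is included. The clean way to do this is an exchange argument: given any monotone feasible path from some $(e,1)$ to an exit point, show it can be transformed step by step into the greedy path without leaving the free space and without changing its endpoint, using the separated/monotone distance structure (advancing $q$ earlier rather than later is always safe because $\sigma$ is monotone in the relevant direction along the traversal only where it helps, and the convexity-free one-dimensional geometry means ``reachable $q$ after $p$ steps'' is genuinely an interval with monotone endpoints in $p$). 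Once that invariant is established, the exit sets are read off directly from the final step functions: $f \in F^\pi$ iff $r(f) = m$ (we can reach $(f,m)$ and hence the end of $\sigma$ is consumed), and symmetrically for $F^\sigma$, and the quantization bound gives the running time. I would also need to double-check the boundary bookkeeping — that ``$\ddfr(\pi_{e..f},\sigma) \le \delta$'' corresponds to reaching $(f,m)$ \emph{with $\pi$ possibly not yet fully consumed}, which is why $F^\pi$ is defined via $\pi_{e..f}$ and $\sigma$ (all of it), whereas $F^\sigma$ is via $\pi_{e..n}$ (all the rest) and $\sigma_{1..f}$; keeping these two one-sided reachability sweeps separate but sharing the same greedy machinery is the bookkeeping to get right.
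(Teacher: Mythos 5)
The central structural claim your sweep rests on is false. You assert that, for a fixed entry $e$, the set of pairs $(p,q)$ reachable from $(e,1)$ by a monotone path in the discrete free space is a staircase region, i.e.\ that for each $p$ the reachable $q$ form an interval $[\ell_e(p)..r_e(p)]$ with monotonically varying endpoints. For one-dimensional separated curves the free space restricted to a fixed column $p$ is $\{q : \sigma_q \ge \pi_p - \delta\}$, a superlevel set of $\sigma$, which is in general a \emph{union} of intervals, not a single interval; the reachable set inherits these gaps. Concretely, take $\delta = 10$, $\pi = (1,9,1)$, $\sigma = (-1,-9,-1)$, $e=1$: both $(2,1)$ and $(2,3)$ are reachable (the latter via $(1,1)\to(1,2)\to(1,3)\to(2,3)$), but $(2,2)$ is not even in the free space since $9-(-9)=18>10$. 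Hence the frontier cannot be represented by two monotone step functions, the rule ``advance $q$ as far as feasibility allows, then advance $p$'' is not a sound greedy rule (it can commit to one connected branch of the free space from which the desired exit is unreachable while a path through a different branch exists), and the exchange argument you hope for cannot be carried out as stated.

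This is precisely the difficulty the paper's machinery is built to handle: a greedy step to $p'$ is legal only when the \emph{visibility set} of $p'$ contains those of all skipped indices ($\vis_{\sigma}(i,q)\subseteq\vis_{\sigma}(p',q)$ for all $p\le i\le p'$), a condition strictly stronger than feasibility, and correctness (Lemma~\ref{lem:GreedyCorrectness}) requires a case analysis on ``stuckness'' rather than an interval-monotonicity invariant. Likewise, your step (3) --- that a later entry's reachability region is a suffix of an earlier one's --- is unsubstantiated; the paper instead proves specific domination statements via $u(p)$, $Q(p)$, $P(p)$ and the composition Lemmas~\ref{lem:composition}--\ref{lem:composeInitialPiece} to discard entries safely, and the recursion in $\FindSigmaExits$ \emph{splits} $\sigma$ at greedy steps rather than sweeping it monotonically. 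Your runtime accounting (linearly many greedy advances, each costing $\Oh(\log 1/\eps)$ via the quantization of vertex values) does match the paper's, but without a correct reachability invariant the algorithm whose runtime you are bounding is not established.
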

As a second step, we prove the above lemma. We first consider the special case of $E = \{1\}$ and the problem of deciding whether $n \in F^\pi$, i.e., the lower left corner $(1,1)$ of the free-space is the only entry point and we want to determine whether the upper right corner $(n,m)$ is an exit. This is equivalent to deciding whether the discrete \Fr distance of $\pi,\sigma$ is at most $\delta$, which is known to have a near-linear time algorithm as $\pi,\sigma$ are one-dimensional and separated (see the footnote in the introduction for details). We present a greedy algorithm for this special case in \secref{1d-greedydec}. To extend this to the reduced free-space problem, we prove useful structural properties of one-dimensional separated curves in \secref{1d-composition}. With these, we first solve the problem of determining the exit set $F^\pi$ assuming $E = \{1\}$ in \secref{1d-single}. Then we show for general $E \subseteq [1..n]$ how to compute $F^\pi$ (\secref{1d-multiPi}) and~$F^\sigma$ (\secref{1d-multiSigma}).

\subsection{Reduction from the continuous to the discrete case}
\label{sec:reductioncontdisc}

Essentially we use the following lemma to reduce the \emph{continuous} free-space region problem on one-dimensional separated curves to the \emph{discrete} reduced free-space problem.

\begin{lemma} \label{lem:conttodisc}
  Let $\pi,\sigma$ be one-dimensional separated curves with subcurves $\pi_{p..b}, \sigma_{q..d}$. Then we have $\dfr(\pi_{p..b},\sigma_{q..d}) = \ddfr(\pi_{p..b},\sigma_{q..d})$. In particular, assume that we subdivide any segments of $\pi,\sigma$ by adding new vertices, which yields new curves $\pi',\sigma'$ with subcurves $\pi'_{p'..b'},\sigma'_{q'..d'}$ that are subdivisions of $\pi_{p..b},\sigma_{q..d}$. Then we have $\ddfr(\pi'_{p'..b'},\sigma'_{q'..d'}) = \ddfr(\pi_{p..b},\sigma_{q..d}) = \dfr(\pi_{p..b},\sigma_{q..d})$.
\end{lemma}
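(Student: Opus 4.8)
The plan is to prove the central equality $\dfr(\pi_{p..b},\sigma_{q..d}) = \ddfr(\pi_{p..b},\sigma_{q..d})$; once this is established, the "in particular" part follows immediately, since subdividing segments changes neither the continuous \Fr distance (adding collinear vertices does not change the curve as a continuous object) nor — by the central equality applied to the subdivided curves — the discrete \Fr distance. So the whole content is in showing that for one-dimensional separated curves the continuous and discrete \Fr distances coincide. The inequality $\dfr \le \ddfr$ is trivial, since any discrete traversal is in particular a continuous traversal (a staircase path is monotone). The work is the reverse inequality $\ddfr \le \dfr$.

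For the reverse inequality I would take a near-optimal continuous traversal $\phi = (\phi_1,\phi_2)$ of width $\le \delta := \dfr(\pi_{p..b},\sigma_{q..d}) + \eta$ for arbitrarily small $\eta > 0$, and show it can be replaced by a discrete (staircase) traversal of no larger width. Equivalently, I would argue directly in the free-space diagram: it suffices to show that if there is a monotone path from the lower-left to the upper-right corner of $\FS_{\le\delta}(\pi_{p..b},\sigma_{q..d})$, then there is a monotone \emph{staircase} path through the grid of vertices, i.e. $(p,q)$ and $(b,d)$ are connected by a sequence of integer lattice points each consecutive pair differing by a unit step in one coordinate, all lying in the free-space. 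The key geometric fact, special to the one-dimensional separated setting, is that for fixed $\pi_i$ (a vertex of $\pi$, hence $\pi_i \ge 0$) and $\sigma_j \le 0$, the distance $|\pi_i - \sigma_j| = \pi_i - \sigma_j$ is \emph{monotone} in each of $\pi_i$ and $\sigma_j$ separately: moving the $\pi$-point toward $0$ or the $\sigma$-point toward $0$ only decreases the distance. Consequently, along any segment of $\pi$ parametrized by $t \mapsto \pi_{i+t}$, the distance $|\pi_{i+t} - \sigma_j|$ is a convex (indeed piecewise-linear, unimodal) function of $t$, maximized at one of the two endpoints $t \in \{0,1\}$; the same holds with the roles swapped. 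Therefore, if a point $(\phi_1(s),\phi_2(s))$ lies in the free-space, so does at least one of the four "corner" lattice points obtained by rounding each coordinate up or down — and more carefully, one can round each coordinate to the \emph{nearer} endpoint of its current segment in a way consistent with monotonicity of the path.

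Concretely, the cleanest route: fix an optimal continuous traversal and, walking along it, maintain a current lattice point $(i,j)$ that is componentwise $\le$ the current path position. Whenever the path is about to increase, say, its first coordinate past an integer $i+1$, first check whether $(i+1,j)$ lies in the free space; if the path's first coordinate is in $[i,i+1]$ and its second coordinate is in $[j,j+1]$, then by the two-sided monotonicity of the distance, $|\pi_{p+i} - \sigma_{q+j}|$, $|\pi_{p+i+1}-\sigma_{q+j}|$, $|\pi_{p+i}-\sigma_{q+j+1}|$, and $|\pi_{p+i+1}-\sigma_{q+j+1}|$ cannot all exceed $\delta$ if some interior point of the cell is $\le\delta$ — in fact one shows the reachable portion of each cell boundary contains a lattice point, because on a one-dimensional separated cell the free-space region is a "staircase-convex" set whose boundary passes through grid points. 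Hence one can advance the staircase one unit step at a time, always staying inside $\FS_{\le\delta}$, from $(p,q)$ to $(b,d)$. Taking $\eta \to 0$ gives $\ddfr \le \dfr$.

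The main obstacle I anticipate is handling the boundary/degenerate cases carefully: when $\pi$ or $\sigma$ has vertices exactly at $0$ is ruled out by separation, but one must be careful that "separated" here means strict separation (all $\pi$-vertices $> 0$, all $\sigma$-vertices $< 0$), and also that edges of $\pi$ may cross nowhere near $0$ so the per-segment distance function is genuinely monotone-then-monotone (unimodal with the max at an endpoint) rather than having an interior maximum — this is exactly where one-dimensionality is used, since in $\R$ the distance to a fixed point on the opposite side of $0$ is affine on each segment that stays on one side of $0$, and the separation guarantees $\pi$-segments stay $\ge \min$ of their endpoints $> 0$. I would also need to make the "advance the staircase" argument rigorous, most easily by an induction on the number of cells the continuous path traverses, using the free-space cell problem (monotone reachability within a single cell, stated in Section~\ref{sec:preliminaries}) together with the observation that in a one-dimensional separated cell, if a boundary interval $R^h_{i,j}$ or $R^v_{i,j}$ is nonempty then it contains an integer endpoint. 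That last observation is the crisp technical lemma that makes everything go through.
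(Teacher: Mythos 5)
Your core idea is the right one and is exactly the paper's: since all of $\pi$ lies above $0$ and all of $\sigma$ below $0$, the distance $\pi_x-\sigma_y$ only decreases when $\pi_x$ is replaced by the smaller of the two endpoint values of its current segment and $\sigma_y$ by the larger of its two endpoint values; hence a continuous traversal can be rounded coordinate-wise to a discrete one of no larger width, and the subdivision claim follows because subdivision preserves the continuous distance. The paper carries this out directly on the parametrization: define $\phi_1'(t)$ to be the endpoint $j\in\{i,i+1\}$ of the segment containing $\phi_1(t)$ that minimizes $\pi_j$ (symmetrically maximize $\sigma_j$ for $\phi_2'$), and check that $\phi_1'$ is still a non-decreasing surjection onto $[1..n]$ --- which holds because the rounding target changes only when the path crosses a vertex. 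That two-line verification is all that is needed, and your first formulation (``round each coordinate to the endpoint of its current segment in a way consistent with monotonicity'') is essentially this proof.

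The ``concretely, the cleanest route'' elaboration, however, would not go through as written, for two reasons. First, the invariant of maintaining a lattice point that is \emph{componentwise $\le$} the current path position is incompatible with the rounding you actually need: the distance-minimizing corner of the current cell may well be the upper-right corner (when $\pi$ is decreasing and $\sigma$ is increasing across the cell), in which case the lower-left corner need not lie in the free space at all; the discrete traversal must be allowed to run \emph{ahead} of the continuous one by up to one vertex in each coordinate. Second, your ``crisp technical lemma'' --- that every nonempty \emph{reachable} boundary interval $R^h_{i,j}$ or $R^v_{i,j}$ contains a lattice point --- is not an elementary observation about a single cell. It is true for the \emph{free} intervals $I^h_{i,j}$ (an affine function's sublevel set on a segment contains the minimizing endpoint), but reachability constraints propagate from below and to the left and can exclude that endpoint from $R^h_{i,j}$; the justification you give (``the free-space region is staircase-convex'') only addresses the free space, not the reachable set. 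The reachable-interval statement is in fact essentially equivalent to the lemma being proved, so establishing it would require the full rounding argument (or an induction over cells amounting to it) rather than serving as an independent stepping stone. I would therefore drop the free-space-diagram staircase machinery and write out the direct rounding of $(\phi_1,\phi_2)$, whose only nontrivial point is the monotonicity of the rounded parametrization. (One further small point: the paper's notion of separated is non-strict, $\pi_i\ge 0\ge\sigma_j$, and the argument does not need strict separation.)
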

\begin{proof}
  It is known that $\dfr(\pi,\sigma) \le \ddfr(\pi,\sigma)$ holds for all curves $\pi,\sigma$. Thus, we only need to show that any continuous traversal $\phi = (\phi_1,\phi_2)$ of $\pi_{p..b},\sigma_{q..d}$ can be transformed into a discrete traversal with the same width. We adapt $\phi$ as follows. For any point in time $t \in [0,1]$, if $\phi_1(t)$ is at a vertex of~$\pi$ we set $\phi'_1(t) := \phi_1(t)$. Otherwise $\phi_1(t)$ is in the interior of a segment $\pi_{i..i+1}$ of $\pi$. Let $j \in \{i,i+1\}$ minimize $\pi_j$. We set $\phi_1'(t) := j$. Observe that $\phi_1'$ indeed is a non-decreasing function from $[0,1]$ onto $[1..n]$.
  A similar construction, where we round to the value $j \in \{i,i+1\}$ maximizing~$\sigma_j$, yields $\phi_2'$ and we obtain a discrete traversal $\phi' = (\phi'_1,\phi'_2)$. The width of $\phi'$ is at most the width of $\phi$ since we rounded in the right way, i.e., we have $\pi(\phi'_1(t)) \le \pi(\phi_1(t))$ and $\sigma(\phi'_2(t)) \ge \sigma(\phi_2(t))$ so that $\|\pi(\phi'_1(t)) - \sigma(\phi'_2(t))\| \le \|\pi(\phi_1(t)) - \sigma(\phi_2(t))\|$ for all $t \in [0,1]$.
  
  Note that the discrete \Fr distance is in general not preserved under subdivision of segments, but the continuous \Fr distance is.
  Thus, the second statement follows from the first one, $\ddfr(\pi_{p..b},\sigma_{q..d}) = \dfr(\pi_{p..b},\sigma_{q..d}) = \dfr(\pi'_{p'..b'},\sigma'_{q'..d'}) = \ddfr(\pi'_{p'..b'},\sigma'_{q'..d'})$.
\end{proof}

The above lemma allows the following trick. Consider any finite sets $E \subseteq [1,n]$ and $F \subseteq [1,n]$. Add $\pi_x$ as a vertex to $\pi$ for any $x \in E \cup F$, with slight abuse of notation we say that $\pi$ now has vertices at $\pi_i$, $i \in [1..n]$, and $\pi_x$, $x \in E \cup F$. Mark the vertices $\pi_x$, $x \in E$, as entries. Now solve the reduced free-space problem instance $(\pi,\sigma,E)$. This yields the set $F^\pi$ of all values $f \in F$ such that there is an $e \in E$ with $\ddfr(\pi_{e..f},\sigma) \le \delta$, which by \lemref{conttodisc} is equivalent to $\dfr(\pi_{e..f},\sigma) \le \delta$. Thus, we computed all exit points in $F$ given entry points in $E$, with respect to the continuous \Fr distance. This is already near to a solution of the free-space region problem, however, we have to cope with entry and exit \emph{intervals}. 

For the full reduction we need two more arguments. First, we can replace all non-empty input intervals $\tilde R^h_{i,1}$ by the leftmost point $(y_i,1)$ in $\tilde R^h_{i,1} \cap \FS_{\le \delta}(\pi,\sigma)$, specifically, we show that any traversal starting in a point in $\tilde R^h_{i,1}$ can be transformed into a traversal starting in $(y_i,1)$. Thus, we add $\pi_{y_i}$ as a vertex and mark it as an entry to obtain a finite and small set of entry points. Second, for any segment $\pi_{i..i+1}$ we call a point $f \in [i,i+1]$ \emph{reachable} if there is an $e \in E$ with $\dfr(\pi_{e..f},\sigma) \le \delta$.  We show that if $f$ is reachable then essentially all points $f' \in [i,i+1]$ with $\pi_{f'} \le \pi_f$ are also reachable. Thus, the set of reachable points is an interval with one trivial endpoint, and we only need to search for the other endpoint of the interval, which can be done by binary search. Moreover, we can parallelize all these binary searches, as solving one reduced free-space problem can answer for every segment of $\pi$ whether a particular point on this segment is reachable (after adding this point as a vertex).
To make these binary searches finite, we round all vertices of $\pi$ and $\sigma$ to multiples of $\gamma := \tfrac 13 \eps \delta$ and only search for exit points that are multiples of~$\gamma$. This is allowed since the free-space region problem only asks for an approximate answer. A similar procedure yields the exits on $\sigma$ reachable from entries on $\pi$, and determining the exits reachable from entries on $\sigma$ is a symmetric problem. Since for the binary searches we reduce to $\Oh(\log 1/\eps)$ instances of the reduced free-space problem, \lemref{osfsp} follows from \lemref{redfsp}.

In the following we present the details of this approach.
Let $\pi,\sigma$ be one-dimensional separated curves, i.e., they are contained in $\R$, all vertices of $\pi$ lie above 0, and all vertices of $\sigma$ lie below 0. Let $n = |\pi|$, $m = |\sigma|$, $\delta > 0$ and $0< \eps \le 1$. Consider entry intervals $\tilde R^h_{i,1} \subseteq [i,i+1] \times \{1\}$ for $i \in [1..n)$ and $\tilde R^v_{1,j} \subseteq \{1\} \times [j,j+1]$ for $j \in [1..m)$. 
We reduce this instance of the free-space region problem to $\Oh(\log 1/\eps)$ instances of the reduced free-space problem.

First we change $\pi,\sigma$ as follows.
(1) Let $Z \subset \R$ be the set of all integral multiples\footnote{Without loss of generality we assume $1/\eps \in \N$ so that $\delta \in Z$.} of $\gamma := \tfrac 13 \eps \delta$.
We round all vertices of $\pi,\sigma$ to values in~$Z$, where we round down everything in $\pi$ and round up in $\sigma$, yielding curves $\pi',\sigma'$. (2) Let $I \subseteq [1..n)$ be the set of all $i$ with nonempty $\tilde R^h_{i,1} \cap \FS_{\le \delta}(\pi',\sigma')$. For any $i \in I$ let $(y_i,1)$ be the leftmost point in $\tilde R^h_{i,1} \cap \FS_{\le \delta}(\pi',\sigma')$ and note that $\pi'_{y_i}$ is also a multiple of $\gamma$. Add $\pi'_{y_i}$ as a vertex to $\pi'$ and mark it as an entry. With slight abuse of notation, we say that $\pi'$ now has its vertices at $\pi'_i$, $i \in [1..n]$ and $\pi'_{y_i}$, $i \in I$. We let $E = \{y_i \mid i \in I\}$ be the indices of the entry vertices. Note that $(\pi',\sigma',E)$ can be computed in time $\Oh(n+m)$.

For every $i \in [1..n)$ consider the multiples of $\gamma$ on $\pi'_{i..i+1}$, i.e., $S_i := \{x \in [i,i+1] \mid \pi'_x \in Z\}$. 
Note that $S_i$ forms an arithmetic progression, specifically $S_i = \{i,i+1/t_i,i+2/t_i,\ldots,i+1\}$ for some $t_i \in \N$, since $\pi'_i,\pi'_{i+1}$ are in $Z$ and $\pi'_x$ is a linear function in $x$. Thus, $S_i$ and subsequences of $S_i$ can be handled efficiently, we omit these details in the following. 
We want to determine the set $F_i$ of all $f \in S_i$ such that there is an $e \in E$ with $\ddfr(\pi'_{e..f},\sigma') \le \delta$. 
We first argue that $F_i$ is of an easy form.

\begin{lemma} \label{lem:Fi}
  If $F_i$ is non-empty then we have $F_i = [a,b] \cap Z$ for some $a,b \in S_i$ with $\{a,b\} \cap \{i,y_i,i+1\} \ne \emptyset$ (or $\{a,b\} \cap \{i,i+1\} \ne \emptyset$ if $y_i$ does not exist).
\end{lemma}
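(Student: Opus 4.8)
The plan is to exploit the one-dimensional separated structure. Since all vertices of $\pi'$ lie above $0$ and all vertices of $\sigma'$ below $0$, we have $\|\pi'_p - \sigma'_q\| = \pi'_p - \sigma'_q$ for all $p,q$ (including non-integer parameters), so a point $(p,q)$ lies in $\FS_{\le\delta}(\pi',\sigma')$ exactly when $\pi'_p \le \delta + \sigma'_q$; in particular feasibility only gets easier as $\pi'_p$ decreases. By \lemref{conttodisc}, $\ddfr = \dfr$ on all subcurves, so $f \in F_i$ iff there is an $e \in E$ and a monotone path in $\FS_{\le\delta}(\pi',\sigma')$ from $(e,1)$ to $(f,m)$; I argue entirely in terms of such paths. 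As $\pi'$ is affine on $[i,i+1]$, I split into the cases $\pi'$ weakly increasing and $\pi'$ weakly decreasing on $[i,i+1]$ (the constant case being degenerate and covered by either).

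In the decreasing case the right endpoint of $F_i$ will be the trivial one. I first show $F_i$ is upward closed in $S_i$: if $f_1 \in F_i$, $f_2 \in S_i$ and $f_1 \le f_2$, then $f_2 \in F_i$. Take a monotone path to $(f_1,m)$ and append the horizontal segment to $(f_2,m)$; on $[f_1,f_2]$ the function $\pi'$ is non-increasing, so $\pi'_x \le \pi'_{f_1} \le \delta + \sigma'_m$ (the last inequality since $(f_1,m)\in\FS_{\le\delta}$), and the extended path stays in $\FS_{\le\delta}$. Hence $F_i = [\min F_i, i+1] \cap Z$, and its right endpoint $i+1$ is admissible.

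The increasing case is the heart of the proof; there the left endpoint will be trivial. The key tool is the following \emph{rerouting claim}: if there is a monotone path from $(e,1)$ to $(a,m)$ and $e \le f' \le a$ with $f' \in [i,i+1]$, then there is a monotone path from $(e,1)$ to $(f',m)$. The path meets column $f'$ at some row $q$; from that point on its columns lie in $[f',a] \subseteq [i,i+1]$, where $\pi' \ge \pi'_{f'}$; hence for every row $r \in [q,m]$ the free-space inequality of the path's point in row $r$ implies that of $(f',r)$, so $\{f'\}\times[q,m] \subseteq \FS_{\le\delta}$ and we may follow the path up to $(f',q)$ and then go straight up to $(f',m)$. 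Combining this with a rightward extension as in the previous paragraph (now bounding $\pi'$ on $[f_1,f_2]$ by $\pi'_{f_2}$, using increasingness) gives $F_i = \{f \in S_i : f \ge \min F_i \text{ and } (f,m) \in \FS_{\le\delta}\}$; since $\{f \in [i,i+1] : (f,m)\in\FS_{\le\delta}\}$ is an initial sub-interval of $[i,i+1]$ (again by increasingness), this yields $F_i = [a,b]\cap Z$ with $a = \min F_i$ and $b = \max F_i$.

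It remains to prove $a \in \{i,y_i,i+1\}$ in the increasing case. Assume not, so $a \in (i,i+1)$ and $a \ne y_i$; let $f'$ be the predecessor of $a$ in $S_i$ and fix $e_0 \in E$ with a monotone path from $(e_0,1)$ to $(a,m)$, so $e_0 \le a$. If $e_0 \le f'$, the rerouting claim gives $f' \in F_i$, contradicting $a = \min F_i$. Otherwise $e_0 \in (f',a]$; it cannot equal $a$, since then $a$ would be an entry lying in $(i,i+1)$ — but the only entry in $(i,i+1)$ is $y_i$, as each entry $y_j$ lies in $[j,j+1)$, contradicting $a \ne y_i$. Hence $e_0 \in (f',a) \subseteq (i,i+1)$, so $e_0 = y_i$; but $\pi'_{y_i}$ is a multiple of $\gamma$, so $y_i \in S_i$, contradicting that $f'$ and $a$ are consecutive in $S_i$. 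When $y_i$ does not exist, the last sub-case is vacuous and the same argument gives $a \in \{i,i+1\}$. The main obstacle is precisely this increasing case: spotting the rerouting claim, and realizing that one-dimensionality forbids reaching a non-entry interior point of an increasing segment at row $m$ without also reaching every $S_i$-point down to the nearest entry (or to $i$) — together with the already-recorded fact that $y_i$, when it exists, lies in $S_i$.
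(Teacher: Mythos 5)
Your proof is correct and rests on essentially the same two traversal surgeries as the paper's: rerouting a traversal to stop (and then climb) at a point of smaller $\pi'$-value (your rerouting claim), and appending a walk along row $m$ to a point of smaller $\pi'$-value. You merely organize the argument by the sign of the slope of $\pi'$ on $[i,i+1]$ and spell out explicitly why the only problematic entry location is $y_i$ (which lies in $S_i$), whereas the paper compresses all of this into the single monotonicity condition ``$\pi'_{f'}\le\pi'_f$ and $y_i\notin(f',f]$''.
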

\begin{proof}
  We show that if any $f \in S_i$ is reachable, i.e., there is an $e \in E$ with $\ddfr(\pi'_{e..f},\sigma') \le \delta$, then any $f' \in S_i$ with $\pi'_{f'} \le \pi'_f$ and $y_i \not\in (f',f]$ is also reachable. This proves the claim. Let $\phi$ be any traversal of $\pi'_{e..f},\sigma'$ of width at most $\delta$. Note that $e \le f'$, since $y_i \not\in (f',f]$ and $y_i$ is the only entry on the segment containing $f$ and $f'$. If $f' \le f$ then we change $\phi$ to stop at $\pi'_{f'}$ once it arrives at this point, and we traverse the remaining part of $\sigma$ staying fixed at $\pi'_{f'}$. Since $\pi'_{f'} \le \pi'_f$ this does not increase the width of the traversal and shows that $f'$ is also reachable. If $f' > f$ then we append a traversal to $\phi$ that stays fixed at $\sigma'_m$ but walks in $\pi'$ from $\pi'_f$ to $\pi'_{f'}$. Again since $\pi'_{f'} \le \pi'_f$ this does not increase the width of the traversal and shows that $f'$ is also reachable.
\end{proof}

Note that by solving the reduced free-space problem on $(\pi',\sigma',E)$ we decide for each $f \in [n] \cup \{y_i \mid i \in I\}$ whether there is an $e \in E$ with $\ddfr(\pi'_{e..f},\sigma')\le \delta$. By the above lemma, this yields one of the endpoints of the interval $F_i$, say $a$, and we only have to determine the other endpoint, say $b$. In the special case $\pi'_i = \pi'_{i+1}$ we even determined both endpoints already, so from now on we can assume $\pi'_i \ne \pi'_{i+1}$ so that $|S_i|< \infty$.
We search for the other endpoint of $F_i$ using a binary search over $S_i$. To test whether any $z \in S_i$  is in $F_i$, we add $\pi'_z$ as a vertex of $\pi'$ and solve the reduced free-space problem on $(\pi',\sigma',E)$. If $z$ is in the output set $F^\pi$ then it is in $F_i$. 

Note that any vertex $\pi'_x > \delta$ on $\pi'$ does not have any point of $\sigma$ within distance $\delta$, which is preserved by setting $\pi'_x := 2\delta$. Thus, we can assume that $\pi'$ takes values in $[0,2\delta]$, which implies $|S_i| \le \Oh(1/\eps)$, so that our binary search needs $\Oh(\log 1/\eps)$ steps. 
Moreover, note that we can parallelize these binary searches, since we can add a vertex $z_i$ on every subcurve $\pi'_{i..i+1}$, so that one call to the reduced free-space problem determines for every $z_i$ whether it is reachable. Here we use \lemref{conttodisc}, since we need that further subdivision of some segments of $\pi'$ does not change the discrete \Fr distance. Note that since we add $\Oh(n)$ vertices to $\pi'$ and since we need $\Oh(\log 1/\eps)$ steps of binary search, \lemref{redfsp} implies a total runtime of $\Oh((n+m) \log^2 1/\eps)$. 

We thus computed $F_i = [a,b] \cap Z$ with $a,b \in S_i$. We extend $F_i$ slightly to $F_i' = [a',b'] \cap Z$ by including the neighboring elements of $a$ and $b$ in $S_i$. Finally, we set $\tilde R^h_{i,m}(\pi) := [a',b'] \times \{m\}$. A similar procedure adding entries $E^{\sigma}$ on $\sigma'$ and doing a binary search over exits on $\pi'$ yields an interval $\tilde R^h_{i,m}(\sigma)$ consisting of points $(f,m) \in [i,i+1] \times \{m\}$ such that there is an $e \in E^\sigma$ with $\ddfr(\pi'_{1..f},\sigma'_{e..m}) \le \delta$. We set $\tilde R^h_{i,m} := \tilde R^h_{i,m}(\pi) \cup \tilde R^h_{i,m}(\sigma)$, which will be again an interval (which follows from the proof of \lemref{Fi}). A symmetric algorithm determines $\tilde R^v_{n,j}$ for $j \in [1..m)$.

We show that we correctly solve the given free-space region problem instance.

\begin{lemma}
  The computed intervals are a valid solution to the given free-space region instance.
\end{lemma}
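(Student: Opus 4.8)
The plan is to verify the two conditions of the free-space region problem (Problem~\ref{prob:freespace}) for the computed intervals $\tilde R^h_{i,m}$ (and, by a mirrored argument, for $\tilde R^v_{n,j}$), translating every statement about $\dfr$ on subcurves of $\pi',\sigma'$ into one about $\ddfr$ via \lemref{conttodisc}, and controlling the error through the single identity $\delta+3\gamma=(1+\eps)\delta$ with $\gamma=\tfrac13\eps\delta$. The first step is to record the one metric fact about the rounding $\pi\mapsto\pi'$, $\sigma\mapsto\sigma'$: since the curves are separated, rounding $\pi$ down and $\sigma$ up to multiples of $\gamma$ can only shrink pairwise distances, and by at most $2\gamma$, so $\|\pi'_x-\sigma'_y\|\le\|\pi_x-\sigma_y\|\le\|\pi'_x-\sigma'_y\|+2\gamma$ for all vertices $x,y$ (and still after any further subdivision); hence $\FS_{\le r}(\pi,\sigma)\subseteq\FS_{\le r}(\pi',\sigma')\subseteq\FS_{\le r+2\gamma}(\pi,\sigma)$ for every $r$. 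Because $\delta+3\gamma=(1+\eps)\delta$, any monotone path that lives in $\FS_{\le\delta+\gamma}(\pi',\sigma')$ is automatically one in $\FS_{\le(1+\eps)\delta}(\pi,\sigma)$; this $2\gamma$ of rounding plus one extra grid step of $\gamma$ is the entire error budget and it will be spent exactly. By \lemref{conttodisc} and its subdivision-invariance, a monotone path in $\FS_{\le\delta}(\pi',\sigma')$ from $(e,1)$ to $(f,m)$ exists iff $\ddfr(\pi'_{e..f},\sigma')\le\delta$, so I may identify reachability in $\FS_{\le\delta}(\pi',\sigma')$ with membership in the output of the reduced free-space problem, which by \lemref{redfsp} and the $\Oh(\log1/\eps)$ parallel binary searches yields exactly the sets $F_i=[a,b]\cap Z$ from which $\tilde R^h_{i,m}(\pi)=[a',b']\times\{m\}$ is built, analogously $\tilde R^h_{i,m}(\sigma)$ from the entries $E^\sigma$ on $\sigma'$, and $\tilde R^h_{i,m}=\tilde R^h_{i,m}(\pi)\cup\tilde R^h_{i,m}(\sigma)$.

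For condition (1), let $(f,m)$ with $f\in[i,i+1]$ be reachable from the entry intervals by a monotone path $P$ in $\FS_{\le\delta}(\pi,\sigma)$; by symmetry assume $P$ starts in a lower entry interval $\tilde R^h_{i_0,1}$ and aim at $(f,m)\in\tilde R^h_{i,m}(\pi)$. By the inclusion above $P$ also lies in $\FS_{\le\delta}(\pi',\sigma')$, so its start $(z,1)$ lies in $\tilde R^h_{i_0,1}\cap\FS_{\le\delta}(\pi',\sigma')$; hence $i_0\in I$ and $y_{i_0}$ is defined, and since that intersection is an interval containing both $(y_{i_0},1)$ and $(z,1)$ we may prepend the horizontal segment from $(y_{i_0},1)$ to $(z,1)$ and stay in $\FS_{\le\delta}(\pi',\sigma')$; thus $\ddfr(\pi'_{y_{i_0}..f},\sigma')\le\delta$ after adding $f$ as a vertex. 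Now let $f^\ast\in S_i$ be the unique point (using that $\pi'$ is monotone on the segment, the constant case having been treated separately) with $\pi'_{f^\ast}$ the largest multiple of $\gamma$ not exceeding $\pi'_f$; a short case analysis on the direction of $\pi'$, using $\pi'_{y_{i_0}}\in Z$ when $i_0=i$, shows $y_{i_0}$ does not lie strictly between $f^\ast$ and $f$, so the monotonicity statement from the proof of \lemref{Fi} applied to the reachable point $f$ gives $f^\ast\in F_i$, i.e.\ $f^\ast\in[a,b]$. Since $f$ is within one step of the arithmetic progression $S_i$ from $f^\ast$, we conclude $f\in[a',b']$, hence $(f,m)\in\tilde R^h_{i,m}$; a start on the left boundary gives $(f,m)\in\tilde R^h_{i,m}(\sigma)$ symmetrically.

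For condition (2), fix $(f,m)\in\tilde R^h_{i,m}$, say $(f,m)\in\tilde R^h_{i,m}(\pi)$, so $f\in[a',b']$. If $f\in F_i$ then there is a monotone path from an entry vertex $(y_{i_0},1)\in\tilde R^h_{i_0,1}$ to $(f,m)$ in $\FS_{\le\delta}(\pi',\sigma')\subseteq\FS_{\le(1+\eps)\delta}(\pi,\sigma)$. Otherwise $f$ is the $S_i$-neighbour, one grid step in value, of the endpoint $g$ of $F_i$ located by the binary search (the other, trivial endpoint — one of $i$, $y_i$, $i+1$ by \lemref{Fi} — is determined exactly and is not padded), and there is a monotone path $Q$ from an entry to $(g,m)$ in $\FS_{\le\delta}(\pi',\sigma')$. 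If $f>g$ I extend $Q$ by the monotone move that fixes the second coordinate at $m$ and advances the first from $g$ to $f$; by convexity of $c\mapsto\|\pi'_c-\sigma'_m\|$ on the segment, every intermediate point stays within $\max\{\|\pi'_g-\sigma'_m\|,\|\pi'_f-\sigma'_m\|\}\le\delta+\gamma$ of $\sigma'_m$, so the extension lies in $\FS_{\le\delta+\gamma}(\pi',\sigma')\subseteq\FS_{\le(1+\eps)\delta}(\pi,\sigma)$. If $f<g$ I instead truncate $Q$ at the point $(f,q^\ast)$ where it first reaches column $f$ (it does, as its start column is $\le f$) and proceed vertically to $(f,m)$; the tail of $Q$ from $(f,q^\ast)$ to $(g,m)$ keeps its first coordinate in $[f,g]$, whose $\pi'$-values span a window of at most $\gamma$, so for every row $q\in[q^\ast,m]$ the tail visits a point within distance $\delta$ of $\sigma'_q$ at a value within $\gamma$ of $\pi'_f$, whence the vertical segment lies in $\FS_{\le\delta+\gamma}(\pi',\sigma')\subseteq\FS_{\le(1+\eps)\delta}(\pi,\sigma)$. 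Finally I would note that each of $\tilde R^h_{i,m}(\pi)$ and $\tilde R^h_{i,m}(\sigma)$ is a one-sided interval in the sense of the proof of \lemref{Fi}, which forces their union to be an interval, and that the intervals $\tilde R^v_{n,j}$ are handled verbatim. The step I expect to be the main obstacle is exactly this last bookkeeping: reconciling the discretized, grid- and binary-search-based computation with the \emph{two-sided} guarantee that the problem demands on genuine intervals — padding enough to miss no truly reachable point yet little enough that every padded point remains reachable within $(1+\eps)\delta$ — which is what forces the separate treatment of the binary-searched endpoint (where the witnessing path crosses the new column) and the exact trivial endpoint (which needs no padding).
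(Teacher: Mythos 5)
Your proof is correct and follows essentially the same route as the paper's: observe that the rounding costs at most $2\gamma$, prepend to the canonical entry $y_{i_0}$ and snap the endpoint to the grid for condition (1), and for condition (2) modify a witnessing traversal ending at a nearby reachable grid point at cost $\le\gamma$, spending exactly the budget $3\gamma=\eps\delta$. The only slip is that in condition (2) your case split mentions only points of $F_i$ and the $S_i$-neighbours of its endpoints, whereas $\tilde R^h_{i,m}(\pi)=[a',b']\times\{m\}$ is a real interval containing non-grid points; but your extend/truncate argument applies verbatim to any $f\in[a',b']$ via the nearest point of $F_i$ (which is within $\gamma$ in $\pi'$-value), so nothing substantive is missing.
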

\begin{proof}
  Let $\phi$ be any monotone path in $\FS_{\le \delta}(\pi,\sigma)$ that starts in a point $(p,1) \in \tilde R^h_{j,1}$ and ends in $(b,m)$, witnessing that $\ddfr(\pi_{p..b},\sigma) \le \delta$. After rounding down $\pi$ to $\pi'$ and rounding up $\sigma$ to $\sigma'$, $\phi$ is still a monotone path in $\FS_{\le \delta}(\pi',\sigma')$. Moreover, we can prepend a path from $(y_j,1)$ to $(p,1)$ to $\phi$, since $\tilde R^h_{j,1} \cap \FS_{\le \delta}(\pi',\sigma')$ is an interval containing $(y_j,1)$ and $(p,1)$. 
  Let $r$ be the value of $\pi_b$ rounded down to a multiple of $\gamma$. This value $r$ is attained at some point $\pi_{f}$ on the same segment $\pi_{i..i+1}$ as $\pi_b$. If $f \le b$ then we change $\phi$ to stop at $\pi_f$ whenever it reaches this point. 
  If $f > b$ then we change $\phi$ by appending a path from $(b,m)$ to $(f,m)$. In any case, this yields a monotone path in $\FS_{\le \delta}(\pi',\sigma')$ from $(y_j,1)$ to $(f,m)$. Since such a continuous traversal is equivalent to a discrete traversal by \lemref{conttodisc}, we have $f \in F_i$. 
  By the construction of $F_i'$, the point $(b,m)$ will be contained in the output $\tilde R^h_{i,m}(\pi)$, so we find the reachable exit $(b,m)$ as desired. A similar argument with entries on $\sigma$ shows that we satisfy property (1) of the free-space region problem.
  
  Consider any point $(f,m)$ in the output set $\tilde R^h_{i,m}(\pi)$. By the construction of $F_i'$, there is a point $b$ on the same segment as $f$ with $\|\pi'_b - \pi'_f\| \le \gamma$ and there is an entry $e \in E$ with $\ddfr(\pi'_{e..b},\sigma') \le \delta$, witnessed by a traversal $\phi$. If $b \le f$ we change $\phi$ so that it stops at $\pi'_b$ once it reaches this point. If $b > f$ we change $\phi$ by appending a path from $(b,m)$ to $(f,m)$. In any case, this shows $\ddfr(\pi'_{e..f},\sigma') \le \delta + \gamma$. Since $\pi',\sigma'$ are rounded versions of $\pi,\sigma$ where all vertices are moved by less than $\gamma$, we obtain $\ddfr(\pi_{e..f},\sigma) \le \delta + 3\gamma = (1+\eps)\delta$. Thus, any point $(f,m)$ in the output set is reachable form the entry sets by a monotone path in $\FS_{\le (1+\eps)\delta}(\pi,\sigma)$, which together with a similar argument for entries on $\sigma$ proves that we satisfy property (2) of the free-space problem.
\end{proof}

\subsection{Greedy Decider for the Fréchet Distance of One-Dimensional Separated
Curves}
\label{sec:1d-greedydec}

In the remainder of the paper all indices of curves will be integral.
Let $\pi=(\pi_{1},\dots,\pi_{n})$ and $\sigma=(\sigma_{1},\dots,\sigma_{m})$
be two separated polygonal curves in $\mathbb{R}$, i.e., $\pi_{i}\ge0\ge\sigma_{j}$.
For indices $1\le i\le n$ and $1\le j\le m$, define $\vis_{\sigma}(i,j):=\{k\mid k\ge j\text{ and }\sigma_{k}\ge\pi_{i}-\delta\}$
as the index set of vertices on $\sigma$ that are later in sequence
than $\sigma_{j}$ and are still in distance $\delta$ to $\pi_{i}$ (i.e,
\emph{seen} by $\pi_{i}$) and, likewise, $\vis_{\pi}(i,j):=\{k\mid k\ge i\text{ and }\pi_{k}\le\sigma_{j}+\delta\}$.
Hence, the set of points that we may reach on $\sigma$ by starting
in $(\pi_{i},\sigma_{j})$ and staying in $\pi_{i}$ can be defined
as the longest contiguous subsequence $[j+1 .. j+k]$ such that
$[j+1 .. j+k]\subseteq\vis_{\sigma}(i,j)$. Let $\reach_{\sigma}(i,j):=[j+1 .. j+k]$
denote this subsequence and let $\reach_{\pi}(i,j)$ be defined symmetrically.
Note that $\pi_{i}\le\pi_{i'}$ implies that $\vis_{\sigma}(i,j)\supseteq\vis_{\sigma}(i',j)$,
however the converse does not necessarily hold. Also, $\vis_{\sigma}(i,j)\nsupseteq\vis_{\sigma}(i',j)$
implies that $\vis_{\sigma}(i,j)\subsetneq\vis_{\sigma}(i',j)$ and
$\pi_{i}>\pi_{i'}$.

The visibility sets established above enable us to define a greedy
algorithm for the Fréchet distance of $\pi$ and $\sigma$.
Let $1\le p\le n$ and $1\le q\le m$ be arbitrary indices on $\sigma$
and $\pi$. We say that $p'$ is a \emph{greedy step on $\pi$ from
$(p,q)$}, written $p'\gets\GreedyStep_{\pi}(\pi_{p..n},\sigma_{q..m})$,
if $p'\in\reach_{\pi}(p,q)$ and $\vis_{\sigma}(i,q)\subseteq\vis_{\sigma}(p',q)$
holds for all $p\le i\le p'$. Symmetrically, $q'\in\reach_{\sigma}(p,q)$
is a \emph{greedy step on $\sigma$ from $(p,q)$}, if $\vis_{\pi}(p,i)\subseteq\vis_{\pi}(p,q')$
for all $q\le i\le q'$. In pseudo code, $\GreedyStep_{\pi}(\pi_{p.. n},\sigma_{q.. m})$
denotes a function that returns an arbitrary greedy step $p'$ on
$\pi$ from $(p,q)$ if such an index exists and returns an error
otherwise (symmetrically for $\sigma$). See \figref{greedystep}.

Consider the following greedy algorithm:

\begin{algorithm}[H]
\begin{algorithmic}[1]
\State $p\gets 1,q\gets 1$
\Repeat
\If{$p'\gets\GreedyStep_\pi(\pi_{p.. n},\sigma_{q.. m})$} \Let{$p$}{$p'$}
\EndIf
\If{$q'\gets\GreedyStep_\sigma(\pi_{p.. n},\sigma_{q..m})$} \Let{$q$}{$q'$}
\EndIf
\Until{no greedy step was found in the last iteration}
\If{$p=n$ and $q=m$} \Return{$\ddF(\pi,\sigma)\le \delta$}
\Else{  \Return{$\ddF(\pi,\sigma)>\delta$}}
\EndIf
\end{algorithmic}

\caption{Greedy algorithm for the Fréchet distance of separated curves $\pi_{1..n}$
and $\sigma_{1.. m}$ in $\mathbb{R}$}

\label{alg:greedy}
\end{algorithm}

\begin{figure}
\centering
\includegraphics[width=0.65\textwidth]{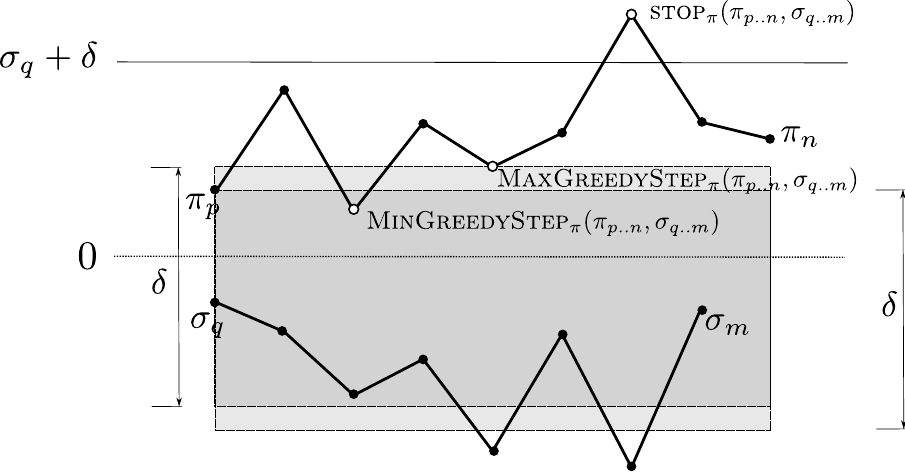}
\caption{An illustration of greedy steps. For better visibility, the one-dimensional separated curves $\pi,\sigma$ are drawn in the plane by mapping $\pi_i$ to $(i,\pi_i)$. 
In particular, the results of $\MinGreedyStep_\pi(\pi_{p..n},\sigma_{q..m})$, $\MaxGreedyStep_\pi(\pi_{p..n},\sigma_{q..m})$, and $\stop_{\pi}(\pi_{p..n},\sigma_{q..m})$ are shown.}
\label{fig:greedystep}
\end{figure}

\begin{theorem}
\label{thm:greedy}Let $\pi$ and $\sigma$ be separated curves in
$\mathbb{R}$ and $\delta > 0$. Algorithm~\ref{alg:greedy} decides whether $\ddF(\pi,\sigma)\le\delta$
in time $\bigO((n+m)\log(nm))$.
\end{theorem}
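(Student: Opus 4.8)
The plan is to prove correctness in two directions and then bound the running time (which also requires describing how $\GreedyStep$ is implemented, since Algorithm~\ref{alg:greedy} uses it as a black box). \emph{Soundness:} if the algorithm reports $\ddF(\pi,\sigma)\le\delta$ — so $(p,q)=(n,m)$ at termination — the greedy steps taken assemble into a discrete traversal of width $\le\delta$. A greedy step on $\pi$ from $(p,q)$ to $p'$ is realized by advancing the $\pi$-pointer through $\pi_{p},\dots,\pi_{p'}$ with the $\sigma$-pointer fixed at $\sigma_q$; each visited configuration $(\pi_{p+i},\sigma_q)$ has distance $\pi_{p+i}-\sigma_q$ (by separatedness, $\pi_{p+i}\ge 0\ge\sigma_q$), which is $\le\delta$ exactly because $[p+1..p']\subseteq\reach_\pi(p,q)\subseteq\vis_\pi(p,q)$; the symmetric statement holds for $\sigma$, and every such piece ends in $\FS_{\le\delta}(\pi,\sigma)$. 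Concatenating the pieces from $(1,1)$ yields a monotone path to $(n,m)$ in $\FS_{\le\delta}(\pi,\sigma)$, so $\ddF(\pi,\sigma)\le\delta$. (I would prepend a check $\|\pi_1-\sigma_1\|\le\delta$, reporting $\ddF(\pi,\sigma)>\delta$ otherwise.)

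\emph{Completeness.} Assuming $\ddF(\pi,\sigma)\le\delta$, I would maintain the invariant that a monotone path from the current $(p,q)$ to $(n,m)$ exists in $\FS_{\le\delta}(\pi,\sigma)$. It holds initially, and is preserved by a \emph{safety} lemma: if such a path $P$ exists and $p'$ is a greedy step on $\pi$ from $(p,q)$, then one exists from $(p',q)$ (symmetrically for $\sigma$). To prove it, let $\bar q$ be the $\sigma$-index the first time $P$ reaches $\pi$-index $p'$; every cell of $P$ up to that moment has $\pi$-index in $[p,p']$, so each $j\in[q,\bar q]$ lies in $\vis_\sigma(i,q)$ for some $i\in[p,p']$, whence the domination property of a greedy step gives $[q..\bar q]\subseteq\vis_\sigma(p',q)$, and by contiguity $[q+1..\bar q]\subseteq\reach_\sigma(p',q)$; in particular (taking $i=p$) the cells $(p',q),\dots,(p',\bar q)$ lie in $\FS_{\le\delta}(\pi,\sigma)$, so I route from $(p',q)$ straight up to $(p',\bar q)$ and then follow $P$. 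The invariant is complemented by a \emph{progress} lemma: if $(p,q)\ne(n,m)$ and a monotone path to $(n,m)$ exists, then a greedy step on $\pi$ or on $\sigma$ exists from $(p,q)$; granting both lemmas, each iteration strictly increases $p$ or $q$ while $(p,q)\ne(n,m)$, so after at most $n+m$ iterations the algorithm outputs $\ddF(\pi,\sigma)\le\delta$, while if $\ddF(\pi,\sigma)>\delta$ soundness forbids reaching $(n,m)$ so it outputs $\ddF(\pi,\sigma)>\delta$. For the progress lemma I would first use corner-cutting — if $(p,q),(p+1,q+1)\in\FS_{\le\delta}(\pi,\sigma)$ then $(p+1,q)$ or $(p,q+1)$ lies in $\FS_{\le\delta}(\pi,\sigma)$, which uses $(p,q)\in\FS_{\le\delta}(\pi,\sigma)$ — to assume the witnessing path has no diagonal move, so it begins by advancing exactly one pointer, say $\pi$ (by the $\pi\leftrightarrow\sigma$ symmetry: negate all coordinates and swap the curves). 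Then $\reach_\pi(p,q)\ne\emptyset$, and, using the two monotonicity observations on visibility sets from the beginning of the section, one shows a greedy step exists: either an index of $\reach_\pi(p,q)$ realizing a running minimum of $\pi$ is a valid greedy step on $\pi$, or — tracking how far the path advances $\pi$ before it first advances $\sigma$ — one obtains $\sigma_{q+1}\ge\pi_p-\delta$ and a case analysis of which $\sigma$- or $\pi$-values fall in the $\delta$-bands $[\pi_i-\delta,\pi_{i'}-\delta)$ produces a greedy step on $\sigma$ (or an equal-visibility greedy step on $\pi$).

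\emph{Running time.} There are $\Oh(n+m)$ iterations by the above. In $\Oh((n+m)\log(nm))$ preprocessing time I would build sparse tables for range-minimum queries on $\pi$ and range-maximum queries on $\sigma$, plus structures that locate, by binary search, the first index past a given position at which a curve crosses a threshold, and that test whether a curve takes a value in a given range over a given suffix. Then $\reach_\pi(p,q)$ is computed in $\Oh(\log n)$ time and a greedy step on $\pi$ with $\Oh(1)$ further queries (symmetrically for $\sigma$), so each iteration costs $\Oh(\log(nm))$ and the total is $\Oh((n+m)\log(nm))$.

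The main obstacle is the progress lemma — ruling out that the greedy walk stalls at some $(p,q)\ne(n,m)$ from which $(n,m)$ is still reachable. The tempting characterization ``a greedy step on $\pi$ exists iff $\pi$ weakly descends within $\reach_\pi(p,q)$'' is false, because visibility sets can coincide even when the underlying curve values differ, so the proof must track precisely which $\sigma$-values lie in the relevant $\delta$-bands; this is the delicate part.
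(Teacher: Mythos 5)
Your overall architecture is sound and organized a bit differently from the paper: you maintain a local invariant (``$(n,m)$ is reachable from the current pair'') and split correctness into a \emph{safety} lemma and a \emph{progress} lemma, whereas the paper proves a global statement (Lemma~\ref{lem:GreedyCorrectness}: if a greedy pair admits no greedy step, then $\ddF(\pi,\sigma)>\delta$ for the \emph{whole} curves), supported by the accumulated monotonicity of visibility sets along all greedy pairs visited so far (Lemma~\ref{lem:monotonicity}). Your safety lemma is correctly stated and your proof of it (intercepting the witness path at the first time it reaches $\pi$-index $p'$, showing $[q..\bar q]\subseteq\vis_\sigma(p',q)$, and rerouting vertically) is essentially complete. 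The soundness direction and the iteration bound are fine, and your data-structure plan for $\Oh(\log nm)$ per greedy step is plausible (the paper uses fractional cascading for the same four query types).

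The genuine gap is the progress lemma, which you yourself flag as ``the delicate part'': you never actually prove that a pair $(p,q)\neq(n,m)$ from which $(n,m)$ is reachable admits a greedy step. The statement is true --- since greedy steps, reach sets and visibility sets from $(p,q)$ depend only on the suffix curves $\pi_{p..n},\sigma_{q..m}$, it is exactly the assertion that the initial pair of two separated curves with $\ddF\le\delta$ is never stuck --- but this is precisely the content of the paper's Lemma~\ref{lem:GreedyCorrectness}, whose proof needs a careful identification of blocking vertices ($\pstop$, $\qstop$ in Case~1; $\pmax$, $\qmin$ in Case~2) that any feasible traversal would have to cross but cannot. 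Your sketch (``either a running-minimum index of $\reach_\pi(p,q)$ is a greedy step, or a case analysis of which values fall in the $\delta$-bands produces a greedy step on $\sigma$'') only covers the easy half: the running-minimum observation handles the case where some reachable $\pi$-vertex is no higher than everything before it, but in the remaining case --- where every reachable $p'$ is strictly higher than some predecessor and some $\sigma_k$ with $k\ge q$ witnesses the loss of visibility, and symmetrically every reachable $q'$ loses sight of some $\pi_k$ with $k\ge p$ --- you must derive a contradiction with the existence of the path, and no argument is given for this. This case is where all the combinatorial work lies; without it the correctness proof is incomplete.
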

We will first prove the correctness of the algorithm in Lemma~\ref{lem:GreedyCorrectness}
below and postpone the discussion how to implement the algorithm efficiently
to Section~\ref{sub:implementGreedy}.

\subsubsection{Correctness}

Note that Algorithm~\ref{alg:greedy} considers potentially only
very few points of the curve explicitly during its execution. Call
the indices $(p,q)$ of point pairs considered in some iteration of
the algorithm (for any choice of greedy steps, if more than one exists)
\emph{greedy (point) pairs} and all points contained in some such pair
\emph{greedy points (of $\pi$ and $\sigma$).} The following useful
monotonicity property holds: If some greedy point\emph{ }on $\pi$
sees a point on $\sigma$ that is yet to be traversed, all following
greedy points on $\pi$ will see it \emph{until it is traversed}.
\begin{lemma}
\label{lem:monotonicity}Let $(p_{1},q_{1}),\dots,(p_{i},q_{i})$
be the greedy point pairs considered in the iterations $1,\dots,i$.
It holds that
\begin{enumerate}
\item $\vis_{\sigma}(\ell,q_{i})\subseteq\vis_{\sigma}(p_{i},q_{i})$ for
all $1\le\ell\le p_{i}$, and
\item $\vis_{\pi}(p_{i},\ell)\subseteq\vis_{\pi}(p_{i},q_{i})$ for all
$1\le\ell\le q_{i}$.
\end{enumerate}
\end{lemma}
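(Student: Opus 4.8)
The plan is to proceed by induction on $i$, the number of iterations. For $i=1$ we have $(p_1,q_1)=(1,1)$, and both statements are vacuous (the only $\ell$ is $\ell = p_1 = 1$, resp.\ $\ell = q_1 = 1$, for which the inclusion is an equality). For the inductive step, assume both statements hold for $i$; we must verify them for $i+1$. The key observation is that from $(p_i,q_i)$ to $(p_{i+1},q_{i+1})$ the algorithm takes at most one greedy step on $\pi$ (moving $p_i$ to $p_{i+1}$, possibly with $p_{i+1}=p_i$ if no step was found) followed by at most one greedy step on $\sigma$ (moving $q_i$ to $q_{i+1}$). I would handle these two sub-steps separately, using the two defining properties of a greedy step together with the structural facts about $\vis$ already recorded in the text (namely $\pi_i \le \pi_{i'} \Rightarrow \vis_\sigma(i,j) \supseteq \vis_\sigma(i',j)$, and the ``trichotomy'' that incomparable visibility sets are in fact strictly nested with the reverse order on $\pi$-values).

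For the first sub-step (greedy step on $\pi$), I want to show statement~1 holds at $(p_{i+1},q_i)$, i.e.\ $\vis_\sigma(\ell,q_i) \subseteq \vis_\sigma(p_{i+1},q_i)$ for all $\ell \le p_{i+1}$. For $p_i \le \ell \le p_{i+1}$ this is exactly the defining condition of $p' \gets \GreedyStep_\pi$: we required $\vis_\sigma(i,q) \subseteq \vis_\sigma(p',q)$ for all $p \le i \le p'$. For $\ell \le p_i$ it follows by chaining the induction hypothesis $\vis_\sigma(\ell,q_i) \subseteq \vis_\sigma(p_i,q_i)$ with the case $\ell = p_i$ just established. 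Statement~2 is unaffected by a $\pi$-step in the sense that $\vis_\pi(p_{i+1},\ell)$ changes, but here I would instead argue directly once $q$ has moved, so it is cleanest to first take the $\sigma$-step. Actually the cleaner bookkeeping: after the $\pi$-step we have statement~1 at $(p_{i+1},q_i)$; then for the $\sigma$-step (moving $q_i \to q_{i+1}$) I show statement~2 at $(p_{i+1},q_{i+1})$ by the symmetric argument — for $q_i \le \ell \le q_{i+1}$ it is the defining condition of $\GreedyStep_\sigma$, and for $\ell \le q_i$ I chain with the induction hypothesis (which must first be ``transported'' from $p_i$ to $p_{i+1}$; this is where statement~1 is used, since $\vis_\pi$ monotonicity in the first argument needs to be related to $\vis_\sigma$ monotonicity — see below).

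The main obstacle, as I see it, is the interaction between the two coordinates: moving $p$ can in principle shrink some $\vis_\pi(p,\ell)$, and moving $q$ can shrink some $\vis_\sigma(\ell,q)$, so I must check that the step just taken does not destroy the \emph{other} statement's induction hypothesis. Concretely, after the $\pi$-step I still need statement~2 to hold at $(p_{i+1},q_i)$ so that the subsequent $\sigma$-step argument can use it; this requires showing $\vis_\pi(p_{i+1},\ell) \subseteq \vis_\pi(p_{i+1},q_i)$ for $\ell \le q_i$, given only that $\vis_\pi(p_i,\ell) \subseteq \vis_\pi(p_i,q_i)$. I expect this to follow from the observation that $p_{i+1} \in \reach_\pi(p_i,q_i)$ together with statement~1 at $(p_{i+1},q_i)$ and the trichotomy: if $\vis_\pi(p_{i+1},\ell) \not\subseteq \vis_\pi(p_{i+1},q_i)$ one derives $\sigma_\ell > \sigma_{q_i}$ and an element $k$ with $\pi_k \le \sigma_\ell + \delta$ but $\pi_k > \sigma_{q_i}+\delta$, which should contradict either the reachability of $p_{i+1}$ or statement~1. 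Pinning down exactly which inclusion among $\vis_\sigma$-sets yields the needed contradiction — i.e.\ getting the quantifier over intermediate indices right — is the delicate part; everything else is a routine chase through the definitions.
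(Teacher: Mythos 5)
Your plan is essentially the paper's proof: the paper also splits the journey from $\ell$ to $p_i$ into the individual greedy steps, uses the defining condition of $\GreedyStep_\pi$ on each step, and chains the resulting inclusions (it just writes the induction as a direct chain rather than as a formal induction on $i$). The one step you flag as ``delicate'' --- transporting an inclusion among visibility sets past a step in the \emph{other} coordinate --- has a one-line resolution that you are circling around but not quite stating: directly from the definitions, $\vis_{\sigma}(\ell,q') = \vis_{\sigma}(\ell,q)\cap[q'..m]$ for $q'\ge q$, and symmetrically $\vis_{\pi}(p',\ell) = \vis_{\pi}(p,\ell)\cap[p'..n]$ for $p'\ge p$; intersecting both sides of an inclusion with the same set preserves it, so advancing either coordinate can never destroy the other statement. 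In particular, your proposed contradiction for statement~2 after the $\pi$-step needs neither the reachability of $p_{i+1}$ nor statement~1: the witness $k\ge p_{i+1}\ge p_i$ with $\pi_k\le\sigma_\ell+\delta$ and $\pi_k>\sigma_{q_i}+\delta$ lies in $\vis_{\pi}(p_i,\ell)\setminus\vis_{\pi}(p_i,q_i)$ and contradicts the induction hypothesis directly. With that observation in place your argument closes, and no appeal to the trichotomy of visibility sets is needed.
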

\begin{proof}
Let $k<i$. We first show that $\vis_{\sigma}(\ell,q_{i})\subseteq\vis_{\sigma}(p_{k+1},q_{i})$
holds for all $p_{k}\le\ell<p_{k+1}$. If $p_{k}=p_{k+1}$, the claim
is immediate. Otherwise $p_{k+1}$ is the result of a greedy step
on $\pi$. By definition of visibility, we have $\vis_{\sigma}(\ell,q_{i})=\vis_{\sigma}(\ell,q_{k})\cap[q_{i}..m]\subseteq\vis_{\sigma}(p_{k+1},q_{k})\cap[q_{i}.. m]=\vis_{\sigma}(p_{k+1},q_{i})$,
where the inequality follows from $p_{k+1}$ being a greedy step from
$(p_{k},q_{k})$.

For arbitrary $\ell\le i$, let $k<i$ be such that $p_{k}\le\ell<p_{k+1}$.
Then $\vis_{\sigma}(\ell,q_{i})\subseteq\vis_{\sigma}(p_{k+1},q_{i})\subseteq\vis_{\sigma}(p_{k+2},q_{i})\subseteq\cdots\subseteq\vis_{\sigma}(p_{i},q_{i})$.
The second statement is symmetric.
\end{proof}

We will exploit this monotonicity to prove that if Algorithm~\ref{alg:greedy}
finds a greedy point pair that allows no further greedy steps, then no feasible
traversal of $\pi$ and $\sigma$ exists. We derive an even stronger
statement using the following notion: For a greedy point pair $(p,q)$,
define $\stop_{\pi}(\pi_{p.. n},\sigma_{q.. m}):=\max(\reach_{\pi}(p,q)\cup\{p\})+1$
as the index of the first point after $\pi_{p}$ on $\pi$ which is
not seen by $\sigma_{q}$, or $n+1$ if no such index exists.  Let
$\stop_{\sigma}$ be defined symmetrically.
\begin{lem}
[Correctness of Algorithm~\ref{alg:greedy}]\label{lem:GreedyCorrectness}Let
$(p,q)$ be a greedy point of $\pi$ and $\sigma$, $\pstop:=\stop_{\pi}(\pi_{p..n},\sigma_{q.. m})$
and $\qstop:=\stop_{\sigma}(\pi_{p.. n},\sigma_{q.. m})$. If
on both curves, no greedy step from $(p,q)$ exists, then $\ddF(\pi,\sigma)>\delta$.

In particular, if $\qstop<m$, then for all $1\le p'\le n$, we have
that $\ddF(\pi_{1.. p'},\sigma_{1..\qstop})>\delta$ and if $\pstop<n$,
then $\ddF(\pi_{1..\pstop},\sigma_{1.. q'})>\delta$ for all $1\le q'\le m$.
\end{lem}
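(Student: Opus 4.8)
The plan is to argue contrapositively: assume a feasible traversal of $\pi$ and $\sigma$ of width at most $\delta$ exists, and derive that a greedy step from $(p,q)$ must exist on at least one of the two curves. So suppose $\phi = (\phi_1,\phi_2)$ is a (discrete, by \lemref{conttodisc}, though we can work directly with discrete traversals here since indices are integral) traversal of width $\le \delta$. Since $(p,q)$ is a greedy point pair, \lemref{monotonicity} tells us $\vis_\sigma(\ell,q) \subseteq \vis_\sigma(p,q)$ for all $\ell \le p$ and $\vis_\pi(p,\ell) \subseteq \vis_\pi(p,q)$ for all $\ell \le q$. The key consequence I want to extract is that $\pi_p$ is, in a sense, ``dominant'' among $\pi_1,\dots,\pi_p$ for seeing the not-yet-traversed part of $\sigma$, and symmetrically for $\sigma_q$; so whatever the feasible traversal does after reaching the configuration where $\pi$ is at index $\ge p$ and $\sigma$ is at index $\ge q$, it can be ``simulated'' from $(p,q)$.

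First I would pin down where the feasible traversal $\phi$ is when it first has $\phi_1 = p$; say this happens at time $t_0$, with $\phi_2(t_0) = q_0$. I claim $q_0 \le q$: indeed, while $\phi_1 < p$, every $\sigma$-index $k$ the traversal moves through satisfies $\sigma_k \ge \pi_{\phi_1} - \delta$, i.e. $k \in \vis_\sigma(\phi_1, 1)$; by monotonicity these visibility sets are all contained in $\vis_\sigma(p, \cdot)$-type sets, so in particular $q_0 \in \vis_\sigma(p,1)$, meaning $\sigma_{q_0} \ge \pi_p - \delta$. Hmm — this shows $\sigma$ hasn't walked past points $\pi_p$ can't see, but to get $q_0 \le q$ I actually need that $q$ is the furthest $\sigma$-index that is ``safe'' given the greedy point $(p,q)$, which is exactly what $\qstop$ and the reach sets encode. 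So more carefully: since no greedy step on $\sigma$ exists from $(p,q)$, and using that $\reach_\sigma(p,q) = [q+1..\qstop-1]$, the point $\sigma_{\qstop}$ is not seen by $\pi_p$; if the feasible traversal ever put $\sigma$ at index $\ge \qstop$ while $\pi$ is at index $\le p$, it would need $\pi_{\phi_1}$ to see $\sigma_{\qstop}$ for some $\phi_1 \le p$, but $\vis_\sigma(\phi_1,q) \subseteq \vis_\sigma(p,q)$ forbids this once $\sigma$ has already advanced past $q$. This is the crux of the argument and the main obstacle: carefully chaining the monotonicity inclusions so that the feasible traversal is forced into the ``box'' $[1..p]\times[1..q]$ at the moment it is not strictly beyond both — and then showing it cannot escape.

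Once the feasible traversal is confined near $(p,q)$, I would derive the contradiction as follows. Consider the first time $t^\ast$ that $\phi$ has $\phi_1 \ge p$ and $\phi_2 \ge q$ simultaneously; write $a := \phi_1(t^\ast), b := \phi_2(t^\ast)$, so $a \ge p$, $b \ge q$, and by the confinement argument at least one of $a = p$, $b = q$ holds (the traversal reaches the far boundary of the box through one of its two sides). Suppose $a = p$; then look at the part of $\phi$ immediately after $t^\ast$: either $\phi_2$ increases next (so $b+1 \in \reach_\sigma(p,b)$ hence $b+1 \le \qstop - 1$, but $b \ge q$ and $\reach_\sigma(p,q) = [q+1..\qstop-1]$, so $b+1$ is reachable via a sequence of visible points — and the monotonicity guarantees that advancing $\sigma$ from $q$ to $b+1$ is a legitimate greedy step on $\sigma$, contradiction), or $\phi_1$ increases to $p+1$ while $\phi_2 = b$; but then $p+1 \in \vis_\pi(p,b) \subseteq \vis_\pi(p,q)$ by monotonicity, so $p+1 \in \reach_\pi(p,q)$, and I need to check that moving $\pi$ from $p$ to $p+1$ preserves the visibility-domination condition required for it to count as a greedy step — this holds precisely because $\pi_{p+1} \le \sigma_b + \delta \le \sigma_q + \delta$ would... actually one must be a bit more careful: a greedy step on $\pi$ requires $\vis_\sigma(i,q) \subseteq \vis_\sigma(p',q)$ for all $p \le i \le p'$, and if this fails for the single step $p' = p+1$, then $\vis_\sigma(p,q) \subsetneq \vis_\sigma(p+1,q)$ and $\pi_{p+1} > \pi_p$ by the remark before \lemref{monotonicity}; one then pushes $p'$ further to the right until the domination condition is restored, and argues such a $p'$ exists and lies in $\reach_\pi(p,q)$ using that the feasible traversal itself walks $\pi$ through these indices. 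Either way we produce a greedy step, contradicting the hypothesis.

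For the ``in particular'' clauses: if $\qstop \le m$ (i.e. $\qstop$ is a genuine index, not $m+1$) and yet $\ddF(\pi_{1..p'},\sigma_{1..\qstop}) \le \delta$ for some $p'$, take a width-$\le\delta$ traversal of $(\pi_{1..p'},\sigma_{1..\qstop})$; it ends at $(\,p',\qstop)$, so at some moment $\sigma$ sits at index $\qstop$ — but by the same confinement argument (the traversal on this truncated pair still can't advance $\sigma$ past what $\pi_1,\dots,\pi_p$ collectively see, which by monotonicity is $\le \qstop - 1$ once $\sigma$ has passed $q$) this is impossible, giving $\ddF(\pi_{1..p'},\sigma_{1..\qstop}) > \delta$. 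Wait — I should double check whether the confinement really gives $\sigma \le \qstop - 1$ throughout; it does, because reaching index $\qstop$ on $\sigma$ requires some $\pi_i$ with $i \le p'$ to see $\sigma_{\qstop}$, and the only way that $i$ exceeds $p$ is if the traversal has already moved $\pi$ past $p$, which by the first part forces it through a greedy step — contradiction with $(p,q)$ admitting none. The $\pstop < n$ clause is symmetric, swapping the roles of $\pi$ and $\sigma$. The main obstacle throughout is the bookkeeping in the greedy-step extraction — verifying that when a single-index advance fails the domination test, one can extend it to a valid greedy step — and I expect the authors handle this via the observation (stated just before \lemref{monotonicity}) that failure of $\vis_\sigma(i,q) \subseteq \vis_\sigma(p',q)$ forces strict nesting the other way together with $\pi_i > \pi_{p'}$, which lets one take $p'$ to be the argmin of $\pi$ over the relevant range.
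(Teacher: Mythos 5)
Your high-level plan (play a hypothetical feasible traversal against the stuck pair $(p,q)$ and use the monotonicity of \lemref{monotonicity} to derive a contradiction) matches the paper's, and your observation that at the first time the traversal has $\phi_1\ge p$ and $\phi_2\ge q$ it must sit on the row $\pi=p$ or the column $\sigma=q$ is correct. But the step where you actually derive the contradiction --- ``either way we produce a greedy step'' --- is a genuine gap, and the patch you propose does not work. First, advancing $\sigma$ from $q$ to $b+1$ is \emph{not} guaranteed to be a greedy step by monotonicity: \lemref{monotonicity} controls $\vis_\pi(p,\ell)$ only for $\ell\le q$, whereas a greedy step needs domination over all $j\in[q..b+1]$. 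Second, for the $\pi$-side you propose to ``push $p'$ to the right until the domination condition is restored,'' arguing such a $p'$ exists because the feasible traversal walks through these indices. It need not exist: the traversal may climb to ever higher vertices of $\pi$ and never again pass an index $p'$ with $\vis_\sigma(i,q)\subseteq\vis_\sigma(p',q)$ for all $i\in[p..p']$; indeed, under the stuckness hypothesis \emph{no} such $p'$ lies in $\reach_\pi(p,q)$, so the process cannot terminate inside the reach. The contradiction must instead come from the traversal being forced \emph{outside} the reach, and your sketch never gets there. Relatedly, your argument for the ``in particular'' clause asserts that a $\pi$-index $i>p$ seeing $\sigma_{\qstop}$ ``forces a greedy step''; moving $\pi$ past $p$ in a feasible traversal is not a greedy step, so this is circular.

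What is missing is the paper's \claimref{stuckness}: the stuckness conditions (failure of \emph{every} candidate greedy step) extend the monotonicity $\vis_\sigma(i,q)\subseteq\vis_\sigma(p,q)$ from $i\le p$ to all $i\le\pstop$, and symmetrically on $\sigma$ up to $\qstop$. With this one gets a clean crossing argument when $\pstop\le n$ or $\qstop\le m$: to move $\pi$ onto index $\pstop$ the traversal needs $\sigma$ to be beyond $\qstop$ (no $\sigma_j$ with $j\le\qstop$ sees $\pi_{\pstop}$), but to move $\sigma$ onto $\qstop$ it needs $\pi$ already at an index beyond $\pstop$ (no $\pi_k$ with $k\le\pstop$ sees $\sigma_{\qstop}$), and whichever crossing happens first is illegal. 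Your version only has monotonicity up to index $p$, which cannot forbid the crossing at $\pi$-indices in $(p,\pstop)$. Finally, you do not treat the case $\pstop=n+1$ and $\qstop=m+1$ at all; there is then no forbidden row or column, and the paper instead exhibits vertices $\pi_{\pmax}$ and $\sigma_{\qmin}$ that still must be traversed but are invisible from every position the traversal can subsequently occupy. Without these two ingredients the proof is incomplete.
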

Note that the correctness of Algorithm~\ref{alg:greedy} follows
immediately: If the algorithm is stuck, then $\ddF(\pi,\sigma)>\delta$.
Otherwise, it finds a feasible traversal.
\begin{proof}
[Proof of Lemma~\ref{lem:GreedyCorrectness}]Consider the case that
no greedy step from $(p,q)$ exists, then the following \emph{stuckness
conditions} have to hold:

\begin{enumerate}
\item For all $p'\in\reach_{\pi}(p,q)$, we have $\vis_{\sigma}(p',q)\subsetneq\vis_{\sigma}(p,q)$,
and
\item for all $q'\in\reach_{\sigma}(p,q)$, we have $\vis_{\pi}(p,q')\subsetneq\vis_{\pi}(p,q)$.
\end{enumerate}

In this case, we can extend the monotonicity property of \lemref{monotonicity} to include all reachable
and the first unreachable point.
\begin{claim}
\label{claim:stuckness}
If the stuckness
conditions hold for $(p,q)$, then we have $\vis_{\sigma}(i,q)\subseteq\vis_{\sigma}(p,q)$
for all $1\le i\le\pstop$. In particular, if $\pi_{p}$ does not
see $\sigma_{\ell}$ for some $\ell>q$, then no vertex $\pi_{i}$
with $1\le i\le\pstop$ sees $\sigma_{\ell}.$ The symmetric statement
holds for $\sigma$.
\end{claim}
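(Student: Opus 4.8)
The plan is to prove the inclusion $\vis_{\sigma}(i,q)\subseteq\vis_{\sigma}(p,q)$ for all $1\le i\le\pstop$ by splitting the range of $i$ into three parts: (a) $1\le i\le p$, (b) $p<i\le\max(\reach_{\pi}(p,q)\cup\{p\})$, i.e.\ $i$ reachable on $\pi$ from $(p,q)$, and (c) $i=\pstop$, the first unreachable index. For part (a), I would simply invoke \lemref{monotonicity}(1), which is exactly the statement $\vis_{\sigma}(\ell,q)\subseteq\vis_{\sigma}(p,q)$ for $1\le\ell\le p$ (note $q=q_i$ for the current greedy pair). For part (b), each such $i$ lies in $\reach_{\pi}(p,q)$, so the first stuckness condition directly gives $\vis_{\sigma}(i,q)\subsetneq\vis_{\sigma}(p,q)$, hence in particular $\subseteq$. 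So the only real content is part (c), the jump to the first \emph{unreachable} vertex $\pi_{\pstop}$.

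For part (c) I would argue as follows. Write $p'':=\pstop-1=\max(\reach_{\pi}(p,q)\cup\{p\})$, the last reachable index on $\pi$. By definition of $\reach_\pi$ and $\stop_\pi$, $\pi_{\pstop}$ is not seen by $\sigma_q$, i.e.\ $\pi_{\pstop}>\sigma_q+\delta$. Now recall the dichotomy stated in the preliminaries of \secref{1d-greedydec}: either $\vis_{\sigma}(\pstop,q)\subseteq\vis_{\sigma}(p,q)$ (which is what we want), or else $\pi_{\pstop}<\pi_{p}$. In the latter case, however, $\pi_{p}>\pi_{\pstop}>\sigma_q+\delta$ would mean $\pi_p$ is not seen by $\sigma_q$ either, forcing $\reach_\pi(p,q)=\emptyset$ and $p''=p$, so $\pstop=p+1$; but then $\pi_{\pstop}<\pi_p$ combined with $\pi_{\pstop}>\sigma_q+\delta\ge$ (lower bound) must be reconciled with the visibility-set comparison for $\sigma$. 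The cleaner route: since $\pi_{\pstop}$ is the \emph{first} vertex after $\pi_p$ not seen by $\sigma_q$, and $p''$ (if $>p$) \emph{is} seen and reachable, I can compare $\vis_{\sigma}(\pstop,q)$ against $\vis_{\sigma}(p'',q)$. If $\pi_{\pstop}\ge\pi_{p''}$ then $\vis_{\sigma}(\pstop,q)\subseteq\vis_{\sigma}(p'',q)\subseteq\vis_{\sigma}(p,q)$, using part (b) for the last inclusion (or part (a) if $p''=p$). If instead $\pi_{\pstop}<\pi_{p''}$, then $\vis_\sigma(\pstop,q)$ and $\vis_\sigma(p'',q)$ are not comparable in the $\supseteq$ direction, but the dichotomy then yields $\vis_{\sigma}(\pstop,q)\subsetneq\vis_{\sigma}(p'',q)$ when the appropriate non-containment holds; I would need to check that the relevant non-containment is forced here. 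The "In particular" sentence follows immediately: if $\pi_p$ does not see $\sigma_\ell$ with $\ell>q$ then $\ell\notin\vis_\sigma(p,q)$, so by the inclusion $\ell\notin\vis_\sigma(i,q)$ for all $1\le i\le\pstop$, i.e.\ no $\pi_i$ in that range sees $\sigma_\ell$. The symmetric statement for $\sigma$ is proved by exchanging the roles of $\pi,\sigma$ and of $p,q$.

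The main obstacle I anticipate is part (c): extending monotonicity past the reachable zone to the single unreachable vertex $\pi_{\pstop}$. The subtlety is that $\pi_{\pstop}$ is not itself a greedy point, so \lemref{monotonicity} does not apply to it directly, and the stuckness conditions only constrain $\vis_\sigma(p',q)$ for $p'\in\reach_\pi(p,q)$, not for $p'=\pstop$. The argument must therefore exploit the \emph{minimality} of $\pstop$ (it is the first index after $\pi_p$ failing to see $\sigma_q$) together with the structural dichotomy about visibility sets on one-dimensional separated curves — essentially that $\vis_\sigma(i,q)$ as a subset of $[q..m]$ is governed monotonically by $\pi_i$, and that the only way to escape the chain of inclusions is to have $\pi_i$ strictly below some previous greedy point's height, which the stuckness conditions rule out for the relevant comparison. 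I expect the write-up to hinge on carefully tracking whether $\pi_{\pstop}$ lies above or below $\pi_p$ and invoking the right case of the visibility dichotomy.
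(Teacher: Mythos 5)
Your decomposition into (a) $i\le p$, (b) $p<i<\pstop$, and (c) $i=\pstop$ is exactly the paper's, and parts (a) and (b) are handled the same way (via \lemref{monotonicity} and the first stuckness condition, respectively). The gap is in part (c), which you correctly identify as ``the only real content'' but do not close: both of your attempted routes end without a conclusion (``must be reconciled with\dots'', ``I would need to check that the relevant non-containment is forced here''). Moreover, the intermediate deduction on your first route is wrong: from ``$\pi_p$ is not seen by $\sigma_q$'' you infer $\reach_\pi(p,q)=\emptyset$, but $\reach_\pi(p,q)$ is determined by the visibility of $\pi_{p+1},\pi_{p+2},\dots$ from $\sigma_q$, not of $\pi_p$ itself, so that inference does not hold; and your ``cleaner route'' via $p''=\pstop-1$ leaves open exactly the case ($\pi_{\pstop}<\pi_{p''}$) that needs an argument.

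The missing observation is a one-liner and requires no case distinction. Since $(p,q)$ is a greedy point pair, it was reached by feasible steps, so $\pi_p\le\sigma_q+\delta$. By definition of $\pstop$ we have $\pi_{\pstop}>\sigma_q+\delta$, hence $\pi_{\pstop}>\pi_p$, and the height-monotonicity of visibility ($\pi_i\le\pi_{i'}$ implies $\vis_\sigma(i,q)\supseteq\vis_\sigma(i',q)$) immediately yields $\vis_\sigma(\pstop,q)\subseteq\vis_\sigma(p,q)$. This is precisely the paper's argument, phrased there as ``$\pi_{\pstop}>\pi_p$, since otherwise $\pstop\in\reach_\pi(p,q)$.'' Your first route was one step away from this: having observed that $\pi_{\pstop}<\pi_p$ would force $\pi_p>\sigma_q+\delta$, you only needed to note that this contradicts $(p,q)$ being a greedy pair, rather than trying to propagate the consequence through $\reach_\pi$. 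As written, the proposal does not complete the argument for the one index where the claim goes beyond \lemref{monotonicity} and the stuckness conditions.
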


\begin{proof}
By the monotonicity of the previous claim, $\vis_{\sigma}(i,q)\subseteq\vis_{\sigma}(p,q)$
holds for all $i\le p$. The first of the stuckness conditions implies $\vis_{\sigma}(i,q)\subseteq\vis_{\sigma}(p,q)$ for
all $p<i<\pstop$. If $\pstop=n+1$, this already completes the proof
of the claim. Otherwise, note that $\pi_{\pstop}>\pi_{p}$, since
otherwise $\pstop\in\reach_{\pi}(p,q)$. Hence $\vis_{\sigma}(\pstop,q)\subseteq\vis_{\sigma}(p,q)$
holds as well.
\end{proof}

We distinguish the following cases that may occur under the stuckness
conditions:

\emph{Case 1: }$\pstop\le n$ or $\qstop\le m$. Without loss of generality,
let $\pstop \le n$ (the other case is symmetric).  Assume for contradiction
that a feasible traversal $\phi$ of $\pi_{1..\pstop}$ and $\sigma_{1.. q'}$
exists for some $1\le q'\le m$. In $\phi$, at some point in time
we have to move in $\pi$ from $\pstop-1$ to $\pstop$ while moving
in $\sigma_{1..q'}$ from some $\sigma_{\ell'}$ to $\sigma_{\ell}$
where $\ell'\in\{\ell-1,\ell\}$ and $\sigma_{\ell}$ sees $\pi_{\pstop}$.
Since $\sigma_{q}$ does not see $\pi_{\pstop}$, the previous claim
shows that $\ell>q_{\mathrm{stop}}$. If $\qstop=m+1$ or $\qstop<q'$,
this is impossible, yielding a contradiction. Otherwise, to do this
transition, in some earlier step we have to move in $\sigma$ from
$q_{\mathrm{stop}}-1$ to $q_{\mathrm{stop}}$ while moving in $\pi$
from $\pi_{k'}$ to $\pi_{k}$ for some $k<\pstop$ and $k'\in\{k-1,k\}$.
However, by definition $q_{\mathrm{stop}}\notin\vis_{\sigma}(p,q)$,
hence Claim~\ref{claim:stuckness} implies that the transition is
illegal, since $\pi_{k}$ does not see $\sigma_{\qstop}$. This is
a contradiction. By a symmetric argument, it holds that $\ddF(\pi_{1..p'},\sigma_{1..\qstop})>\delta$.

\emph{Case 2:} $\pstop=n+1$ and $\qstop=m+1$. In this case, $\reach_{\pi}(p,q)=[p+1.. n]$
and $\reach_{\sigma}(p,q)=[q+1.. n]$. By stuckness conditions,
there exist an index $\pmax>p$ such that no $\sigma_{q'}$ with $q'>q$
sees $\pi_{\pmax}$ and an index $\qmin$ such that no $\pi_{p'}$
with $p'>p$ sees $\sigma_{\qmin}$. Assume for contradiction that
a feasible traversal $\phi$ exists. In $\phi$, at some point in
time $t$, we have to cross either (1) from $\pi_{p}$ to $\pi_{p+1}$
while moving in $\sigma$ from $\sigma_{\ell'}$ to $\sigma_{\ell}$
with $\ell\le q+1\le\qmin$ and $\ell'\in\{\ell-1,\ell\}$ or (2)
from $\sigma_{q}$ to $\sigma_{q+1}$ while moving from $\pi_{\ell'}$
to $\pi_{\ell}$ with $\ell\le p+1\le\pmax$ and $\ell'\in\{\ell-1,\ell\}$.
In the first case, $\ell<\qmin$ holds, since $\pi_{p+1}$ does not
see $\sigma_{\qmin}$. For all consecutive times $t'\ge t$, $\phi$
is in a point $\pi_{p'}$ ($p'\ge p+1$) that does not see $\sigma_{\qmin}$,
which still has to be traversed, leading to a contradiction. Symmetrically,
in the second case, for all times $t'\ge t$, $\phi$ is in a point
$\sigma_{q'}$ $(q'\ge q+1)$ that does not see $\pi_{\pmax}$, which
still has to be traversed.

This concludes the proof of Lemma~\ref{lem:GreedyCorrectness}.
\end{proof}

\subsubsection{Implementing greedy steps}

\label{sub:implementGreedy}

To prove Theorem~\ref{thm:greedy}, it remains to show how to implement
the algorithm to run in time $\bigO((n+m)\log(nm))$. We make use
of geometric range search queries. The classic technique of fractional
cascading~\cite{Lueker78,ChazelleG86, willard78} provides a data structure $D$ with the following
properties: (i) Given
$n$ points ${\cal P}$ in the plane, $D({\cal P})$ can be constructed
in time $\bigO(n\log n)$ and (ii) given a query rectangle $Q:=I_{1}\times I_{2}$
with intervals $I_{1}$ and $I_{2}$, find and return $q\in Q\cap{\cal P}$
with minimal $y$-coordinate, or report that no such point exists,
in time $\bigO(\log n)$. Here, each interval $I_{i}$ may be open,
half-open or closed.

By invoking the above data structure on ${\cal P}:=\{(i,\pi_{i})\mid i \in [1 \ldots n]\}$
for a given curve $\pi=\pi_{1..n}$ (as well as all three rotations
of ${\cal P}$ by multiples of $90\degree$), we obtain a datastructure
$\Ind{\pi}$ such that:
\begin{enumerate}
\item $\Ind{\pi}$ can be constructed in time $\bigO(n\log n)$, 
\item the query $\Ind{\pi}.\textsc{minIndex}([x_{1},x_{2}],[p,b])$ ($\Ind{\pi}.\textsc{maxIndex}([x_{1},x_{2}],[p,b])$)
returns the minimum (maximum) index $p\le i\le b$ such that $x_{1}\le\pi_{i}\le x_{2}$
in time $\bigO(\log n)$, and
\item the query $\Ind{\pi}.\textsc{minHeight}([x_{1},x_{2}],[p,b])$ ($\Ind{\pi}.\textsc{maxHeight}([x_{1},x_{2}],[p,b])$)
returns the minimum (maximum) height $x_{1}\le\pi_{i}\le x_{2}$ such
that $p\le i\le b$ in time $\bigO(\log n)$.
\end{enumerate}
The queries extend naturally to open and half-open intervals. If no
index exists in the queried range, all of these operations return
the index $\infty$. We will use the corresponding data structure
$\Ind{\sigma}$ for $\sigma$ as well.

With these tools, we implement the following basic operations for
arbitrary subcurves $\pi':=\pi_{p..b}$ and $\sigma':=\sigma_{q..d}$
of $\pi$ and $\sigma$. See also \figref{greedystep}.
\begin{enumerate}
\item \textbf{Stopping points $\stop_{\pi}(\pi',\sigma')$.} For points $p,q$, $\stop_{\pi}(\pi',\sigma'):=\max(\reach_{\pi'}(p,q)\cup\{p\})+1$
returns the index of the first point after $\pi_{p}$ on $\pi'$ which
is not seen by $\sigma_{q}$, or $b+1$ if no such index exists.
\begin{algorithm}[h]
\begin{algorithmic}[1]
\Function{$\stop_\pi$}{$\pi_{p..b}, \sigma_{q..d}$}
\Let{$\pstop$}{$\Ind{\pi}.\textsc{minIndex}((\sigma_q+\delta,\infty),[p,b]$)} \Comment{First non-visible point on $\pi$}
\If{$\pstop <\infty$} \Return $\pstop$ \Else{ \Return $b+1$} \EndIf
\EndFunction
  \end{algorithmic}

\caption{Finding the stopping point}

\label{alg:stop}
\end{algorithm}

\item \textbf{Minimal greedy steps $\MinGreedyStep_{\pi}(\pi',\sigma')$.
}This function returns the smallest index $p'\in\reach_{\pi'}(p,q)$
such that $\vis_{\sigma'}(p',q)\supseteq\vis_{\sigma'}(p,q)$ or reports
that no such index exists. 
\begin{algorithm}[h]
\begin{algorithmic}[1]
\Function{$\MinGreedyStep_\pi$}{$\pi_{p..b}, \sigma_{q..d}$}
\Let{$\qmin$}{$\Ind{\sigma}.\textsc{minHeight}([\pi_p-\delta,\infty),[q,d])$} \Comment{Lowest still visible point on $\sigma$}
\Let{$\pcand$}{$\Ind{\pi}.\textsc{minIndex}((-\infty,\sigma_{\qmin}+\delta],[p+1,d]$)} \Comment{If $p'$ exists, it is $\pcand$}
\Let{$\pstop$}{$\stop_\pi(\pi_{p..b},\sigma_{q..d})$} \Comment{First non-visible point on $\pi$}
\If{$\pcand < \pstop$} \Return $\pcand$ 
\Else{ \Return ``No greedy step possible.'' \Comment{$\pi_{\pcand}$ not reachable from $\pi_p$ while staying in $\sigma_q$}}
\EndIf
\EndFunction
  \end{algorithmic}

\caption{Minimal greedy step}

\label{alg:smallgreedy}
\end{algorithm}

\item \textbf{Maximal greedy steps $\MaxGreedyStep_{\pi}(\pi',\sigma')$.}
Let $p'\in\reach_{\pi'}(p,q)$ be such that (i) $p'$ is the largest
index maximizing $|\vis_{\sigma'}(z,q)|$ among all $z\in\reach_{\pi'}(p,q)$
and (ii) $\vis_{\sigma'}(p',q)\supseteq\vis_{\sigma'}(p,q)$. If $p'$
exists, $\MaxGreedyStep_{\pi}$ returns this value, otherwise it reports
that no such index exists. Note that if $p'$ exists, then by definition
there is no greedy step on $\pi$ starting from $(p',q)$, i.e., this
step is a maximal greedy step.
\begin{algorithm}[h]
\begin{algorithmic}[1]
\Function{$\MaxGreedyStep_\pi$}{$\pi_{p..b}, \sigma_{q..d}$}
\Let{$\qmin$}{$\Ind{\sigma}.\textsc{minHeight}([\pi_p-\delta,\infty),[q,d])$} \Comment{Lowest still visible point on $\sigma$}
\Let{$\pstop$}{$\stop_\pi(\pi_{p..b},\sigma_{q..d})$} \Comment{First non-visible point on $\pi$}
\Let{$\pmin$}{$\Ind{\pi}.\textsc{minHeight}((-\infty,\sigma_{\qmin}+\delta],[p+1,\pstop-1]$)} \Comment{Maximizes visibility among reachable points}
\If{$\pmin=\infty$}
\State \Return ``No greedy step possible.'' \Comment{No reachable point has better visibility than $\pi_p$}
\Else
\Let{$\qmin$}{$\Ind{\sigma}.\textsc{minHeight}([\pi_{\pmin}-\delta,\infty),[q,d])$} \Comment{Lowest point on $\sigma$ still seen by $\pmin$}
\State \Return{$\Ind{\pi}.\textsc{maxIndex}((-\infty,\sigma_{\qmin}+\delta],[\pmin,\pstop-1])$}
\EndIf
\EndFunction
  \end{algorithmic}

\caption{Maximal greedy step}

\label{alg:maxgreedy}
\end{algorithm}

\item \textbf{Arbitrary greedy steps $\GreedyStep_{\pi}(\pi',\sigma')$.}
If, in some situation, it is only required to find an arbitrary index
$p'\in\reach_{\pi'}(p,q)$ such that all $p\le i\le p'$ satisfy $\vis_{\sigma'}(i,q)\subseteq\vis_{\sigma'}(p',q)$
or report that no such index exists, we use the function $\GreedyStep_{\pi}(\pi',\sigma')$
to denote that any such function suffices; in particular, $\MinGreedyStep_{\pi}$
or $\MaxGreedyStep_{\pi}$ can be used.
\end{enumerate}
For $\sigma$, we define the obvious symmetric operations. Note that
in these operations, it is not feasible to traverse all directly feasible
points and check whether the visibility criterion is satisfied, since
this would not necessarily yield a running time of $\bigO^{*}(1)$.
\begin{lem}
Using $O((n+m)\log nm)$ preprocessing time, $\MaxGreedyStep_{\pi}$,
$\MinGreedyStep_{\pi}$ and $\stop_\pi$ can be implemented to run in time $\bigO(\log nm)$.\end{lem}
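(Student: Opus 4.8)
The plan is to establish two things separately: that the claimed preprocessing bound suffices to build the search structures, and that each of $\stop_\pi$, $\MinGreedyStep_\pi$, $\MaxGreedyStep_\pi$ issues only $\bigO(1)$ range queries (plus $\bigO(1)$ arithmetic), each answered in $\bigO(\log nm)$ time. The cost bounds are essentially immediate from the properties of the fractional-cascading structures $\Ind{\pi}$ and $\Ind{\sigma}$; the real work is checking that the short query sequences in Algorithms~\ref{alg:stop}, \ref{alg:smallgreedy} and~\ref{alg:maxgreedy} actually compute the set-theoretic quantities $\reach$, $\stop$, and the minimal/maximal greedy step from their definitions.

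For the cost bounds I would argue as follows. Building the fractional-cascading structure on a set of $N$ planar points costs $\bigO(N\log N)$ and supports rectangle queries in $\bigO(\log N)$; each of $\Ind{\pi}$ and $\Ind{\sigma}$ consists of four such structures (the point set together with its three rotations by multiples of $90\degree$), so the total preprocessing is $\bigO(n\log n + m\log m) = \bigO((n+m)\log nm)$, and every \textsc{minIndex}/\textsc{maxIndex}/\textsc{minHeight}/\textsc{maxHeight} query costs $\bigO(\log n)$ or $\bigO(\log m)$, i.e. $\bigO(\log nm)$. Inspecting the pseudocode, $\stop_\pi$ performs one query; $\MinGreedyStep_\pi$ performs two queries plus one call to $\stop_\pi$; and $\MaxGreedyStep_\pi$ performs at most four queries plus one call to $\stop_\pi$. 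Hence each routine runs in $\bigO(\log nm)$, which together with the above preprocessing bound proves the lemma once correctness is established.

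For correctness I would rely on two structural facts about separated one-dimensional curves. First, since $\pi_i \ge 0 \ge \sigma_k$, the vertex $\pi_i$ sees $\sigma_k$ iff $\pi_i \le \sigma_k + \delta$ (the bound $\pi_i \ge \sigma_k - \delta$ being automatic), and symmetrically; in particular $\reach_\pi(p,q) = [p+1 .. \pstop - 1]$, where $\pstop$ is the first index $i \ge p+1$ with $\pi_i > \sigma_q + \delta$ (or $b+1$ if none), which is exactly what the \textsc{minIndex} call in Algorithm~\ref{alg:stop} returns for a feasible pair $(p,q)$. Second, for fixed $q$ the sets $\vis_\sigma(\cdot,q)$ are linearly ordered by inclusion, and $\vis_\sigma(i,q) \supseteq \vis_\sigma(p,q)$ holds iff $\pi_i - \delta \le \min\{\sigma_k : k \in \vis_\sigma(p,q)\}$, i.e. iff $\pi_i \le \sigma_{\qmin} + \delta$ where $\sigma_{\qmin}$ is the lowest vertex of $\sigma_{q..d}$ still seen by $\pi_p$ (the output of the \textsc{minHeight} query). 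Granting these, the analysis of $\MinGreedyStep_\pi$ runs: $\pcand$, the first index $> p$ with $\pi_i \le \sigma_{\qmin}+\delta$, is the smallest index that can possibly be a greedy step, and it is a valid one because every intermediate $i \in (p,\pcand)$ has $\pi_i > \sigma_{\qmin}+\delta \ge \pi_p$, hence $\vis_\sigma(i,q) \subsetneq \vis_\sigma(p,q) \subseteq \vis_\sigma(\pcand,q)$; it is returned precisely when it lies in $\reach_\pi(p,q)$, i.e. when $\pcand < \pstop$. For $\MaxGreedyStep_\pi$, maximizing $|\vis_\sigma(z,q)|$ over the reachable range $[p+1,\pstop-1]$ is the same as minimizing $\pi_z$ there; the \textsc{minHeight} query with upper bound $\sigma_{\qmin}+\delta$ returns this minimum value $\pmin$ if it meets the condition $\vis_\sigma(\cdot,q)\supseteq\vis_\sigma(p,q)$ and $\infty$ otherwise (in which case no reachable vertex dominates $\pi_p$ and reporting failure is correct), and the final \textsc{maxIndex} query returns the largest reachable index attaining $\pmin$, which satisfies both defining conditions of a maximal greedy step, the monotone-visibility requirement for intermediate indices again following because $\pmin$ is the running minimum over the reachable range.

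The main obstacle is exactly this correctness verification, and within it the equivalence ``$\vis_\sigma(i,q)\supseteq\vis_\sigma(p,q)$ iff $\pi_i \le \sigma_{\qmin}+\delta$'': this is what allows each routine to replace an explicit scan over all reachable vertices — which would ruin the $\bigO^{*}(1)$ bound — by a single \textsc{minHeight} query on one curve followed by a single \textsc{minIndex}/\textsc{maxIndex} query on the other. Once the two structural facts are in place, matching each line of Algorithms~\ref{alg:stop}, \ref{alg:smallgreedy} and~\ref{alg:maxgreedy} against the definitions (including the boundary cases where a query returns $\infty$, where the reachable range is empty, or where $\pi_p$ fails to see $\sigma_q$) is a routine if somewhat tedious case analysis; it is the kind of checking whose difficulty lies in not missing a case rather than in any single step.
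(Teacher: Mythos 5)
Your proposal is correct and follows the same route as the paper: build the fractional-cascading structures $\Ind{\pi},\Ind{\sigma}$ in $\Oh((n+m)\log nm)$ time and observe that each routine issues only $\Oh(1)$ queries, each costing $\Oh(\log nm)$. You additionally verify the correctness of the query sequences (via the equivalence $\vis_\sigma(i,q)\supseteq\vis_\sigma(p,q)\iff\pi_i\le\sigma_{\qmin}+\delta$ and the identification $\reach_\pi(p,q)=[p+1..\pstop-1]$), which the paper leaves implicit in the algorithm descriptions; this extra checking is sound.
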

\begin{proof}
In time $\bigO((n+m)\log nm)$, we can build the data structure $\Ind{\pi}$
for $\pi$ and symmetrically $\Ind{\sigma}$ for $\sigma$. Algorithms~\ref{alg:stop},~\ref{alg:smallgreedy}
and~\ref{alg:maxgreedy} implement the greedy steps and $\stop_{\pi}$ using only a constant number of queries to $\Ind{\pi}$ and
$\Ind{\sigma}$, each with running time $\bigO(\log n)$ or $\bigO(\log m)$.
\end{proof}

For the reduced free-space problem, these operations can be implemented even faster.

\begin{lem}
Let $\pi = \pi_{1..n}$ and $\sigma = \sigma_{1..m}$ be input curves of the reduced free-space problem. Using $O((n+m)\log 1/\eps)$ preprocessing time, $\MaxGreedyStep_{\pi}$,
$\MinGreedyStep_{\pi}$ and $\stop_\pi$ can be implemented to run in time $\bigO(\log 1/\eps)$.\end{lem}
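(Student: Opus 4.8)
The plan is to replace the general-purpose range-search structure $\Ind{\pi}$ --- built by fractional cascading in time $\bigO(n\log n)$ and answering queries in $\bigO(\log n)$ --- by one whose complexity is governed by the \emph{size of the value universe} instead of by $n$. This pays off because in the reduced free-space problem the curves live on a tiny grid: all vertices of $\pi$ are multiples of $\gamma=\tfrac13\eps\delta$ and, as noted in \secref{reductioncontdisc}, every vertex of $\pi$ above $\delta$ may be capped at $2\delta$, so $\pi$ attains only $K:=\bigO(1/\eps)$ distinct heights; an analogous capping of $\sigma$-vertices below $-\delta$ (which can never be within distance $\delta$ of any vertex of $\pi$) leaves $\sigma$ with $\bigO(1/\eps)$ distinct relevant heights as well.

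Given this, I would build a \emph{wavelet tree} of the height sequence $(\pi_1,\dots,\pi_n)$ over its alphabet of $K$ heights (and likewise for $\sigma$), equipping each of its $\bigO(\log K)$ bit-vector levels with constant-time rank/select support; this costs $\bigO(n\log K)$, i.e.\ $\bigO((n+m)\log 1/\eps)$ for both curves. The point is that every query issued by Algorithms~\ref{alg:stop}, \ref{alg:smallgreedy} and~\ref{alg:maxgreedy} has a \emph{one-sided} value constraint (of the form $(\tau,\infty)$ or $(-\infty,\tau]$, with $\infty$ clipped to the cap value), and for a one-sided constraint the required queries become single root-to-leaf descents of the wavelet tree. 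For \textsc{minHeight}/\textsc{maxHeight} (smallest present height $\ge\tau$, resp.\ largest $\le\tau$, among $\pi_p,\dots,\pi_b$) this is the standard ``range next/previous value'' query: descend following $\tau$ while maintaining, via rank, the image of $[p,b]$, always branching into the subtree that can still contain an admissible value and has non-empty image, and read off the leaf --- $\bigO(1)$ per level. For \textsc{minIndex}/\textsc{maxIndex} (leftmost, resp.\ rightmost, index in $[p,b]$ whose height satisfies the constraint): descend following $\tau$; at each node the child whose entire value range already satisfies the constraint contributes its leftmost image position, obtained by one rank plus one select, while the other (``boundary'') child is recursed into; on the way back up a position is lifted one level at a time by a single select. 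Half-open and open intervals are handled by clipping to the grid window, and the ``smallest/leftmost'' versus ``largest/rightmost'' variants, as well as the two orientations of the value constraint, are symmetric and need no extra structure.

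Since $\stop_\pi$, $\MinGreedyStep_\pi$ and $\MaxGreedyStep_\pi$ (and their counterparts on $\sigma$) each issue only a constant number of such queries to $\Ind{\pi}$ and $\Ind{\sigma}$, they then run in $\bigO(\log 1/\eps)$. I expect the only delicate point to be the \textsc{minIndex}/\textsc{maxIndex} query: one must check that it truly costs $\bigO(\log K)$ and not $\bigO(\log^2 K)$ --- i.e.\ that the partial answers from the $\bigO(\log K)$ canonical ``all-admissible'' subtrees get combined along the single descent path using only $\bigO(1)$ rank/select operations per level, rather than a full-height lift for each such subtree. The remaining ingredients (the two capping arguments and the clipping of unbounded intervals) are routine.
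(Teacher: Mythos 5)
Your proposal is correct and follows essentially the same route as the paper's (deliberately sketchy) proof: both exploit that, after capping values outside $[-2\delta,2\delta]$, the vertices live on a grid of only $\Oh(1/\eps)$ heights and have consecutive integer indices, and both replace the $\Oh(\log n)$-depth range-search structure by a $\Oh(\log 1/\eps)$-depth cascading structure over the value universe. Your wavelet tree is precisely a concrete instantiation of the ``adapted fractional-cascading data structure'' whose details the paper omits, and you correctly identify and resolve the one nontrivial implementation point (combining the $\Oh(\log 1/\eps)$ canonical subtrees of a \textsc{minIndex}/\textsc{maxIndex} query with $\Oh(1)$ rank/select operations per level rather than a full lift per subtree).
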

\begin{proof}
We argue that range searching can be implemented with $\Oh(\log 1/\eps)$ query time and $\Oh(n \log 1/\eps)$ preprocessing time. This holds since for the point set ${\cal P}=\{(i,\pi_{i})\mid i \in [1 \ldots n]\}$ (1) the $x$-values are $1,\ldots,n$, so that we can determine the relevant pointers in the first level of the fractional-cascading tree in constant time instead of $\Oh(\log n)$ and (2) all $y$-values are multiples of $\tfrac 13 \eps \delta$ and in $[-2\delta,2\delta]$, i.e., there are only $\Oh(1/\eps)$ different $y$-values. For the latter, note that any point $\pi_p > \delta$ sees no point in $\sigma$, and this is preserved by setting $\pi_p$ to $2\delta$ (and similarly for $\sigma$). Using these properties it is straightforward to adapt the fractional-cascading data structure, we omit the details.
\end{proof}

\subsection{Composition of one-dimensional curves}
\label{sec:1d-composition}

In this subsection, we collect essential composition properties of feasible traversals of one-dimensional curves that enable us to tackle the reduced free-space problem (see Figure~\ref{fig:comps} for an illustration of these results). The first tool is a union lemma that states that two intersecting intervals $I, J$ of $\pi$ that each have a feasible traversal together with $\sigma$ prove that also $\pi_{I\cup J}$ can be traversed together with $\sigma$.
\begin{figure}
\centering
\subfloat[Lemma~\ref{lem:composition}]{
  \includegraphics[width=0.31\textwidth]{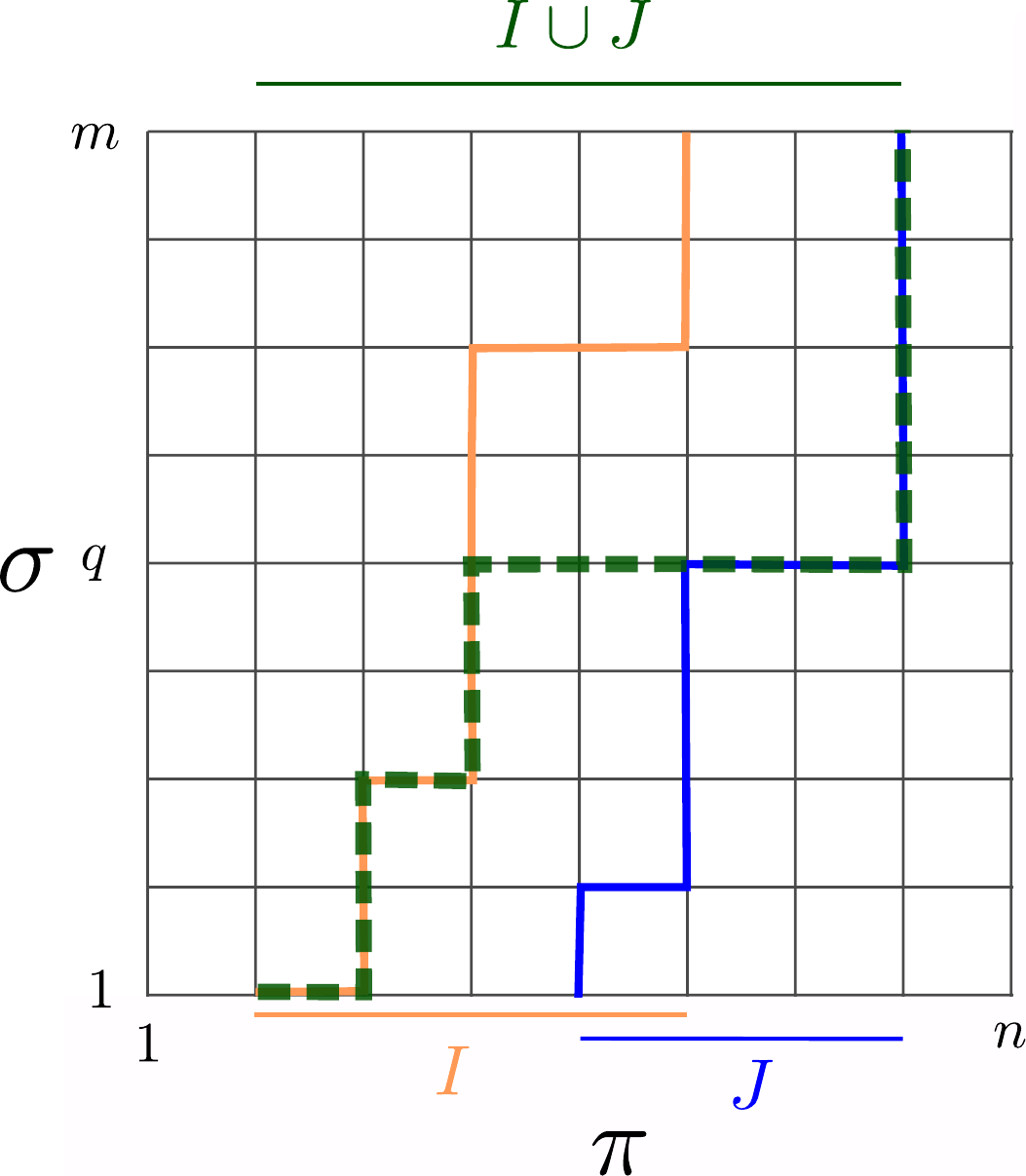}
  \label{fig:compUnion}
}
\,
\subfloat[Lemma~\ref{lem:intersection}]{
  \includegraphics[width=0.31\textwidth]{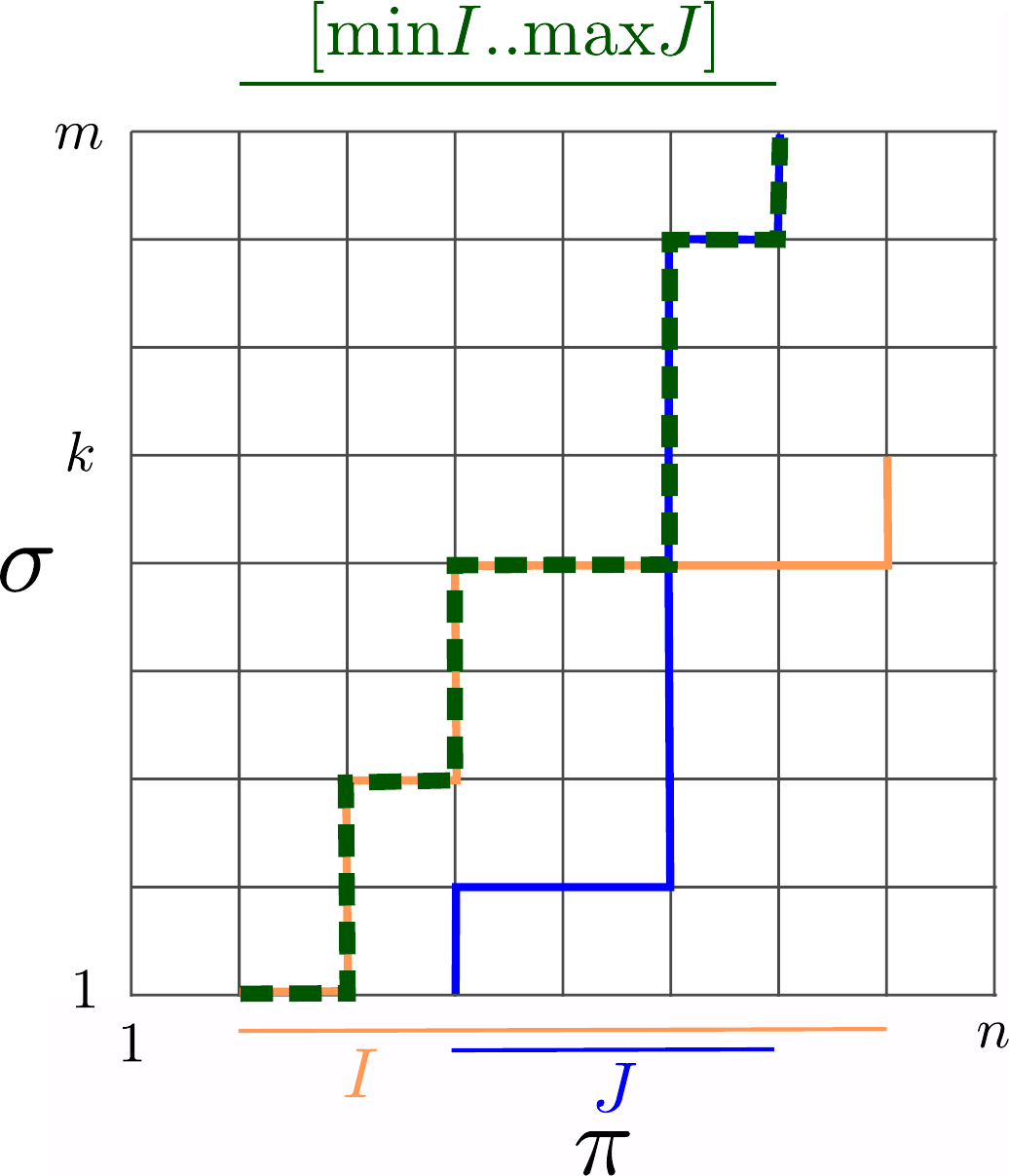}
  \label{fig:compIntersect}
}
\,
\subfloat[Lemma~\ref{lem:composeInitialPiece}]{
  \includegraphics[width=0.31\textwidth]{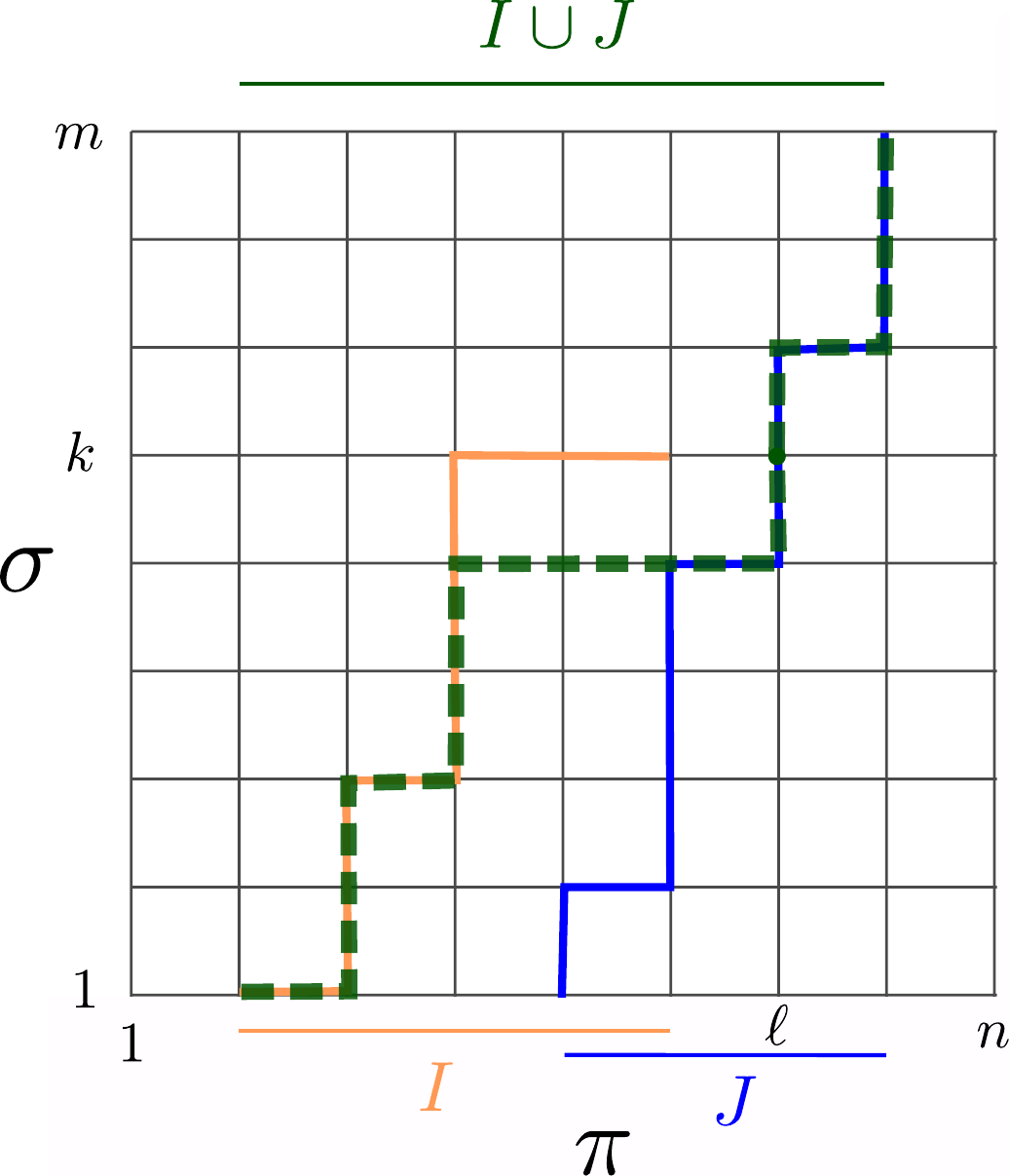}
  \label{fig:compInitial}
}
\caption{Composition properties of feasible traversals of one-dimensional separated curves.}
\label{fig:comps}
\end{figure}

\begin{lem}
\label{lem:composition} Let $\pi = \pi_{1..n}$ and $\sigma = \sigma_{1..m}$
be one-dimensional separated curves and let $I,J\subseteq[1..n]$
be intervals with $I\cap J\ne\emptyset$. If $\ddF(\pi_{I},\sigma)\le \delta$
and $\ddF(\pi_{J},\sigma)\le\delta $, then $\ddF(\pi_{I\cup J},\sigma)\le\delta$.\end{lem}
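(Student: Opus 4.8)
The plan is to take feasible traversals $\phi$ of $(\pi_I,\sigma)$ and $\psi$ of $(\pi_J,\sigma)$ and splice them together at a shared point. Write $I=[a..b]$ and $J=[c..d]$, and since $I\cap J\neq\emptyset$ assume without loss of generality $a\le c\le b\le d$ (the case $I\subseteq J$ or $J\subseteq I$ is trivial, as then $I\cup J$ equals the larger interval). Pick any index $r\in I\cap J$, say $r=c$. First I would observe that, since $\phi$ is a traversal of $(\pi_{a..b},\sigma)$ of width $\le\delta$, at the moment $\phi$ first reaches $\pi_r$ it is paired with some vertex $\sigma_{j_1}$ with $\|\pi_r-\sigma_{j_1}\|\le\delta$; similarly, when $\psi$ is last at $\pi_r$ it is paired with some $\sigma_{j_2}$ with $\|\pi_r-\sigma_{j_2}\|\le\delta$. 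The only obstacle to directly concatenating ``$\phi$ up to $\pi_r$'' with ``$\psi$ from $\pi_r$ onward'' is that we need to reconcile the $\sigma$-progress: we need $j_1\le j_2$, or else some bridging argument.

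This is exactly where \emph{one-dimensionality and separatedness} enter, and I expect it to be the main point of the argument. The key fact is monotonicity of visibility in the height of $\pi_r$: a vertex $\sigma_k$ is ``seen'' by $\pi_r$ iff $\sigma_k\ge\pi_r-\delta$ (using $\pi_r\ge 0\ge\sigma_k$, the other inequality $\pi_r-\sigma_k\le\delta$ is the binding one, and it is equivalent to $\sigma_k\ge\pi_r-\delta$), so the set of vertices of $\sigma$ compatible with $\pi_r$ is simply $\{k : \sigma_k\ge\pi_r-\delta\}$ — it does not depend on where we are in the traversal of $\pi$. Consequently, at $\pi_r$ we may freely ``wait'' and advance along $\sigma$ through any run of consecutive vertices all of which lie $\ge\pi_r-\delta$. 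Concretely: if $j_1\le j_2$ we concatenate the $\phi$-prefix ending at $(\pi_r,\sigma_{j_1})$, then a stretch of pure $\sigma$-advance from $\sigma_{j_1}$ to $\sigma_{j_2}$ while parked at $\pi_r$ (legal because $j_1,j_2$ are both seen by $\pi_r$ and, crucially, we only need \emph{some} two feasible anchor vertices — and both $\phi$ at $\pi_r$ and $\psi$ at $\pi_r$ certify that the relevant $\sigma$-vertices are all $\ge\pi_r-\delta$; more carefully, take $j_1$ to be the \emph{last} $\sigma$-vertex $\phi$ visits while at $\pi_r$ and $j_2$ the \emph{first} $\sigma$-vertex $\psi$ visits while at $\pi_r$, then handle the two orderings separately), and finally the $\psi$-suffix starting at $(\pi_r,\sigma_{j_2})$. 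If instead $j_2<j_1$, we symmetrically splice using $\psi$'s prefix and $\phi$'s suffix, waiting on $\pi_r$ to advance $\sigma$ from $j_2$ up to $j_1$; here we use that $\psi$, being a traversal of $(\pi_{c..d},\sigma)$, reaches $\pi_r=\pi_c$ at the very start paired with $\sigma_1$, so its prefix-to-$\pi_r$ is trivial, while $\phi$'s suffix-from-$\pi_r$ is a valid traversal of $(\pi_{r..b},\sigma_{j_1..m})$ — but we also need it to cover all of $[r..d]$ on the $\pi$-side. This last point forces the symmetric setup: since $r=c$ is the \emph{left} endpoint of $J$, $\psi$'s prefix to $\pi_r$ is trivial, so the clean splice is ``$\phi$ handles $[a..r]$, $\psi$ handles $[r..d]$'', i.e. we always want $j_1\le j_2$, and if not we re-anchor.

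Let me restructure to avoid the case split: I would instead prove the slightly cleaner statement by taking $r=c$ (left endpoint of $J$), let $j^\star$ be the largest index such that $\phi$ visits $(\pi_r,\sigma_{j^\star})$ (so all of $\sigma_1,\dots,\sigma_{j^\star}$ are traversed by the time $\phi$ leaves $\pi_r$), and note $\psi$ begins at $(\pi_c,\sigma_1)$. Then the concatenation ``$\phi$ restricted to the time interval up to leaving $\pi_r$'' followed by ``$\psi$, but with its $\sigma$-coordinate fast-forwarded past $\sigma_1,\dots,\sigma_{j^\star}$'' is a traversal of $(\pi_{I\cup J},\sigma)$: on the $\pi$-side it covers $[a..r]$ then $[r..d]=[c..d]$, giving all of $[a..d]\supseteq I\cup J$; on the $\sigma$-side it is monotone because $\psi$'s $\sigma$-coordinate, being monotone, is $\ge j^\star$ exactly from some time onward and we simply clip the earlier portion; and every pair has width $\le\delta$ because in the clipped portion of $\psi$ we replace pairs $(\pi_i,\sigma_k)$ with $k\le j^\star$ by $(\pi_i,\sigma_{j^\star})$ or by staying at $(\pi_c,\sigma_{j^\star})$ — wait, this needs $\pi_i$ to see $\sigma_{j^\star}$. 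Here is where I genuinely need the structural input: I would invoke that during $\psi$'s traversal, whenever $\psi$ is at some $\sigma_k$ with $k\le j^\star$ it is at some $\pi_i$ with $i\le$ (the $\pi$-index where $\psi$ first reaches $j^\star$), and I argue $\pi_i$ sees $\sigma_{j^\star}$ via one-dimensionality; alternatively, and more robustly, I would just cite or re-derive the monotonicity lemma (\lemref{monotonicity})'s underlying idea. I expect the cleanest writeup simply builds the new traversal explicitly as: follow $\phi$ until the last visit to $\pi_r$ (at $\sigma_{j^\star}$), then follow $\psi$ from its first visit to $(\pi_r,\sigma_{j^\star})$ — and to see $\psi$ \emph{does} visit $(\pi_r,\sigma_{j^\star})$ (or can be modified to), use that $\psi$ starts at $(\pi_c,\sigma_1)=(\pi_r,\sigma_1)$ and is free to advance $\sigma$ to $j^\star$ while parked at $\pi_r$, since $\pi_r$ sees $\sigma_1,\dots,\sigma_{j^\star}$ (these are all seen because $\phi$ witnessed $\pi_r$ being within $\delta$ of $\sigma_{j^\star}$, hence — by one-dimensional separatedness, $0\ge\sigma_k\ge$ ??? — no, I need all of $\sigma_1..\sigma_{j^\star}$ seen, not just $\sigma_{j^\star}$). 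The genuine hard step, then, is establishing that $\pi_r$ sees every $\sigma_k$ for $k\le j^\star$: this is false in general but I claim it follows because $\phi$ traverses $\sigma_{1..j^\star}$ — so some sequence of $\pi$-vertices sees them — combined with the fact that $\pi_r$ is the \emph{last} $\pi$-vertex involved, and monotonicity of ``seen'' sets under the ordering of $\pi$-heights. I would isolate this as the crux and prove it using the visibility-monotonicity facts already set up in \secref{1d-greedydec}, namely that $\pi_i\le\pi_{i'}\Rightarrow\vis_\sigma(i,\cdot)\supseteq\vis_\sigma(i',\cdot)$, together with the observation that along $\phi$ the running maximum of $\pi$-height controls which $\sigma$-vertices can still be visited.

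In summary: the skeleton is (1) reduce to $a\le c\le b\le d$ and pick $r=c$; (2) define $j^\star$ as the last $\sigma$-index reached by $\phi$ while at $\pi_r$; (3) the crux — show $\pi_r$ sees $\sigma_1,\dots,\sigma_{j^\star}$, using visibility-monotonicity on the one-dimensional separated curves; (4) modify $\psi$ to start by parking at $\pi_r=\pi_c$ and advancing $\sigma$ from $1$ to $j^\star$, which is legal by (3); (5) concatenate $\phi$'s prefix (up to leaving $\pi_r$) with the modified $\psi$, and check monotonicity and width $\le\delta$ of the result, and that the $\pi$-coordinate covers $[a..d]\supseteq I\cup J$. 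Step (3) is the main obstacle; everything else is routine bookkeeping about traversals.
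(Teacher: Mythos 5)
There is a genuine gap at exactly the step you yourself flag as the crux, step (3): the claim that $\pi_r$ sees every $\sigma_k$ with $k\le j^\star$ is false, and the monotonicity you invoke points the wrong way. \emph{Lower} vertices of $\pi$ see more of $\sigma$, so for $\pi_r$ to inherit the visibility of all the $\pi$-vertices that $\phi$ used while covering $\sigma_{1..j^\star}$, it would have to be the lowest of them, not the last in index order. Concretely, take $\pi=(0,\delta,0)$, $\sigma=(0,-\delta,0)$, $I=[1..2]$, $J=[2..3]$, so $r=2$. Both hypotheses hold: $\phi$ traverses all of $\sigma$ while parked at $\pi_1=0$ and only then steps to $\pi_2$, and $\psi$ steps from $\pi_2$ to $\pi_3=0$ immediately and then traverses $\sigma$; all pairs used have distance at most $\delta$. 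Any feasible $\phi$ must end at $(\pi_2,\sigma_3)$, so $j^\star=3$, yet $\|\pi_2-\sigma_2\|=2\delta>\delta$: parking at $\pi_r$ while sweeping $\sigma$ from $1$ to $j^\star$ is illegal, and the fallback you consider (fast-forwarding $\psi$'s $\sigma$-coordinate past $j^\star$) fails for the same reason, since it would pair $\sigma_{j^\star}$ with $\pi$-vertices that need not see it. The underlying problem is structural: you try to splice at a fixed $\pi$-vertex, but no single vertex of $\pi$ need be compatible with an entire prefix of $\sigma$.

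The paper splices on the other axis. It reparameterizes the two traversals to share one $\sigma$-clock, $\phi_I(t)=(\psi_I(t),f(t))$ and $\phi_J(t)=(\psi_J(t),f(t))$. If the $\pi$-positions ever coincide, it concatenates there. Otherwise $\psi_I(t)<\psi_J(t)$ throughout, and the bridge is built at the \emph{highest} vertex $\sigma_q$ of $\sigma$: feasibility of both traversals forces $\sigma_q$ to see all of $\pi_{I\cup J}$ (here the monotonicity works in your favor, since higher $\sigma$-vertices see more of $\pi$), so at the time $t^\ast$ with $f(t^\ast)=q$ one walks from $\psi_I(t^\ast)$ to $\psi_J(t^\ast)$ along $\pi$ while parked at $\sigma_q$. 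This is precisely the ``bridging argument'' you anticipated when worrying about reconciling $j_1$ and $j_2$, but the correct bridge is a maximal $\sigma$-vertex seeing an interval of $\pi$, not a $\pi$-vertex seeing a prefix of $\sigma$.
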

\begin{proof}
If $I\subseteq J$, the claim is trivial. W.l.o.g, let $I=[a_{I}..b_{I}]$
and $J=[a_{J}..b_{J}]$, where $a_{I}\le a_{J}\le b_{I}\le b_{J}$.
Let $\phi_{I}$ (and $\phi_{J}$) be a feasible traversal of $(\pi_{I},\sigma)$
(and $(\pi_{J},\sigma)$, respectively). By reparameterization,
we can assume that $\phi_{I}(t)=(\psi_{I}(t),f(t))$ and $\phi_{J}(t)=(\psi_{J}(t),f(t))$
for suitable (non-decreasing onto) functions $\psi_{I},\psi_{J}:[0,1]\to[1..n]$ and
$f:[0,1]\to[1..m]$. One of the following cases occurs.

\emph{Case 1:} There is some $0\le t\le1$ with $\psi_{I}(t)=\phi_{J}(t)$.
Then we can concatenate $\phi_{I}(0,t)$ and $\phi_{J}(t,1)$ to obtain
a feasible traversal of $\phi_{I\cup J}$.

\emph{Case 2: }For all $0\le t\le1$, we have $\psi_{I}(t)<\psi_{J}(t)$.
Let $\sigma_{q}$ be the highest point on $\sigma$. By $\ddF(\pi_{I},\sigma)\le \delta$
and $\ddF(\pi_{J},\sigma)\le \delta$, the point $\sigma_{q}$ sees all points on
$\pi_{I\cup J}$. There is some $0\le t^{\ast}\le1$ with $f(t^{\ast})=q$.
We can concatenate $\phi_{I}(0,t)$ and the traversal of $\pi_{\psi_{I}(t^{\ast})..\psi_{J}(t^{\ast})}$
and $\sigma_{q}$ to obtain a feasible traversal of $\pi_{a_{I}..\psi_{I}(t^{\ast})}$
and $\sigma_{1.. f(t^{\ast})}$. Appending $\phi_{J}(t^{\ast},1)$
to this traversal yields $\ddF(\pi_{a_{I}.. b_{J}},\sigma)\le\delta$.
\end{proof}

The second result formalizes situations in which a traversal $\phi$ of subcurves has to cross a traversal $\psi$ of other subcurves, yielding the possibility to follow $\phi$ up to the crossing point and to follow $\psi$ from there on.

\begin{lem}
\label{lem:intersection}
Let $\pi = \pi_{1..n}$ and $\sigma = \sigma_{1..m}$
be one-dimensional curves and consider intervals
$I=[a_{I}..b_{I}]$ and $J=[a_{J}..b_{J}]$ with $J \subseteq I \subseteq [1..n]$,
and $K=[1..k] \subseteq [1..m]$. If $\ddF(\pi_{I},\sigma_{K})\le\delta$
and $\ddF(\pi_{J},\sigma)\le \delta$, then $\ddF(\pi_{a_{I}..b_{J}},\sigma)\le\delta$.\end{lem}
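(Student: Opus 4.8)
The plan is to construct a feasible traversal of $(\pi_{a_I..b_J},\sigma)$ by concatenating a prefix of a traversal of $(\pi_I,\sigma_K)$ with a suffix of a traversal of $(\pi_J,\sigma)$, joined at a point where the two traversals ``cross''. So fix a feasible traversal $\phi$ of $(\pi_I,\sigma_K)$ and a feasible traversal $\psi$ of $(\pi_J,\sigma)$, viewed as monotone staircase paths in the free-space; thus $\phi$ runs from the vertex $(a_I,1)$ to $(b_I,k)$ and $\psi$ from $(a_J,1)$ to $(b_J,m)$. Let $\psi'$ be the maximal prefix of $\psi$ whose $\sigma$-coordinate stays in $[1..k]$; then $\psi'$ runs from $(a_J,1)$ to $(c,k)$ for some $c\le b_J$ and, like $\phi$, stays within $\pi$-indices $[a_I..b_I]$ and $\sigma$-indices $[1..k]$. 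Since $J\subseteq I$ we have $a_I\le a_J$ and $c\le b_J\le b_I$, so at $\sigma$-index $1$ the path $\phi$ does not lie strictly to the right of $\psi'$ (in the $\pi$-direction), whereas at $\sigma$-index $k$ it does not lie strictly to the left.

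The key step is to conclude that $\phi$ and $\psi'$ share a common free-space vertex $(p^{*},q^{*})$. For separated curves this follows exactly along the lines of the proof of \lemref{composition}: by \lemref{conttodisc} we may assume $\phi$ and $\psi'$ are continuous traversals, reparameterize them to have a common $\sigma$-component, and apply the intermediate value theorem to the continuous difference of their $\pi$-components, which is $\le0$ at time $0$ and $\ge0$ at time $1$. In general one argues directly on the staircases: let $I^{\phi}(q)$ and $I^{\psi'}(q)$ denote the (contiguous) sets of $\pi$-indices visited by $\phi$ and $\psi'$ while their $\sigma$-coordinate equals $q$. Using that leaving the rightmost vertex at $\sigma$-index $q$, and entering the leftmost vertex at $\sigma$-index $q+1$, each change the $\pi$-index by at most $1$, a short case analysis shows that $I^{\phi}(q)$ cannot lie entirely left of $I^{\psi'}(q)$ while $I^{\phi}(q+1)$ lies entirely right of $I^{\psi'}(q+1)$; together with the boundary conditions at $q=1$ and $q=k$ this forces $I^{\phi}(q^{*})\cap I^{\psi'}(q^{*})\ne\emptyset$ for some $q^{*}\in[1..k]$, i.e.\ a shared vertex $(p^{*},q^{*})$.

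It remains to concatenate: follow $\phi$ from $(a_I,1)$ to $(p^{*},q^{*})$ and then follow $\psi$ from $(p^{*},q^{*})$ to $(b_J,m)$. Both halves are monotone, both stay in the free-space $\FS_{\le\delta}(\pi,\sigma)$, and they meet at $(p^{*},q^{*})$, so their concatenation is a monotone path in $\FS_{\le\delta}(\pi,\sigma)$ from $(a_I,1)$ to $(b_J,m)$. Since $(p^{*},q^{*})$ lies on $\psi'$ we have $a_I\le a_J\le p^{*}\le b_J$, hence the first half uses only $\pi$-indices in $[a_I..b_J]$ and the second half only $\pi$-indices in $[p^{*}..b_J]\subseteq[a_I..b_J]$, while the $\sigma$-indices run over all of $[1..m]$. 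Thus the concatenation is a feasible traversal of $(\pi_{a_I..b_J},\sigma)$, giving $\ddF(\pi_{a_I..b_J},\sigma)\le\delta$. The main obstacle is exactly the crossing step: one must ensure the two traversals meet at a genuine free-space \emph{vertex} (not merely at an interpolated point), which is what the diagonal-step case analysis -- or, for separated curves, the detour through continuous traversals -- takes care of.
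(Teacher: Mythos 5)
Your proof is correct and follows essentially the same route as the paper's: both arguments compare, level by level on $\sigma$, the intervals of $\pi$-indices visited by the two traversals, use the boundary conditions $a_I\le a_J$ and $b_I\ge b_J$ together with the fact that staircase steps change each index by at most one to force a shared free-space vertex at some level $q^{*}\le k$, and then concatenate the prefix of $\phi$ with the suffix of $\psi$ at that vertex. The only cosmetic difference is that you explicitly truncate $\psi$ to its prefix $\psi'$ on $\sigma$-levels $[1..k]$, which the paper does implicitly by only indexing levels $1$ through $K$.
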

\begin{proof}
Let $\phi$ be a feasible traversal of $\pi_{I}$ and $\sigma_{K}$
and $\psi$ a feasible traversal of $\pi_{J}$ and $\sigma$. We first
show that $\phi$ and $\psi$ cross, i.e., there are $0\le t,t'\le1$
such that $\phi(t)=\psi(t')$. For all $k\in[1.. K]$, let $[s_{k}^{\phi}.. e_{k}^{\phi}]$
denote the interval of points that $\phi$ traverses on $\pi$ while
staying in $\sigma_{k}$. Similarly, $[s_{k}^{\psi}.. e_{k}^{\psi}]$
denotes the interval of points $\psi$ traverses on $\pi$ while staying
in $\sigma_{k}$. 
Assume for contradiction that $[s_{k}^{\phi}.. e_{k}^{\phi}]$
and $[s_{k}^{\psi}.. e_{k}^{\psi}]$ are disjoint for all $1\le k\le K$.
Then initially, we have $s_{1}^{\phi}=a_{I}\le a_{J}=s_{1}^{\psi}$
and hence $e_{1}^{\phi}<s_{1}^{\psi}$. This implies $s_2^\phi \le e_1^\phi+1 \le s_1^\psi \le s_2^\psi$ and inductively we obtain $e_k^\phi < s_k^\psi \le e_k^\psi$ for all $k \in [1..K]$. This contradicts $e_K^\phi = b_I \ge b_J \ge e_K^\psi$. Hence,
for some $1\le k\le K$, $[s_{k}^{\phi}.. e_{k}^{\phi}]$ and $[s_{k}^{\psi}.. e_{k}^{\psi}]$
intersect, which gives $\phi(t)=(p,k)=\psi(t')$ for any $p\in[s_{k}^{\phi},e_{k}^{\phi}]\cap[s_{k}^{\psi},e_{k}^{\psi}]$
and the corresponding $0\le t,t'\le1$. 
By concatenating $\phi(0,t)$ with $\psi(t',1)$, we obtain a feasible
traversal of $\pi_{a_{I}.. b_{J}}$ and~$\sigma$.\end{proof}

The last result in our composition toolbox strengthens \lemref{composition} to the case that the traversal of $\pi_I$ uses only an initial subcurve $\sigma_{1..k}$ of $\sigma$ and not the complete curve.
\begin{lem}
\label{lem:composeInitialPiece}
Let $\pi = \pi_{1..n}$ and $\sigma = \sigma_{1..m}$
be one-dimensional separated curves and consider intervals
$I=[a_{I}.. b_{I}]$ and $J=[a_{J}.. b_{J}]$ with $1\le a_{I}\le a_{J}\le b_{I}\le b_{J}\le n$,
and $K=[1.. k] \subseteq [1..m]$. If $\ddF(\pi_{I},\sigma_{K})\le \delta$
and $\ddF(\pi_{J},\sigma)\le \delta$, then $d_{F}(\pi_{I\cup J},\sigma)\le\delta$.\end{lem}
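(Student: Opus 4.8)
The plan is to combine a feasible traversal $\phi$ of $(\pi_I,\sigma_K)$ with a feasible traversal $\psi$ of $(\pi_J,\sigma)$, in the spirit of the proof of \lemref{composition}, but being careful that $\phi$ only ever sees the initial subcurve $\sigma_K=\sigma_{1..k}$. Let $\psi^-$ be the prefix of $\psi$ during which its $\sigma$-coordinate stays in $[1..k]$; it is a feasible traversal of $(\pi_{a_J..e},\sigma_K)$, where $\pi_e$ is the rightmost vertex of $\pi_J$ that $\psi$ reaches while its $\sigma$-coordinate is at most $k$ (so $a_J\le e\le b_J$). Since $\phi$ and $\psi^-$ both traverse all of $\sigma_K$, we may reparameterize so that $\phi(t)=(\psi_I(t),f(t))$ and $\psi^-(t)=(\psi_J(t),f(t))$ for non-decreasing onto functions $\psi_I\colon[0,1]\to[a_I..b_I]$, $\psi_J\colon[0,1]\to[a_J..e]$ and $f\colon[0,1]\to[1..k]$; note $\psi_I(0)=a_I\le a_J=\psi_J(0)$. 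Either $\psi_I(t)=\psi_J(t)$ for some $t$, or (since a value $\psi_I(t)>\psi_J(t)$ would force a crossing) $\psi_I(t)<\psi_J(t)$ for all $t$.

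\emph{Case 1:} $\psi_I(t)=\psi_J(t)$ for some $t$. Then $\phi$ and $\psi$ share the point pair $(\psi_I(t),f(t))$, and concatenating the prefix of $\phi$ up to time $t$ with the suffix of $\psi$ from this point yields a feasible traversal of $(\pi_{a_I..b_J},\sigma)$, exactly as in Case~1 of \lemref{composition}.

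\emph{Case 2:} $\psi_I(t)<\psi_J(t)$ for all $t$. Then $b_I=\psi_I(1)<\psi_J(1)=e$, so $[a_I..b_I]\cup[a_J..e]=[a_I..e]$ (using $a_J\le b_I$). Let $\sigma_q$ be the highest vertex among $\sigma_1,\dots,\sigma_k$. The key observation is that $\sigma_q$ sees all of $\pi_{a_I..e}$: every vertex $\pi_p$ with $a_I\le p\le b_I$ is, at some moment of $\phi$, paired with some $\sigma_j$, $j\le k$, so $\pi_p\le\sigma_j+\delta\le\sigma_q+\delta$; likewise every $\pi_p$ with $a_J\le p\le e$ is, at some moment of $\psi^-$, paired with some $\sigma_j$, $j\le k$, so again $\pi_p\le\sigma_q+\delta$. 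Now pick a time $t^*$ with $f(t^*)=q$ and set $p^*:=\psi_I(t^*)$, $\hat p:=\psi_J(t^*)$, so $a_I\le p^*<\hat p\le e\le b_J$. We stitch together three feasible pieces: (i) the prefix of $\phi$ up to time $t^*$, traversing $\pi_{a_I..p^*}$ with $\sigma_{1..q}$; (ii) a traversal of $\pi_{p^*..\hat p}$ staying fixed at $\sigma_q$, feasible since $\sigma_q$ sees $\pi_{p^*..\hat p}\subseteq\pi_{a_I..e}$; and (iii) the suffix of $\psi$ from the point pair $(\hat p,q)$ (which $\psi$ visits, since $\psi^-(t^*)=(\hat p,q)$), traversing $\pi_{\hat p..b_J}$ with $\sigma_{q..m}$. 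The concatenation is monotone in both coordinates and is a feasible traversal of $\pi_{a_I..b_J}=\pi_{I\cup J}$ with $\sigma$. As $\pi,\sigma$ are separated we have $d_F=\ddF$ here (\lemref{conttodisc}), so $d_F(\pi_{I\cup J},\sigma)\le\delta$.

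I expect the main obstacle to be the bookkeeping around the discrete traversals and the reparameterization — in particular, arguing that $\phi$ and $\psi^-$ can be placed on a common ``$\sigma$-clock'' $f$, and that the two cases are genuinely exhaustive in the discrete model. This can be made rigorous exactly as in the proof of \lemref{intersection}: for each level $j\in[1..k]$ track the intervals $[s^\phi_j..e^\phi_j]$ and $[s^\psi_j..e^\psi_j]$ of $\pi$-vertices traversed while staying at $\sigma_j$; either some such pair of intervals meets (giving the common point of Case~1), or an induction on $j$ using pairwise disjointness and $s^\phi_1=a_I\le a_J=s^\psi_1$ shows $e^\phi_j<s^\psi_j$ for all $j$ — which is exactly the inequality $p^*<\hat p$ used above in Case~2.
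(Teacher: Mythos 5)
Your proof is correct and takes essentially the same route as the paper: the paper cuts the traversal of $(\pi_J,\sigma)$ at a visit to $(\ell,k)$, applies the union lemma (\lemref{composition}) to the two resulting traversals of $\sigma_K$, and appends the tail, whereas you inline the union lemma's crossing-versus-highest-point-bridge argument instead of citing it. The extra care that the concatenation point is one that $\psi$ actually visits, and the remark on making the case split rigorous via the level-interval argument of \lemref{intersection}, are both sound.
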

\begin{proof}
Let $\phi$ be any feasible traversal of $\pi_{J}$ and $\sigma$.
There exists $a_{J}\le\ell\le b_{J}$ with $\phi(t)=(\ell,k)$ for
some $0\le t\le1$. Hence $\phi$ restricted to $[0,t]$ yields a
feasible traversal of $\pi_{a_{J}..\ell}$ and $\sigma_{K}$, i.e.,
$\ddF(\pi_{a_{J}..\ell},\sigma_{K})\le \delta$. Since $I$ and $[a_{J}..\ell]$
are intersecting, Lemma~\ref{lem:composition} yields that $\ddF(\pi_{a_{I}..\ell},\sigma_{K})\le \delta$.
Let $\psi$ be such a feasible traversal of $\pi_{a_{I}..\ell}$
and $\sigma_{K}$. Concatenating $\psi$ at $\psi(1)=(\ell,k)=\phi(t)$
with $\phi(t,1)$, we construct a feasible traversal of $\pi_{a_{I}.. b_{J}}$
and $\sigma$, proving the claim.
\end{proof}

\subsection{Solving the Reduced Free-space Problem}

In this section, we solve the reduced free-space problems, using the structural properties derived in the previous section and the principles underlying the greedy algorithm of \secref{1d-greedydec}. Recall that the greedy steps implemented as discussed in Section~\ref{sub:implementGreedy} run in time $\Oh(\log 1/\eps)$ on the input curves of the reduced free-space problem. 

\subsubsection{Single Entry}
\label{sec:1d-single}

Given the separated curves $\pi=(\pi_{1},\dots,\pi_{n})$ and $\sigma=(\sigma_{1},\dots,\sigma_{m})$
and entry set $E=\{1\}$, we show how to compute $F^{\sigma}$. We
present the following recursive algorithm.

\begin{algorithm}[H]
\begin{algorithmic}[1]
   \Statex

\Function{Find-$\sigma$-exits}{$\pi_{p.. b}, \sigma_{q.. d}$}
\If{$q=d$}
\If{$\stop_\pi(\pi_{p..b},\sigma_q) = b+1$}
\State \Return $\{q\}$ \Comment{The end of $\pi$ is reachable while staying in $\sigma_q$}
\Else{ \Return{$\emptyset$}}
\EndIf
\EndIf
\Statex
\If{$p'\gets \MaxGreedyStep_\pi(\pi_{p.. b}, \sigma_{q.. d})$}
	\State \Return{\textsc{Find-$\sigma$-exits}($\pi_{p'.. b},\sigma_{q .. d})$)}
\ElsIf{$q'\gets \GreedyStep_\sigma(\pi_{p .. b},\sigma_{q.. d})$}
\State \Return{$\textsc{Find-$\sigma$-exits}(\pi_{p .. b},\sigma_{q .. q'-1}) \cup
 \textsc{Find-$\sigma$-exits}(\pi_{p .. b},\sigma_{q' .. d})$}
\Else
\State \Return $\textsc{Find-$\sigma$-exits}(\pi_{p .. b},\sigma_{q .. d-1})$ \Comment{No greedy step possible}
\EndIf
\EndFunction
  \end{algorithmic}

\caption{Special Case: Single entry}

\label{alg:sigmaExits}
\end{algorithm}

The following property establishes that a greedy step on a long curve
is also a greedy step on a shorter curve. Clearly, the converse
does not necessarily hold.
\begin{prop}
\label{prop:greedyToLaterEnd}Let $1\le p\le P\le n$ and $1\le q\le Q\le m$.
Any greedy step on $\pi$ from $(p,q)$ to $(p',q)$ with $p'\le P$
is also a greedy step with respect to $\tilde{\pi}:=\pi_{p.. P}$
and $\tilde{\sigma}:=\sigma_{q.. Q}$, i.e., if there is some $p'\le P$
with $\vis_{\sigma}(i,q)\subseteq\vis_{\sigma}(p',q)$ for all $p\le i\le p'$,
then also $\vis_{\tilde{\sigma}}(i,q)\subseteq\vis_{\tilde{\sigma}}(p',q)$.\end{prop}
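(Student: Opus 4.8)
The plan is to reduce the whole statement to one elementary observation: truncating the curves merely \emph{restricts} the relevant visibility sets to the surviving index ranges, and restriction (i.e.\ intersection with a fixed set) preserves set inclusions. So almost all the work is bookkeeping about the index convention for subcurves.

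First I would record the two identities that make this work. Since $\tilde\sigma := \sigma_{q..Q}$ keeps the vertices $\sigma_q,\dots,\sigma_Q$ unchanged and only discards the indices outside $[q..Q]$, the definition of $\vis$ gives, for every index $i$,
$$\vis_{\tilde\sigma}(i,q) = \{k \mid q \le k \le Q,\ \sigma_k \ge \pi_i - \delta\} = \vis_\sigma(i,q) \cap [q..Q].$$
Symmetrically $\vis_{\tilde\pi}(p,j) = \vis_\pi(p,j) \cap [p..P]$ for every $j$, and consequently $\reach_\pi(p,q) \cap [p..P] \subseteq \reach_{\tilde\pi}(p,q)$: if $[p+1..p'] \subseteq \vis_\pi(p,q)$ with $p' \le P$, then $[p+1..p'] \subseteq \vis_{\tilde\pi}(p,q)$, so this contiguous block witnesses $p' \in \reach_{\tilde\pi}(p,q)$.

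Then the proof would be a one-line computation. Let $p'$ be a greedy step on $\pi$ from $(p,q)$ with $p' \le P$, so $p' \in \reach_\pi(p,q)$ and $\vis_\sigma(i,q) \subseteq \vis_\sigma(p',q)$ for all $p \le i \le p'$. By the previous paragraph $p' \in \reach_{\tilde\pi}(p,q)$, and intersecting the hypothesis inclusion with the fixed set $[q..Q]$ yields
$$\vis_{\tilde\sigma}(i,q) = \vis_\sigma(i,q)\cap[q..Q] \subseteq \vis_\sigma(p',q)\cap[q..Q] = \vis_{\tilde\sigma}(p',q)$$
for all $p \le i \le p'$. This is exactly the definition of a greedy step on $\pi$ from $(p,q)$ with respect to $\tilde\pi,\tilde\sigma$. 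There is essentially no real obstacle here: the only point requiring care is the indexing convention (a subcurve inherits the original indices and heights), which is precisely what makes the visibility sets of the truncated curves the literal restrictions of the original ones; once that is pinned down, the claim follows because inclusion is stable under intersection with a fixed set.
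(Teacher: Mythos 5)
Your proof is correct and follows essentially the same route as the paper's: both rest on the identity $\vis_{\tilde{\sigma}}(i,q)=\vis_{\sigma}(i,q)\cap[q..Q]$, so that intersecting the hypothesized inclusions with the fixed set $[q..Q]$ gives the claim, with the restriction of $\pi$ only mattering for the trivial requirement $p'\le P$. Your extra paragraph verifying $p'\in\reach_{\tilde{\pi}}(p,q)$ just spells out a detail the paper dispatches in one sentence.
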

\begin{proof}
From the definition of $\vis_{\sigma}$, we immediately derive $\vis_{\tilde{\sigma}}(i,q)=\vis_{\sigma}(i,q)\cap [q..Q]\subseteq\vis_{\sigma}(p',q)\cap[q..Q]=\vis_{\tilde{\sigma}}(p',q)$
for all $p\le i\le p'$. Restricting the length of $\pi$ also has
no influence on the greedy property, except for the trivial requirement
that $p'$ still has to be contained in the restricted curve.\end{proof}
\begin{lem}
Algorithm~\ref{alg:sigmaExits} correctly identifies $F^{\sigma}$
given the single entry $E=\{1\}$.\end{lem}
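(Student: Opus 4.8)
I would prove correctness of \algref{sigmaExits} by induction on the recursion, and more specifically on the quantity $(b-p) + (d-q)$, which strictly decreases in every recursive call. The statement to establish is: for any subcurves $\pi_{p..b}$ and $\sigma_{q..d}$, the call $\FindSigmaExits(\pi_{p..b},\sigma_{q..d})$ returns exactly the set of indices $f \in [q..d]$ such that $\ddfr(\pi_{p..b},\sigma_{q..f}) \le \delta$. Taking $p=1$, $b=n$, $q=1$, $d=m$ then yields $F^\sigma$ for the single entry $E=\{1\}$.

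\textbf{Base case and the three branches.} The base case $q=d$ is immediate: $\ddfr(\pi_{p..b},\sigma_q) \le \delta$ holds iff $\sigma_q$ sees every vertex of $\pi_{p..b}$, which is exactly the condition $\stop_\pi(\pi_{p..b},\sigma_q) = b+1$. For the inductive step there are three cases matching the branches of the algorithm.
\begin{enumerate}
\item \emph{A max greedy step $p' \gets \MaxGreedyStep_\pi(\pi_{p..b},\sigma_{q..d})$ exists.} Here I must show $\ddfr(\pi_{p..b},\sigma_{q..f}) \le \delta \iff \ddfr(\pi_{p'..b},\sigma_{q..f}) \le \delta$ for every relevant $f$. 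The ``$\Leftarrow$'' direction is easy: prepend a traversal of $\pi_{p..p'}$ staying fixed at $\sigma_q$, which is legal since $p' \in \reach_\pi(p,q)$. The ``$\Rightarrow$'' direction is the heart of the argument and uses the greedy/monotonicity machinery: given a feasible traversal of $\pi_{p..b},\sigma_{q..f}$, I use \propref{greedyToLaterEnd} to see that $p'$ is still a greedy step on the restricted curves $\pi_{p..b}$, $\sigma_{q..f}$, and then argue as in the correctness proof of the greedy decider (\lemref{GreedyCorrectness}, via the monotonicity \lemref{monotonicity}) that the traversal can be rerouted so that it reaches $\pi_{p'}$ while $\sigma$ has not advanced past $\sigma_q$ — essentially because $\vis_\sigma(i,q) \subseteq \vis_\sigma(p',q)$ for all $p \le i \le p'$ means nothing is lost by jumping directly to $p'$.
\item \emph{No max greedy step on $\pi$, but a greedy step $q' \gets \GreedyStep_\sigma$ exists.} Then I split the exit set at $q'$. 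Every $f < q'$: such an $f$ lies in $[q..q'-1]$ and the claim follows from the inductive call $\FindSigmaExits(\pi_{p..b},\sigma_{q..q'-1})$, after checking that $\ddfr(\pi_{p..b},\sigma_{q..f})\le\delta$ is unaffected by truncating $\sigma$ beyond $q'-1$. Every $f \ge q'$: I need $\ddfr(\pi_{p..b},\sigma_{q..f}) \le \delta \iff \ddfr(\pi_{p..b},\sigma_{q'..f}) \le \delta$. The ``$\Leftarrow$'' direction uses that $q' \in \reach_\sigma(p,q)$, so we can prepend a traversal staying at $\pi_p$. The ``$\Rightarrow$'' direction again invokes the greedy property of $q'$ together with \lemref{monotonicity} to reroute a feasible traversal so that it passes through $(\pi_p,\sigma_{q'})$.
\item \emph{Neither a max greedy step on $\pi$ nor a greedy step on $\sigma$ exists.} This is the ``stuck'' situation. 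By \lemref{GreedyCorrectness} applied to the greedy point $(p,q)$ on $\pi_{p..b},\sigma_{q..d}$, we get $\ddfr(\pi_{p..b},\sigma_{q..d}) > \delta$ and, crucially, $\ddfr(\pi_{p..b},\sigma_{q..\qstop-1}) > \delta$ for the stopping index $\qstop$ — wait, more precisely, if $\qstop \le d$ then $\sigma_{\qstop}$ is the first point after $\sigma_q$ not seen by any reachable $\pi$-vertex, so no exit can be at index $\ge \qstop$, and the recursion on $\sigma_{q..d-1}$ correctly drops $d$. I must be a little careful here: the algorithm recurses on $\pi_{p..b},\sigma_{q..d-1}$, and I claim this loses no valid exit, i.e.\ $\ddfr(\pi_{p..b},\sigma_{q..d}) > \delta$ in the stuck case so $d \notin F^\sigma$, while any $f < d$ is handled by the recursive call. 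The justification that $f=d$ is impossible comes exactly from Case 1 / Case 2 of \lemref{GreedyCorrectness}: in the stuck case either $\pstop \le b$ or $\qstop \le d$, and if $\pstop \le b$ then no $\sigma_{q..f}$ can be traversed with all of $\pi_{p..b}$, while if $\qstop \le d$ then $\ddfr(\pi_{p..b},\sigma_{q..d})>\delta$ directly.
\end{enumerate}

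\textbf{Where the difficulty lies.} The routine parts are the base case, the ``$\Leftarrow$'' (prepending a fixed-coordinate walk) directions, and the bookkeeping that the recursion parameter decreases. The genuine obstacle is the ``$\Rightarrow$'' direction in Cases 1 and 2 — showing that an \emph{arbitrary} feasible traversal of the un-advanced curves can be transformed into one that actually passes through the greedy point $(p',q)$ (resp.\ $(p,q')$). This is where I expect to lean hardest on \lemref{monotonicity}, \lemref{GreedyCorrectness}, and \propref{greedyToLaterEnd}: the key insight is that the definition of a greedy step guarantees $p'$ dominates all intermediate $\pi$-vertices in visibility of $\sigma$, so a traversal that crosses ``through'' some $\pi_i$ with $p \le i \le p'$ at time $t$ can be modified to sit at $\pi_{p'}$ at that time instead, without ever forcing $\sigma$ to backtrack — exactly the crossing-and-rerouting idea formalized in \lemrefs{intersection}{composeInitialPiece}, which I would cite rather than reprove. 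I would also double-check the subtle point that \MaxGreedyStep, by construction (property (ii) in its specification: $p'$ maximizes $|\vis_{\sigma'}(z,q)|$ and $\vis_{\sigma'}(p',q) \supseteq \vis_{\sigma'}(p,q)$), is precisely the step after which no further greedy step on $\pi$ is available, so that the recursion in Case 1 does not ``skip'' a configuration where the $\sigma$-branch should have fired first.
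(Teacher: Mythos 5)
Your proposal matches the paper's proof in structure and substance: the same induction over the recursion, with \propref{greedyToLaterEnd} and \lemref{GreedyCorrectness} justifying that a (max) greedy step on $\pi$ or a greedy split at $q'$ on $\sigma$ loses no reachable exit, the prepend-a-fixed-coordinate-walk argument for the converse inclusions, and \lemref{GreedyCorrectness} ruling out $d$ as an exit in the stuck case. The one thing to repair is your inductive invariant: stated for \emph{arbitrary} subcurves it is false (e.g.\ the base case returns $\{q\}$ via $\stop_\pi$, which only inspects points \emph{after} $\pi_p$, so it can wrongly report an exit when $\sigma_q$ does not see $\pi_p$ itself), so you must carry the paper's precondition that $(p,q)$ is a greedy point pair --- exactly what guarantees $\|\pi_p-\sigma_q\|\le\delta$ and validates your ``$\Leftarrow$'' directions --- noting that every recursive call preserves this precondition.
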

\begin{proof}
Clearly, if $\textsc{Find-\ensuremath{\sigma}-exits}(\pi,\sigma)$
finds and returns an exit $e$ on $\sigma$, then it is contained
in $F^{\sigma}$, since the algorithm uses only feasible (greedy)
steps. Conversely, we show that for all $I=[p.. b]$ and $J=[q.. d]$,
where $(p,q)$ is a greedy point pair of $\pi$ and $\sigma$, and
all $e\in J$ with $\ddF(\pi_{I},\sigma_{J\cap[1.. e]})\le \delta$,
we have $e\in\FindSigmaExits(\pi_{I},\sigma_{J})$, i.e. we find all
exits. 

Consider some call of $\FindSigmaExits(\pi_{I},\sigma_{J})$ for which
the precondition is fulfilled. If $J$ consists only of a single point,
then $J=\{e\}$, and a feasible traversal of $\pi_{I}$ and $\sigma_{J}$
exists if and only if $\sigma_{e}$ sees all points on $\pi_{I}$.
Let $I=[p.. b],$ then this happens if and only if $\stop_{\pi}(\pi_{I},\sigma_{e})=b+1$,
hence the base case is treated correctly.

Assume that $I=[p.. b]$ and a maximal greedy step $p'$ on $\pi$
exists. By Property~\ref{prop:greedyToLaterEnd}, this step is a
greedy step also with respect to $\sigma_{J\cap[1.. e]}$. Hence
by Lemma~\ref{lem:GreedyCorrectness}, if there is a traversal of
$\pi_{p.. b}$ and $\sigma_{J\cap[1.. e]}$, then a traversal
of $\pi_{[p'.. b]}$ and $\sigma_{J\cap[1.. e]}$ also exists.

Consider the case in which $J=[q.. d]$ and a greedy step $q'$
in $\sigma$ exists. If $e<q'$, then $e\in[q.. q'-1]$ and $J\cap[1.. e]=[q.. q'-1]\cap[1.. e]$.
Hence, $e$ is found in the recursive call with $J'=[q.. q'-1]$.
If $e\ge q'$, then by Property~\ref{prop:greedyToLaterEnd}, this
step is a greedy step with respect to the curves $\pi_{I}$ and $\sigma_{J\cap[1.. e]}$.
Again, by Lemma~\ref{lem:GreedyCorrectness}, the existence of a
feasible traversal of $\pi_{I}$ and $\sigma_{J}$ implies that also
a feasible traversal of $\pi_{I}$ and $\sigma_{J\cap[q'.. e]}$
exists.

It remains to regard the case in which no greedy step exists. By Lemma~\ref{lem:GreedyCorrectness},
there is no feasible traversal of $\pi_{1..n}$ and $\sigma_{1..d}$. This implies $e\ne d$ and all exits are found in the recursive
call with $J'=[q,d-1]$.\end{proof}
\begin{lem}
\label{lem:findSigmaExitsTime}$\FindSigmaExits(\pi_{p.. b},\sigma_{q.. d})$
runs in time $\bigO((d-q+1)\cdot\log 1/\eps)$.\end{lem}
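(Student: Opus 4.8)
The claim is a runtime bound for the recursive procedure \FindSigmaExits. The natural approach is to set up a recurrence in the parameter $N := d-q+1$, the number of vertices of the $\sigma$-subcurve, and to argue that the total work is $\Oh(N \log 1/\eps)$. First I would observe that every individual call does only a constant number of greedy-step and $\stop_\pi$ computations, each of which costs $\Oh(\log 1/\eps)$ by the implementation lemma for reduced-free-space input curves (\secref{1d-greedydec}); so it suffices to bound the total number of recursive calls by $\Oh(N)$. The recursion has three branches: (i) a $\MaxGreedyStep_\pi$ succeeds and we recurse on $(\pi_{p'..b},\sigma_{q..d})$ — here $N$ is unchanged, so this is the dangerous branch; (ii) a $\GreedyStep_\sigma$ succeeds and we split $\sigma_{q..d}$ at $q'$ into $\sigma_{q..q'-1}$ and $\sigma_{q'..d}$ — here the two subproblems have sizes $q'-q$ and $d-q'+1$, summing to $N$ (the cut vertex $\sigma_{q'}$ is shared but note $q'-1 \ge q$ since $q' \in \reach_\sigma(p,q)$ means $q' \ge q+1$, so both parts are nonempty and strictly smaller than $N$); (iii) no greedy step exists and we recurse on $\sigma_{q..d-1}$, decreasing $N$ by one.

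\paragraph{Handling the $\pi$-branch.} The key point is that branch (i) cannot occur twice in a row: after a $\MaxGreedyStep_\pi$ to $(p',q)$, by the defining property of a maximal greedy step there is no further greedy step on $\pi$ from $(p',q)$, so the next call immediately goes to branch (ii) or (iii). Hence each $\pi$-step can be charged to the subsequent $\sigma$-step (or base case), and the number of $\pi$-steps is at most one more than the number of calls that enter branches (ii) or (iii) plus base cases. It therefore suffices to bound the number of branch-(ii)/(iii) calls and base cases by $\Oh(N)$. For this, let $T(N)$ denote the maximum number of such calls over all inputs with $\sigma$-length $N$; the recursion gives $T(1) = 1$ (base case), $T(N) \le 1 + T(N-1)$ in branch (iii), and $T(N) \le 1 + T(N_1) + T(N_2)$ with $N_1+N_2 = N$, $N_1,N_2 \ge 1$ in branch (ii). An easy induction shows $T(N) \le 2N-1 = \Oh(N)$. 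Multiplying by the $\Oh(\log 1/\eps)$ cost per call, and adding the $\Oh(N)$ extra $\pi$-steps (each also $\Oh(\log 1/\eps)$), yields the claimed bound $\Oh((d-q+1)\log 1/\eps)$.

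\paragraph{Main obstacle.} The only subtle point is ruling out that the recursion spends unbounded work without making progress on $\sigma$ — i.e., infinitely (or super-linearly) many consecutive $\pi$-branch calls. This is exactly what the "maximal" in $\MaxGreedyStep_\pi$ buys us, and it must be invoked carefully: one has to check that the recursive call in branch (i) is on $(\pi_{p'..b},\sigma_{q..d})$ with the same $(p',q)$ for which $\MaxGreedyStep_\pi$ was evaluated, so that the no-further-greedy-step guarantee of Property/definition of $\MaxGreedyStep_\pi$ applies directly to the next invocation. A second minor check is that in branch (ii) the split genuinely decreases both subproblem sizes (using $q' \ge q+1$), so the recurrence for $T(N)$ is well-founded. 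Both are routine once stated; I would present the charging argument for the $\pi$-steps explicitly and then the one-line induction for $T(N)$.
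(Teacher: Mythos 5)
Your proposal is correct and follows essentially the same approach as the paper: the maximality of the greedy steps on $\pi$ guarantees that the $\pi$-branch cannot repeat, so every $\Oh(\log 1/\eps)$ units of work $\sigma$ is split or shortened, and the total number of such events is $\Oh(d-q+1)$. Your explicit recurrence $T(N)\le 2N-1$ is just a more formal packaging of the paper's charging argument.
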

\begin{proof}
Since the algorithm's greedy steps on $\pi$ are maximal, after each
greedy step on $\pi$, we split $\sigma$ (by a greedy step on $\sigma$)
or shorten $\sigma$ (if no greedy step on $\sigma$ is found). Thus,
it takes at most $\bigO(\log 1/\eps)$ time until $\sigma$ is split or
shortened. The base case is also handled in time $\bigO(\log 1/\eps)$.
In total, this yields a running time of $\bigO((d-q+1)\log 1/\eps)$.
\end{proof}
Note that by swapping the roles of $\pi$ and $\sigma$, $\FindSigmaExits$
can be used to determine $F^{\pi}$ given the single entry $\sigma_{1}$
on $\sigma$. This is equivalent to having the single entry $E=\{1\}$
on $\pi$. Thus, we can also implement the function $\FindPiExits(\pi_{1.. n},\sigma_{1.. m})$
that returns $F^{\pi}$ given the single entry $E=\{1\}$ in
time $\bigO(n\log 1/\eps)$.

\subsubsection{Entries on $\pi$, Exits on $\pi$}

\label{sec:1d-multiPi}

In this section, we tackle the task of determining $F^{\pi}$ given
a set of entries $E$ on $\pi$. It is essential to avoid computing
the exits by iterating over every single entry. We show how to divide
$\pi$ into disjoint subcurves that can be solved by a single call
to $\FindPiExits$ each.

Assume we want to traverse $\pi_{p.. b}$ and $\sigma_{q.. d}$
starting in $\pi_{p}$ and $\sigma_{q}$. Let $u(p):=\max\{p'\in[p,b]\mid\exists q\le q'\le d:\ddF(\pi_{p.. p'},\sigma_{q.. q'})\le \delta\}$ 
be the last point on $\pi$ that is reachable while traversing an
arbitrary subcurve of $\sigma_{q.. d}$ that starts in $\sigma_{q}$.
This point fulfills the following properties.
\begin{lem}
\label{lem:propup}It holds that
\begin{enumerate}
\item If there are $p\le e\le e'\le u(p)$ with $\ddF(\pi_{e.. e'},\sigma_{q.. d})\le\delta$,
then $\ddF(\pi_{p.. e'},\sigma_{q.. d})\le\delta$.
\item For all $p\le e\le u(p)<e'$, we have that $\ddF(\pi_{e.. e'},\sigma_{q.. d})>\delta$.
\end{enumerate}
\end{lem}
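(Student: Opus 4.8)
The plan is to prove both parts directly from the definition of $u(p)$ together with the composition lemmas of Section~\ref{sec:1d-composition}, especially \lemref{composition} and \lemref{composeInitialPiece}.

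\medskip

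\textbf{Part 1.} Suppose $p\le e\le e'\le u(p)$ with $\ddF(\pi_{e..e'},\sigma_{q..d})\le\delta$. By definition of $u := u(p)$ there exist $q\le q'\le d$ with $\ddF(\pi_{p..u},\sigma_{q..q'})\le\delta$; I would write $K:=[q..q']$ so that $\ddF(\pi_{p..u},\sigma_{K})\le\delta$. Now set $I:=[p..u]$ and $J:=[e..e']$. Since $p\le e$ and $e'\le u$, we have $J\subseteq I$, and in particular $I\cap J\ne\emptyset$. The hypothesis gives $\ddF(\pi_{J},\sigma_{q..d})\le\delta$ (the full curve $\sigma_{q..d}$), while the first fact gives $\ddF(\pi_{I},\sigma_{K})\le\delta$ with $K$ an initial subcurve of $\sigma_{q..d}$. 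This is exactly the situation of \lemref{composeInitialPiece} (with the roles of $I$ and $J$, and with $\sigma_{q..d}$ playing the role of ``$\sigma$''): it yields $\ddF(\pi_{I\cup J},\sigma_{q..d})\le\delta$. Since $I\cup J = [p..\max\{u,e'\}] = [p..u] \supseteq [p..e']$ (because $e'\le u$), restricting the traversal to the prefix $\pi_{p..e'}$ shows $\ddF(\pi_{p..e'},\sigma_{q..d})\le\delta$, as claimed. I should double-check the exact index hypotheses of \lemref{composeInitialPiece}: it wants $a_I\le a_J\le b_I\le b_J$; here with ``$I$''$=[e..e']$ and ``$J$''$=[p..u]$ we need $p\le e\le u$ and $e'\le u$ and $u\le u$ --- wait, the lemma as stated requires $a_I \le a_J$, i.e.\ $e \le p$? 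That is false. So instead I apply \lemref{composeInitialPiece} with ``$I$''$:=[p..u]$, ``$J$''$:=[e..e']$ only if $u \le e'$, which need not hold. The clean route is: apply \lemref{composition} is not directly available since only one side uses $\sigma_K$. The safe fix is to first observe $\ddF(\pi_{p..e'},\sigma_{q..d})\le\delta$ follows by noting $[e..e']$ and $[p..u]$ intersect with $[p..u]$ reaching only $\sigma_K$; one applies \lemref{composeInitialPiece} in the form where the curve traversed with the \emph{prefix} $\sigma_K$ is the one whose interval starts earlier (namely $[p..u]$) --- rereading, \lemref{composeInitialPiece} does allow $a_I \le a_J$ with $\ddF(\pi_I,\sigma_K)\le\delta$ and $\ddF(\pi_J,\sigma)\le\delta$, concluding $\ddF(\pi_{I\cup J},\sigma)\le\delta$. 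Taking $I=[p..u]$, $J=[e..e']$ needs $p\le e\le u\le e'$; if instead $e'\le u$ then $J\subseteq I$ and the claim $\ddF(\pi_{p..e'},\sigma_{q..d})\le\delta$ follows from \lemref{intersection} (with $J\subseteq I$, $K=[q..q']$): $\ddF(\pi_{a_I..b_J},\sigma_{q..d})=\ddF(\pi_{p..e'},\sigma_{q..d})\le\delta$. So a short case distinction on whether $e'\le u$ or $e'>u$ (the latter contradicting $e'\le u(p)$, hence impossible) reduces everything to \lemref{intersection}. In fact since $e' \le u(p) = u$ always, \textbf{only} the case $J \subseteq I$ occurs, and Part 1 is a direct application of \lemref{intersection} with $I=[p..u]$, $J=[e..e']$, $K=[q..q']$.

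\medskip

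\textbf{Part 2.} Let $p\le e\le u(p)<e'$ and suppose for contradiction that $\ddF(\pi_{e..e'},\sigma_{q..d})\le\delta$. Then by Part~1 (which applies since $p\le e\le e\le u(p)$, using $e=e$ as the left endpoint, or more simply: $e'>u(p)$ so Part 1 does not literally apply --- here I instead argue directly). Apply \lemref{composeInitialPiece} with $I:=[p..u]$ (where $\ddF(\pi_I,\sigma_{q..q'})\le\delta$ for the witnessing $q'$) and $J:=[e..e']$: the index condition $p\le e\le u\le e'$ holds, so we get $\ddF(\pi_{p..e'},\sigma_{q..d})\le\delta$ with a feasible traversal. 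Truncating this traversal at the moment it first leaves $\pi_{p..u}$ --- more precisely, picking the time $t$ where it is at $\pi_{u}$ and reading off where it is on $\sigma$, namely some $\sigma_{q''}$ with $q\le q''\le d$ --- gives $\ddF(\pi_{p..u},\sigma_{q..q''})\le\delta$, consistent with the definition of $u$. But the traversal continues beyond $\pi_u$, so at its end it has reached $\pi_{e'}$ with $e'>u$; reading off the $\sigma$-coordinate at that point gives some index $\tilde q$ with $\ddF(\pi_{p..e'},\sigma_{q..\tilde q})\le\delta$ and $e'\le b$, $\tilde q\le d$. This directly contradicts the maximality of $u(p)=\max\{p'\in[p,b]\mid\exists q\le q'\le d:\ddF(\pi_{p..p'},\sigma_{q..q'})\le\delta\}$, since $e'$ would be an admissible value strictly larger than $u(p)$. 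Hence no such traversal exists and $\ddF(\pi_{e..e'},\sigma_{q..d})>\delta$.

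\medskip

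\textbf{Main obstacle.} The only delicate point is getting the index bookkeeping in the composition lemmas right --- in particular checking that the ordering hypotheses ($a_I\le a_J\le b_I\le b_J$, or $J\subseteq I$) are met in each invocation, and correctly identifying which curve is traversed against the prefix $\sigma_K$ versus the full curve $\sigma_{q..d}$. Once the right lemma (\lemref{intersection} for Part 1, \lemref{composeInitialPiece} for Part 2) is matched to the right instantiation, both parts are immediate, with Part 2 additionally using only the definitional maximality of $u(p)$ to derive the contradiction.
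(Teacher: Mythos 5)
Your proposal is correct and, after the exploratory detours, lands on exactly the paper's argument: Part~1 is \lemref{intersection} applied with $I=[p..u(p)]$, $J=[e..e']\subseteq I$, and $K=[q..q']$ the witnessing prefix from the definition of $u(p)$, while Part~2 applies \lemref{composeInitialPiece} (with $p\le e\le u(p)\le e'$) to get $\ddF(\pi_{p..e'},\sigma_{q..d})\le\delta$ and contradict the maximality of $u(p)$. No gaps; only the think-aloud passages would need to be trimmed in a final write-up.
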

\begin{proof}
By definition of $u(p)$, there is a $q\le q'\le d$ with $\ddF(\pi_{p.. u(p)},\sigma_{q.. q'})\le \delta$.
Since $[e,e']\subseteq[p,u(p)]$, Lemma~\ref{lem:intersection} proves
the first statement. For the second statement, assume for contradiction
that $\ddF(\pi_{e.. e'},\sigma_{q.. d})\le\delta$. Then, Lemma~\ref{lem:composeInitialPiece}
yields that $\ddF(\pi_{p.. e'},\sigma_{q.. d})\le \delta$. This is
a contradiction to the choice of $u(p)$, since $e'>u(p)$.
\end{proof}
The above lemma implies that we can ignore all entries in $[p.. u(p)]$
except for $p$ and that all exits reachable from $p$ are contained
in the interval $[p.. u(p)]$. This gives rise to the following
algorithm.

\begin{algorithm}[H]
\begin{algorithmic}[1]
   \Statex

\Function{$\pi$-exits-from-$\pi$}{$\pi, \sigma, E$}
\State $S \gets \emptyset$
\While{$E\ne \emptyset$}
  \Let{$\hat{p}$}{pop minimal index from $E$}
  \State $p \gets \hat{p}$, $q\gets 1$
  \Repeat
	\If{$q'\gets\MaxGreedyStep_\sigma(\pi_{p.. n},\sigma_{q.. m})$}
	\Let{$q$}{$q'$}
	\EndIf
	\If{$p'\gets\GreedyStep_\pi(\pi_{p.. n},\sigma_{q.. m})$}
	\Let{$p$}{$p'$}
	\EndIf
  \Until{no greedy step was found in the last iteration}
  \Let{$\bar{p}$}{$\stop_\pi(\pi_{p.. n},\sigma_{q.. m})- 1$} \Comment{determines the maximal reachable point $u(\hat{p})$}
  \Let{$S$}{$S\cup \FindPiExits(\pi_{\hat{p}..\bar{p}},\sigma)$}
  \Let{$E$}{$E \cap [\bar{p}+1,n]$} \Comment{drops all entries in $[\hat{p},u(\hat{p})]$}
\EndWhile
\State \Return $S$
\EndFunction
  \end{algorithmic}

\caption{Given entry points $E$ on $\pi$, compute all exits on $\pi$.}

\label{alg:piExitspiEntries}
\end{algorithm}

\begin{lem}
Algorithm~\ref{alg:piExitspiEntries} correctly computes $F^{\pi}$.\end{lem}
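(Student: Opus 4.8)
The plan is to show the two inclusions separately: every index returned by $\PiExitsFromPi(\pi,\sigma,E)$ lies in $F^\pi$, and every index of $F^\pi$ is returned. For the ``soundness'' direction, observe that each index added to $S$ comes from a call $\FindPiExits(\pi_{\hat p..\bar p},\sigma)$; by the correctness of $\FindSigmaExits$ (applied with roles swapped), such an index $f$ satisfies $\ddF(\pi_{\hat p..f},\sigma)\le\delta$, and since $\hat p\in E$ this witnesses $f\in F^\pi$. Here one must also confirm that the repeat-loop preceding the call correctly computes $\bar p = u(\hat p)$: the loop performs maximal greedy steps on $\sigma$ interleaved with (arbitrary) greedy steps on $\pi$, so by the correctness analysis of the greedy algorithm (Lemma~\ref{lem:GreedyCorrectness}, its ``in particular'' part about $\pstop$) the reached pair $(p,q)$ is such that $\stop_\pi(\pi_{p..n},\sigma_{q..m})-1$ is exactly the largest index $p'$ with a feasible traversal of $\pi_{p..p'}$ and some initial subcurve $\sigma_{1..q'}$ — i.e.\ $u(\hat p)$. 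This is the step I expect to be the main obstacle: one needs to argue carefully that the greedy exploration starting from $(\hat p,1)$ terminates in a pair from which $\stop_\pi$ yields precisely $u(\hat p)$ and not a smaller value, i.e.\ that the greedy steps never ``overshoot'' or ``undershoot'' in a way that changes the reachable prefix of $\pi$. The key is that greedy steps on $\pi$ only improve $\pi$-visibility and maximal greedy steps on $\sigma$ only improve $\sigma$-visibility, so the final pair dominates all intermediate ones in the sense of Lemma~\ref{lem:monotonicity}, and stuckness plus Claim~\ref{claim:stuckness} pin down $u(\hat p)$.

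For the ``completeness'' direction, I would argue that no exit is lost when we discard entries. Suppose $f\in F^\pi$, witnessed by some $e\in E$ with $\ddF(\pi_{e..f},\sigma)\le\delta$. Consider the first entry $\hat p$ popped from $E$ with $\hat p\le e$ that is not subsequently discarded; equivalently, track the sequence $\hat p_1 < \hat p_2 < \cdots$ of popped entries and the corresponding $\bar p_1,\bar p_2,\ldots$. If $e\in[\hat p_i..\bar p_i]$ for the relevant $i$, then $\bar p_i = u(\hat p_i)$ and since $\hat p_i\le e\le f$, part~(1) of Lemma~\ref{lem:propup} (with $q=1$, $d=m$) upgrades $\ddF(\pi_{e..f},\sigma)\le\delta$ to $\ddF(\pi_{\hat p_i..f},\sigma)\le\delta$, provided $f\le u(\hat p_i)=\bar p_i$; and indeed $f\le u(\hat p_i)$ because otherwise part~(2) of Lemma~\ref{lem:propup} would contradict $\ddF(\pi_{e..f},\sigma)\le\delta$. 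Hence $f$ lies in the range $[\hat p_i..\bar p_i]$ passed to $\FindPiExits$, and by its correctness $f$ is returned. The remaining point is that the $\hat p_i$ popped are exactly the ``minimal surviving'' entries and that the intervals $[\hat p_i..\bar p_i]$ together cover every entry (each entry is either popped as some $\hat p_i$ or discarded because it lies in a previous $[\hat p_{i'}..\bar p_{i'}]$); this is immediate from the update $E \gets E\cap[\bar p+1,n]$ and the fact that $\hat p$ is always the current minimum.

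Finally I would note termination: each iteration removes at least $\hat p$ from $E$ (since $\bar p \ge \hat p$), so the while-loop runs at most $|E|$ times. Combining the two inclusions gives $S = F^\pi$, completing the proof.
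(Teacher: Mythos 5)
Your proposal is correct and follows essentially the same route as the paper's proof: soundness from the feasibility of greedy steps, the identity $\bar p = u(\hat p)$ via the $\pstop$ clause of Lemma~\ref{lem:GreedyCorrectness} (plus feasibility for the $\le$ direction), and completeness via parts (1) and (2) of Lemma~\ref{lem:propup} applied to the unique interval $[\hat p_i..\bar p_i]$ containing the witness entry $e$. The step you flag as the main obstacle is handled in the paper exactly as you sketch it, so no gap remains.
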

\begin{proof}
We first argue that for each considered entry $\hat{p}$, the algorithm
computes $\bar{p}=u(\hat{p})$. Clearly, $\bar{p}\le u(\hat{p})$,
since only feasible steps are used to reach $\bar{p}$. If $\bar{p}=m$,
this already implies that also $u(\hat{p})=m$. Otherwise, let $(p,q)$
be the greedy point pair on the curves $\pi_{\hat{p} \ldots n}$ and $\sigma$
for which no greedy step has been found. Then by Lemma~\ref{lem:GreedyCorrectness},
for $\pstop:=\stop_{\pi}(\pi_{p.. n},\sigma_{q.. m})$ and all
$1\le q'\le m$, we have that $\ddF(\pi_{\hat{p}..\pstop},\sigma_{1.. q'})>\delta$.
Hence, $u(\hat{p})<\pstop$. Finally, note that Algorithm~\ref{alg:piExitspiEntries}
computes $\bar{p}=\pstop-1$, which proves $\bar{p}=u(\hat{p})$.

It is clear that every found exit is included in $F^{\pi}$. Conversely,
let $e'\in F^{\pi}$ and $1\le e\le n$ be such that $\ddF(\pi_{e.. e'},\sigma)\le\delta$.
For some $\hat{p}$ with $\hat{p}\le e\le u(\hat{p})=\bar{p}$, we
run $\FindPiExits(\pi_{\hat{p}..\bar{p}},\sigma)$. Hence by Lemma~\ref{lem:propup}~(2),
$e'\le u(\hat{p})$ and by Lemma \ref{lem:propup}~(1), $\ddF(\pi_{\hat{p}.. e'},\sigma)\le\delta$.
Hence, the corresponding call $\FindPiExits(\pi_{\hat{p}..\bar{p}},\sigma)$
will find $e'$.\end{proof}
\begin{lem}
Using preprocessing time $\bigO((n+m)\log 1/\eps)$, Algorithm~\ref{alg:piExitspiEntries}
runs in time $\bigO(n\log 1/\eps)$.\end{lem}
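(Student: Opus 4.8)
The plan is to build the geometric range-search structures $\Ind{\pi}$ and $\Ind{\sigma}$ \emph{once}, for the full curves $\pi$ and $\sigma$; by the preceding lemma on the input curves of the reduced free-space problem this costs $\bigO((n+m)\log 1/\eps)$, and afterwards every greedy step, every $\stop_\pi$ evaluation, and every query performed inside a recursive call to $\FindPiExits$ costs $\bigO(\log 1/\eps)$, because passing to a subcurve $\pi_{p..b}$ merely restricts the index range of a query to $[p,b]$, so no data structure is ever rebuilt. I would also sort the entry set $E$; since $E\subseteq[1..n]$ this takes $\bigO(n)$ time by counting sort, and keeping a pointer into the sorted list makes each update $E\gets E\cap[\bar p+1,n]$ cost $\bigO(1)$ amortized over the whole execution. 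Thus preprocessing fits in $\bigO((n+m)\log 1/\eps)$.

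Next I would observe that the intervals $[\hat p_i,\bar p_i]$ produced by the successive iterations $i=1,2,\dots$ of the \textbf{while} loop are pairwise disjoint: after iteration $i$ we set $E\gets E\cap[\bar p_i+1,n]$, so the index popped next satisfies $\hat p_{i+1}\ge\bar p_i+1$. Hence $\sum_i(\bar p_i-\hat p_i+1)\le n$, and there are at most $|E|\le n$ iterations, so it suffices to show that iteration $i$ runs in $\bigO((\bar p_i-\hat p_i+1)\log 1/\eps)$ time plus $\bigO(1)$ amortized overhead. The call $\FindPiExits(\pi_{\hat p_i..\bar p_i},\sigma)$ fits this bound by the $\pi/\sigma$-swapped version of \lemref{findSigmaExitsTime}, and computing $\bar p_i$ is a single $\stop_\pi$ call; the remaining quantity to control is the length of the greedy walk that precedes it.

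The heart of the argument is bounding that walk. From the correctness proof, the walk stops at a greedy pair $(p,q)$ with $\bar p_i=\stop_\pi(\pi_{p..n},\sigma_{q..m})-1\ge p$; since $p$ is non-decreasing along the walk and each \emph{successful} greedy step on $\pi$ increases $p$ by at least one, the walk makes at most $\bar p_i-\hat p_i$ steps on $\pi$. I would then show that once an iteration of the \textbf{repeat} loop fails to advance $p$, the loop terminates within one further iteration: the $\sigma$-move in that iteration is a \emph{maximal} greedy step, so it either has already failed (and then that iteration is the terminating one), or it moved $q$ to a position from which — with $p$ unchanged — no greedy step on $\sigma$ exists, so in the next iteration $\MaxGreedyStep_\sigma$ fails and $\GreedyStep_\pi$ reproduces the previous (failed) call. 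Consequently iteration $i$ executes at most $(\bar p_i-\hat p_i)+2$ iterations of the \textbf{repeat} loop, each costing $\bigO(\log 1/\eps)$. Summing over all iterations and invoking $\sum_i(\bar p_i-\hat p_i+1)\le n$ gives the claimed total of $\bigO(n\log 1/\eps)$.

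The step I expect to be the main obstacle is precisely this last bound. Since $q$ is reset to $1$ for every entry, using an arbitrary greedy step on $\sigma$ would let $q$ sweep through $\Theta(m)$ positions per entry and degrade the running time to $\bigO(nm\log 1/\eps)$; it is the \emph{maximality} of $\MaxGreedyStep_\sigma$ that lets us charge every productive \textbf{repeat}-iteration to a unit advance of $p$ within the disjoint interval $[\hat p_i,\bar p_i]$, which is what makes the telescoping work. Some care is also needed to confirm that, with the shared data structures, none of the recursive $\FindPiExits$ calls incurs any rebuild cost, so that \lemref{findSigmaExitsTime} applies verbatim.
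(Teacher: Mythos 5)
Your proposal is correct and follows essentially the same route as the paper's proof: the $\FindPiExits$ calls are charged to the disjoint intervals $[\hat p_i,\bar p_i]$ via \lemref{findSigmaExitsTime}, and the greedy-step calls are charged to advances of $p$, with the maximality of $\MaxGreedyStep_\sigma$ guaranteeing that only $\Oh(1)$ calls occur between consecutive values of $p$. Your version is somewhat more explicit than the paper's (in particular about why the repeat loop terminates within one iteration after $p$ stops advancing, and about the one-time construction of the range-search structures), but the charging scheme is the same.
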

\begin{proof}
We first bound the cost of all calls $\FindPiExits(\pi_{I_{i}},\sigma)$.
Clearly, all intervals $I_{i}$ are disjoint with $\bigcup I_{i}\subseteq[1.. n]$.
Hence, by Lemma~\ref{lem:findSigmaExitsTime}, the total time spent
in these calls is bounded by $\bigO(\sum_{i}|I_{i}|\log(1/\eps))=\bigO(n\log 1/\eps)$.
To bound the number greedy steps, let $p_{1},\dots,p_{k}$ be the
distinct indices considered as values of $p$ during the execution
of $\PiExitsFromPi(\pi,\sigma)$. Between changing $p$ from each
$p_{i}$ to $p_{i+1}$, we will make, by maximality, at most one call
to $\MaxGreedyStep_{\sigma}$ and at most call to $\GreedyStep_{\pi}$.
Since $k\le n,$ the total cost of greedy calls is bounded by $\bigO(n\log 1/\eps)$
as well. The total time spent in all other operations is bounded by
$\bigO(n\log 1/\eps)$.
\end{proof}

\subsubsection{Entries on $\pi$, Exits on $\sigma$}

\label{sec:1d-multiSigma}

Similar to the previous section, we show how to compute the exits $F^{\sigma}$
given entries $E$ on $\pi$, by reducing the problem to calls of
$\FindSigmaExits$ on subcurves of $\pi$ and $\sigma$. This time,
however, the task is more intricate. For any index $p$ on $\pi$,
let $Q(p):=\min\{q\mid\ddF(\pi_{p.. n},\sigma_{1.. q})\le\delta\}$
be the endpoint of the shortest initial fragment of $\sigma$ such
that the remaining part of $\pi$ can be traversed together with this
fragment\footnote{As a convention, we use $\min \emptyset = \max \emptyset=\infty$.}. Let $P(p):=\min\{p'\mid\ddF(\pi_{p.. p'},\sigma_{1.. Q(p)})\le\delta\}$
be the endpoint of the shortest initial fragment of $\pi$, such that
$\sigma_{Q(p)}$ can be reached by a feasible traversal.

Note that by definition, entries $p$ with $Q(p)=\infty$ are irrelevant
for determining the exits on $\sigma$. In fact, if an entry $p$
is \emph{relevant}, i.e., $Q(p)<\infty$, it is easy to compute $Q(p)$
due to the following lemma.
\begin{lem}
\label{lem:QpCharact}
Let $Q'(p):=\min\{q\mid\sigma_{q}\ge\max_{i\in[p.. n]}\pi_{i}-\delta\}$.
If $Q(p)<\infty$, then $Q(p)=Q'(p)$. Similarly, $Q(p)<\infty$ implies that 
$P(p)=\min\{p' \mid \pi_{p'} \le \min_{i\in[q..Q(p)]} \sigma_i + \delta\}<\infty$.
\end{lem}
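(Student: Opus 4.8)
The plan is to prove each of the two equalities by a pair of matching inequalities, relying throughout on two elementary facts about separated one-dimensional curves. First, since $\pi_i \ge 0 \ge \sigma_j$ we have $\|\pi_i - \sigma_j\| = \pi_i - \sigma_j$, so ``$\pi_i$ sees $\sigma_j$'' is equivalent to $\sigma_j \ge \pi_i - \delta$, equivalently $\pi_i \le \sigma_j + \delta$; consequently a single vertex that sees the extreme vertex (highest resp.\ lowest) of a contiguous block of the other curve sees the whole block. Second, a feasible discrete traversal may be cut at the step where one coordinate first reaches a prescribed vertex, and then extended by ``parking'' at that vertex while the other coordinate advances to the end of its (sub)curve; this extension stays feasible exactly when the parked vertex sees every remaining vertex of the other curve. (This is the same kind of manipulation used in the proofs of Lemmas~\ref{lem:composition}--\ref{lem:composeInitialPiece}, which one could invoke instead.)

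I would first treat $Q(p) = Q'(p)$, using the hypothesis $Q(p) < \infty$, i.e.\ $\ddF(\pi_{p..n}, \sigma_{1..Q(p)}) \le \delta$. For ``$\ge$'': in any feasible traversal of $\pi_{p..n}$ with $\sigma_{1..q}$, the highest vertex $\pi_{i^\ast}$ ($i^\ast \in [p..n]$) is matched at some step to a vertex $\sigma_j$ with $j \le q$ and $\sigma_j \ge \pi_{i^\ast} - \delta = \max_{i\in[p..n]}\pi_i - \delta$; hence $q \ge j \ge Q'(p)$, which in particular shows $Q'(p) < \infty$. For ``$\le$'': take a feasible traversal of $\pi_{p..n}$ with $\sigma_{1..Q(p)}$ and follow it up to the step where the $\sigma$-coordinate first equals $Q'(p)$ (such a step exists since $Q'(p) \le Q(p)$); at that step the $\pi$-coordinate is some $\ell \in [p..n]$, so the prefix is a feasible traversal of $\pi_{p..\ell}$ with $\sigma_{1..Q'(p)}$. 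Appending the traversal that keeps $\sigma$ parked at $\sigma_{Q'(p)}$ while $\pi$ runs from $\pi_\ell$ to $\pi_n$ is feasible, because $\sigma_{Q'(p)} \ge \max_{i\in[p..n]}\pi_i - \delta$ so $\sigma_{Q'(p)}$ sees all of $\pi_{\ell..n}$. This yields $\ddF(\pi_{p..n}, \sigma_{1..Q'(p)}) \le \delta$ and hence $Q(p) \le Q'(p)$.

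For the formula for $P(p)$ I would argue symmetrically. Write $R(p)$ for the right-hand side, i.e.\ the least $p' \ge p$ with $\pi_{p'} \le \min_{i\in[1..Q(p)]}\sigma_i + \delta$ (the natural reading of the stated expression; the constraint $p' \ge p$ is forced since $\pi_{p..p'}$ must be defined). Applying the matching observation to the auxiliary traversal of $\pi_{p..n}$ with $\sigma_{1..Q(p)}$: its lowest vertex $\sigma_{j^\ast}$ ($j^\ast \in [1..Q(p)]$) is matched to some $\pi_i$ with $i \in [p..n]$ and $\pi_i \le \sigma_{j^\ast} + \delta = \min_{i'\in[1..Q(p)]}\sigma_{i'} + \delta$, so $R(p) \le i < \infty$; and $p' = n$ already witnesses $P(p) \le n < \infty$. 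The same observation applied to any feasible traversal of $\pi_{p..p'}$ with $\sigma_{1..Q(p)}$ forces $p' \ge R(p)$, hence $P(p) \ge R(p)$. For the reverse inequality, follow the auxiliary traversal up to the step where the $\pi$-coordinate first equals $R(p)$ — which occurs because the $\pi$-coordinate of a discrete traversal of $\pi_{p..n}$ takes every integer value in $[p..n]$ and $p \le R(p) \le n$ — obtaining a feasible traversal of $\pi_{p..R(p)}$ with $\sigma_{1..\ell'}$ for some $\ell' \le Q(p)$, and append the traversal that parks $\pi$ at $\pi_{R(p)}$ while $\sigma$ advances from $\sigma_{\ell'}$ to $\sigma_{Q(p)}$; this is feasible since $\pi_{R(p)} \le \min_{i\in[1..Q(p)]}\sigma_i + \delta$, so $\pi_{R(p)}$ sees all of $\sigma_{\ell'..Q(p)}$. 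Thus $\ddF(\pi_{p..R(p)}, \sigma_{1..Q(p)}) \le \delta$ and $P(p) \le R(p)$.

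The load-bearing point, and the one I would be most careful about, is this last cut-and-park step for $P(p)$: it relies on the auxiliary traversal actually visiting the vertex $\pi_{R(p)}$, which is exactly where $p \le R(p) \le n$ is used (equivalently, why the minimum defining $R(p)$ must range over $p' \ge p$ — without that restriction the stated equality fails, e.g.\ when some early, already-bypassed vertex of $\pi$ happens to be very low, so that it sees all of $\sigma_{1..Q(p)}$ yet lies before $p$). Everything else — turning ``sees'' into the stated linear inequalities via separatedness, and checking that the truncated and re-extended sequences are again valid discrete traversals — is routine bookkeeping.
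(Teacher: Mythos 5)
Your proof is correct and takes essentially the same approach as the paper: both directions of $Q(p)=Q'(p)$ are established by the same ``match the extreme vertex'' lower bound and the same cut-and-park modification of the optimal traversal, and your explicit symmetric argument for $P(p)$ is exactly what the paper compresses into ``follows analogously by switching the roles of the curves.'' Your reading of the index range in the formula for $P(p)$ (taking it to be $[1..Q(p)]$ and restricting to $p'\ge p$) is the intended one.
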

\begin{proof}
Assume that $Q(p)<Q'(p)$ holds, then no point in $\sigma_{1.. Q(p)}$
sees the highest point in $\pi_{p.. n}$. Hence no feasible traversal
of these curves can exist, yielding a contradiction. Assume that $Q(p)>Q'(p)$
holds instead and consider the feasible traversal $\phi$ of the shortest
initial fragment of $\sigma$ that passes through all points in $\pi_{p.. n}$.
At some point $\phi$ visits $(\pi_{p'},\sigma_{Q'(p)})$ for some
$p\le p'\le n$. We can alter this traversal to pass through the remaining
curve $\pi_{p'.. n}$ while staying in $\sigma_{Q'(p)}$, since
$\sigma_{Q'(p)}$ sees all points on $\pi_{p'.. n}$. This gives
a feasible traversal of $\pi_{p.. n}$ and $\sigma_{1.. Q'(p)}$,
which is a contradiction to the choice of $\phi$ and $Q(p)>Q'(p)$.

The statement for $P(p)$ follows analogously by regarding the curves $\pi_{p..n}$ and $\sigma_{1..Q(p)}$ and switching their roles.
\end{proof}
Note that the previous lemma shows that for relevant entries $p_{1}<p_{2}$,
we have $Q(p_{1})\ge Q(p_{2})$, since for relevant entries, $Q(p_{1})=Q'(p_{1})\ge Q'(p_{2})=Q(p_{2})$.
We will use the following lemma to argue that (i) if $Q(p_{1})=Q(p_{2})$,
entry $p_{1}$ dominates $p_{2}$, and (2) if $Q(p_{1})>Q(p_{2})$,
we have $p_{2}\notin[p_{1}.. P(p_{1})]$. Hence, we can ignore all
entries in $[p_{1}.. P(p_{1})]$ except for $p_{1}$ itself. 

\begin{figure}
\centering
\subfloat[Case $q_1 =q_2$.]{
  \includegraphics[width=0.45\textwidth]{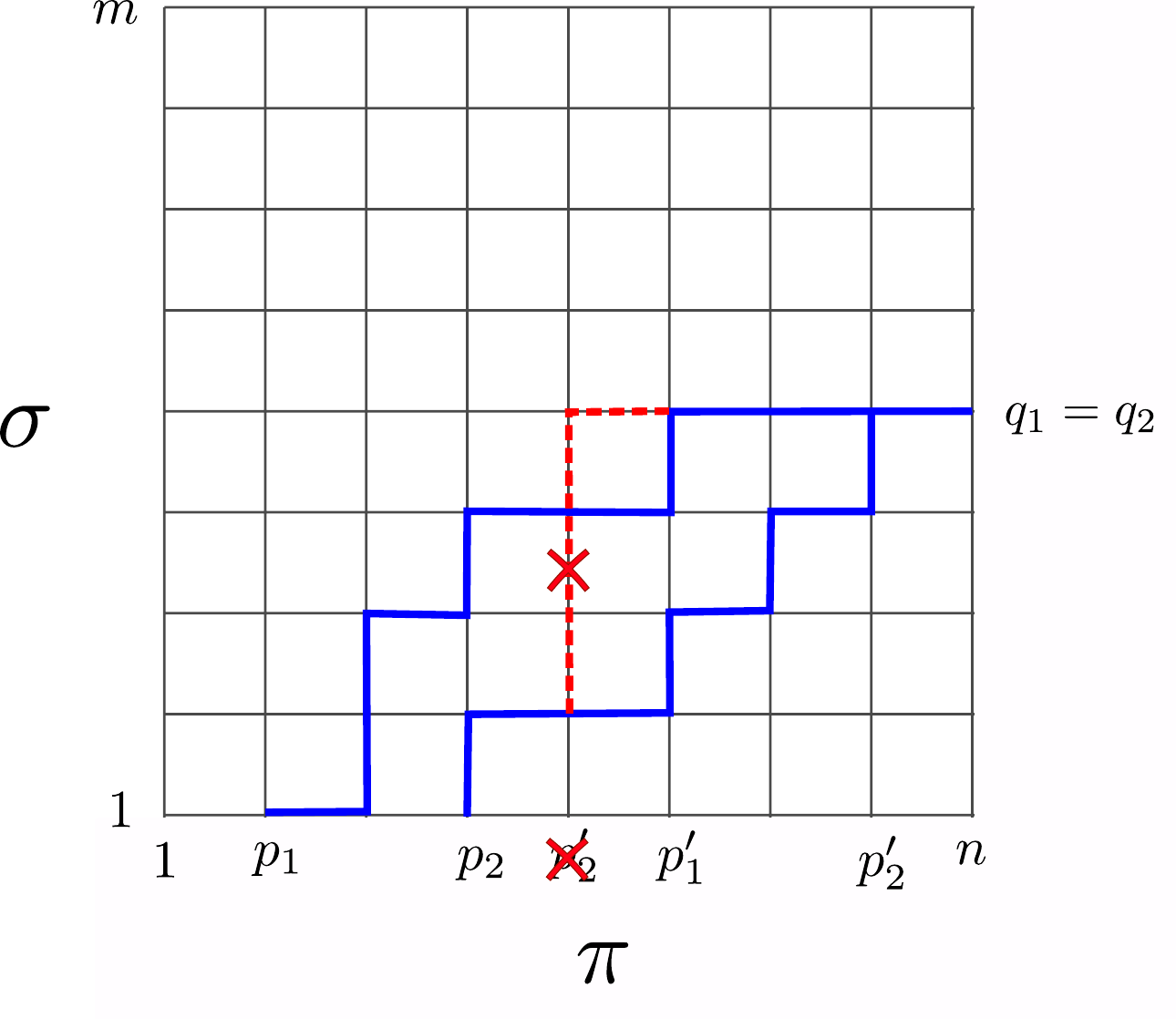}
  \label{fig:lemnoSkipA}
}
\quad
\subfloat[Case $q_1 > q_2$.]{
  \includegraphics[width=0.45\textwidth]{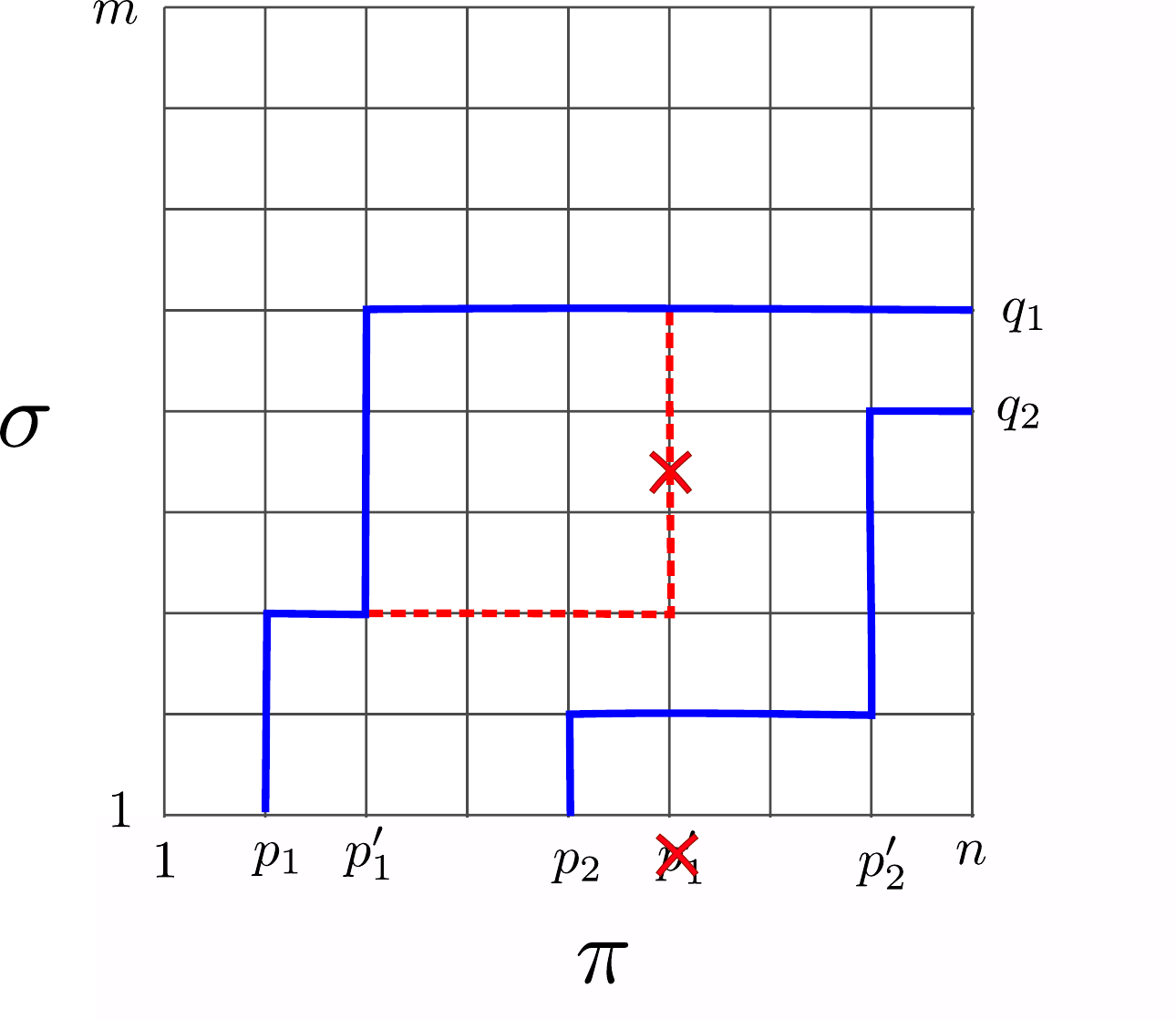}
  \label{fig:lemnoSkipB}
}
\caption{Illustration of Lemma~\ref{lem:noSkip}. For both $p_i$, $i \in \{1,2\}$, a feasible traversal of the curves $\pi_{p_i..p_i'}$ and $\sigma_{1..q_i}$ is depicted as monotone paths in the free-space.}
\label{fig:lemnoSkip}
\end{figure}

\begin{lem}
\label{lem:noSkip}Let $p_{1}<p_{2}$ be indices on $\pi$ with $q_{1}:=Q(p_{1})<\infty$
and $q_{2}:=Q(p_{2})<\infty$. Let $p'_{1}:=P(p_{1})$ and $p'_{2}:=P(p_{2})$.
If $q_{1}=q_{2}$, then $p_{1}'\le p_{2}'$. Otherwise, i.e., if $q_{1}>q_{2}$,
we even have $p'_{1}<p_{2}$. \end{lem}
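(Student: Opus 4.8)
Recall that for a relevant entry $p$, Lemma~\ref{lem:QpCharact} characterizes $Q(p)=Q'(p)=\min\{q \mid \sigma_q \ge \max_{i\in[p..n]}\pi_i - \delta\}$ and $P(p)=\min\{p' \mid \pi_{p'} \le \min_{i\in[1..Q(p)]}\sigma_i + \delta\}$. Write $M_p := \max_{i \in [p..n]} \pi_i$ and $m_p := \min_{i \in [1..Q(p)]} \sigma_i$. The plan is to read off both statements directly from these closed forms, handling the two cases separately.

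**Case $q_1 = q_2$.** Here $Q(p_1)=Q(p_2)=:q$, so the interval $[1..Q(p_i)]$ over which we minimize $\sigma$ is the same for both entries, giving $m_{p_1}=m_{p_2}$. Hence $P(p_1)$ and $P(p_2)$ are the minimal indices $p'$ with $\pi_{p'} \le m_{p_1} + \delta = m_{p_2}+\delta$, subject only to $p' \ge p_1$ (resp. $p'\ge p_2$) — one should double-check that $P(p)$ is indeed the \emph{minimal such index that is $\ge p$}, which follows from the definition $P(p)=\min\{p'\mid \ddF(\pi_{p..p'},\sigma_{1..Q(p)})\le\delta\}$ since any feasible traversal of $\pi_{p..p'}$ must start at index $p$. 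Since $p_1 < p_2$, the search for $P(p_2)$ is over a subset of the candidates for $P(p_1)$, so $P(p_1) \le P(p_2)$, i.e. $p_1' \le p_2'$.

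**Case $q_1 > q_2$.** Now I want the sharper conclusion $p_1' < p_2$. The point is that $P(p_1)$ is the first index where $\pi$ drops to at most $m_{p_1}+\delta$, and I claim this happens strictly before $p_2$. Suppose not, i.e. $\pi_i > m_{p_1}+\delta$ for all $i \in [p_1..p_2-1]$; I will derive $q_1 \le q_2$, contradicting $q_1 > q_2$. The key observation is that the witness for $q_2 = Q(p_2)$ being finite, combined with $\pi$ staying high on $[p_1..p_2)$, forces $\sigma_{q_2}$ (or some earlier vertex of $\sigma$) to already be high enough to see the top of $\pi_{p_1..n}$. Concretely: $Q(p_2)<\infty$ means $\ddF(\pi_{p_2..n},\sigma_{1..q_2})\le\delta$; take a feasible traversal $\phi$, and let $(\pi_{p'},\sigma_{q_2})$ be a point where it reaches column $q_2$, so $\sigma_{q_2}$ sees $\pi_{p'},\dots$ and in fact (after extending $\phi$ to stay in $\sigma_{q_2}$, cf. the argument in Lemma~\ref{lem:QpCharact}) $\sigma_{q_2} \ge M_{p_2} - \delta$. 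Meanwhile $M_{p_1} = \max(M_{p_2}, \max_{i\in[p_1..p_2-1]}\pi_i)$; I need the second term to not exceed $\sigma_{q_2}+\delta$. This is exactly where $P(p_1) \ge p_2$ should give a contradiction via $Q(p_1)=Q'(p_1)$: if $\pi$ stayed above $m_{p_1}+\delta$ on $[p_1..p_2)$ then... — I expect the clean route is instead to argue directly about $Q'$. Since $Q(p_1)=Q'(p_1) > Q'(p_2)=Q(p_2)$, by definition of $Q'$ we have $\sigma_{q_2} < M_{p_1}-\delta$, so the highest vertex of $\pi_{p_1..n}$, say $\pi_{i^*}$ with $i^* \ge p_1$ achieving $M_{p_1}$, satisfies $\pi_{i^*} > \sigma_{q_2}+\delta \ge M_{p_2}-\delta+\delta = M_{p_2} \ge \pi_j$ for all $j \ge p_2$; hence $i^* < p_2$. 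But then for $P(p_1)$ we need some index $\ge p_1$ with $\pi$-value at most $m_{p_1}+\delta \le \sigma_{q_2}+\delta < \pi_{i^*}$; such an index cannot be $i^*$, and — here is the crux — I must show it occurs before $p_2$.

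**Main obstacle.** The genuinely delicate point, and where I would spend the most care, is showing $P(p_1) < p_2$ rather than merely $P(p_1) \le P(p_2)$ in the strict case. The reasoning above shows the \emph{maximum} of $\pi$ over $[p_1..n]$ is attained before $p_2$; to conclude I want that $\pi$ also \emph{returns low enough} before $p_2$. I would get this from the feasibility witness of $Q(p_2)$ more carefully: a feasible traversal of $\pi_{p_2..n}$ with $\sigma_{1..q_2}$ means in particular $\pi_{p_2} \le \sigma_{q_2}+\delta \le m_{p_1}+\delta$ is \emph{false} in general — so instead the right statement is that since $q_1 > q_2 = Q(p_2)$, the initial segment $\sigma_{1..q_2}$ must see $\pi_{p_2}$ as the traversal starts at $(p_2,1)$; actually it only needs to see $\pi_{p_2}$ from \emph{some} $\sigma_k$ with $k \le q_2$, giving $\min_{i\le q_2}\sigma_i \le \sigma_k \le \pi_{p_2}+\delta$, hence $\pi_{p_2} \ge m_{p_2}-\delta$; this is the wrong direction. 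The correct extraction: any feasible traversal of $\pi_{p_2..n}$ and $\sigma_{1..q_2}$ starts at $\sigma_1$ which must see $\pi_{p_2}$, but more usefully, since $P$ asks for where $\pi$ first dips to $m_{p_1}+\delta$ and $m_{p_1} \ge m_{p_2}$ (larger minimization range) — wait, $[1..Q(p_1)] \supseteq [1..Q(p_2)]$ so $m_{p_1} \le m_{p_2}$, making $P(p_1)$ \emph{harder} to reach, not easier. So the strict inequality must come from combining: $\pi_{p_2} \le \sigma_{Q(p_2)}+\delta$ (reached in the $Q(p_2)$-witness, staying-in-$\sigma_{q_2}$ argument) and $m_{p_1} \le m_{p_2} \le \sigma_{Q(p_2)}$ — hmm, $m_{p_2} = \min_{i\le q_2}\sigma_i \le \sigma_{q_2}$, so $\pi_{p_2} \le \sigma_{q_2}+\delta$ does not immediately give $\pi_{p_2} \le m_{p_1}+\delta$. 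I would resolve this by instead defining things through the explicit witness traversal $\phi$ for $q_1 = Q(p_1)$ (which exists since $q_1<\infty$): by minimality of $q_1$, $\sigma_{q_1}$ is the \emph{unique} first vertex seeing $M_{p_1}$; trace where $\phi$ first uses column $q_2 < q_1$; the $\pi$-index there is some $P(p_1) \le r < $ (the index where $\phi$ must jump past $\sigma_{q_2}$ because $\sigma_{q_2}$ can't see the rest of $\pi$), and that jump happens at a $\pi$-index $\le p_2$ because $\pi_{p_2} \le M_{p_2} \le \sigma_{q_2}+\delta$ would have to fail... Given the subtlety, in the final writeup I would present this case by explicitly constructing, from the $Q(p_2)$-witness traversal and Lemma~\ref{lem:intersection}/\ref{lem:composeInitialPiece}, a feasible traversal of $\pi_{p_1..p_2-1}$ with $\sigma_{1..q_1}$ reaching $\sigma_{q_1}$, which by definition forces $P(p_1) \le p_2-1 < p_2$. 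I expect this to be the only step requiring real thought; the $q_1 = q_2$ case and all the $Q'$/$P$ closed-form manipulations are routine.
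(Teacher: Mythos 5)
Your first case ($q_1=q_2$) is correct: reading $P(p_i)$ off the closed form of Lemma~\ref{lem:QpCharact} as the first index $p'\ge p_i$ with $\pi_{p'}\le \min_{i\le q}\sigma_i+\delta$ and observing that the candidate set for $P(p_2)$ is contained in that for $P(p_1)$ is a legitimate (and slightly more direct) alternative to the paper's argument, which instead combines the two witness traversals via Lemma~\ref{lem:intersection}.

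The case $q_1>q_2$, however, is not proved. You correctly reduce it to the claim that $\pi$ attains a value $\le m_{p_1}+\delta=\min_{i\le q_1}\sigma_i+\delta$ at some index in $[p_1,p_2)$, you correctly observe that this does \emph{not} follow from $m_{p_1}\le m_{p_2}$ (the condition only gets harder), and you correctly show that the maximum of $\pi_{p_1..n}$ is attained before $p_2$ --- but then you stop at exactly the point you call ``the crux.'' The closing sentence is a promise, not an argument, and the object it invokes is the wrong one: the $Q(p_2)$-witness is a traversal of $\pi_{p_2..n}$ with $\sigma_{1..q_2}$, which lives entirely on $\pi$-indices $\ge p_2$ and on a \emph{shorter} prefix of $\sigma$, so it cannot by itself produce a traversal of $\pi_{p_1..p_2-1}$ with all of $\sigma_{1..q_1}$. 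The information you need sits in the witness for $p_1$, not for $p_2$. The paper closes the gap by contradiction: assuming $p_1'\ge p_2$, it takes $p$ to be the maximal index with $\pi_p>\sigma_{q_2}+\delta$ (which lies in $[p_1,p_2)$, else $Q'(p_1)=Q'(p_2)$), notes that the traversal witnessing $\ddF(\pi_{p_1..p_1'},\sigma_{1..q_1})\le\delta$ must pass through $(\pi_p,\sigma_q)$ for some $q<q_1$ (minimality of $p_1'$ forbids $q=q_1$), and observes that this $\sigma_q>\sigma_{q_2}$ sees all of $\pi_{p..n}$, so the traversal can be completed within $\sigma_{1..q}$ --- contradicting the minimality of $q_1=Q'(p_1)$. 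A direct (non-contradiction) variant in your spirit is also possible --- in the $Q(p_1)$-witness traversal, the vertex $\sigma_{j^*}$ attaining $m_{p_1}$ is visited no later than $\sigma_{q_1}$, which is only reached when $\pi$ is at its maximum, attained before $p_2$ --- but some such argument tracking the witness traversal for $p_1$ must be supplied; as written the second half of the lemma is unproven.
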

\begin{proof}
See \figref{lemnoSkip} for illustrations.
Let $q_{1}=q_{2}$. Assume for contradiction that $p'_{1}>p_{2}'$,
then we have $\ddF(\pi_{p_{1}.. p_{1}'},\sigma_{1.. q_{1}})\le\delta$
and $\ddF(\pi_{p_{2}.. p_{2}'},\sigma_{1.. q_{1}})\le\delta$, where
$[p_{2}.. p_{2}']\subseteq[p_{1}.. p_{1}']$. Hence by Lemma~\ref{lem:intersection},
$\ddF(\pi_{p_{1}.. p_{2}'},\sigma_{1.. q_{1}})\le\delta$ and thus
$p_{1}'\le p_{2}'$, which is a contradiction to the assumption. 

For the second statement, let $p$ be maximal such that $\pi_{p}>\sigma_{q_{2}}+\delta$.
If $p$ does not exist or $p<p_{1}$, we have that $Q'(p_{1})=Q'(p_{2})$
and hence by Lemma~\ref{lem:QpCharact}, $q_{1}=q_{2}$. Note that
additionally $p<p_{2}$, since otherwise $\sigma_{q_{2}}<\pi_{p}-\delta$
with $p\ge p_{2}$ shows that $q_{2}\ne Q'(p_{2})$ contradicting
Lemma~\ref{lem:QpCharact}. Thus, in what follows, we can assume
that $p_{1}<p<p_{2}$.

Assume for contradiction that $q_{1}>q_{2}$ and $p'_{1}\ge p_{2}$.
Then a feasible traversal $\phi$ of $\pi_{p_{1}.. p_{1}'}$ and
$\sigma_{1.. q_{1}}$ visits $(\pi_{p},\sigma_{q})$ for some $1\le q\le q_{1}$.
It even holds that $q<q_{1}$, since otherwise there is a feasible
traversal of $\sigma_{1.. q_{1}}$ and $\pi_{p_{1}.. p}$ with
$p<p'_{1}$, contradicting the choice of $p_{1}'$. Clearly, $\sigma_{q}>\sigma_{q_{2}}$,
since $\pi_{p}$ sees $\sigma_{q}$, while it does not see $\sigma_{q_{2}}$.
Since by choice of $p$, $\sigma_{q_{2}}$ sees all of $\pi_{p+1.. n}$
and $\sigma_{q}$ sees only more (including $\pi_{p}$), we conclude
that we can traverse all points of $\pi_{p.. n}$ while staying
in $\sigma_{q}$. Concatenating this traversal to the feasible traversal
$\phi$ yields $\ddF(\pi_{p_{1}.. n},\sigma_{1.. q})\le \delta$ and
thus $Q'(p_{1})\le q<q_{1}$, which is a contradiction to Lemma~\ref{lem:QpCharact}.
This proves that $q_{1}>q_{2}$ implies $p_{1}'\le p_{2}$.
\end{proof}

\begin{algorithm}%[H]
\begin{algorithmic}[1]
   \Statex

\Function{$\sigma$-exits-from-$\pi$}{$\pi, \sigma, E$}
\State $F \gets \emptyset$, $\bar{q} \gets m$
\Repeat
  \Let{$\hat{p}$}{pop minimal index from $E$}
  \State $p \gets \hat{p}$, $q\gets 1$
  \State $Q' \gets Q'(p)$
  \Repeat
	\If{$q'\gets\MaxGreedyStep_\sigma(\pi_{p.. n},\sigma_{q.. Q'})$}
			\State $q \gets q'$
	\EndIf
	\If{$q\ne Q'$ and $p'\gets\MinGreedyStep_\pi(\pi_{p.. n},\sigma_{q.. Q'})$}
	\Let{$p$}{$p'$}
	\EndIf
  \Until{$q = Q'$ or no greedy step was found in the last iteration}
  \If{$q = Q'$}
	\Let{$F$}{$F\cup \FindSigmaExits(\pi_{p.. n},\sigma_{Q'.. \bar{q}})$}
	\Let{$\bar{q}$}{$Q'-1$}
  \EndIf
  \Let{$E$}{$E \cap [p+1,n]$}
\Until{$E=\emptyset$}
\State \Return $F$
\EndFunction
  \end{algorithmic}

\caption{Given entry points $E$ on $\pi$, compute all exits on $\sigma$.}

\label{alg:SigmaExitspiEntries}
\end{algorithm}

\begin{lem}
\label{lem:algProps}Algorithm~\ref{alg:SigmaExitspiEntries} fulfills
the following properties.
\begin{enumerate}
\item Let $(p,q)$ with $q<Q'(\hat{p})$ be a greedy point pair of $\pi_{\hat{p}.. n}$
and $\sigma_{1.. Q'(\hat{p})}$ for which no greedy step exists.
For all $e\in[\hat{p},p]$, we have $Q(e)=\infty$.
\item For each $\hat{p}$ considered, if $Q(\hat{p})<\infty$, the algorithm
calls $\mbox{\ensuremath{\FindSigmaExits}}(\pi_{P(\hat{p}).. n},\sigma_{Q(\hat{p})..\bar{q}})$.
In this case, the point $(P(\hat{p}),Q(\hat{p}))$ is a greedy pair
of $\pi_{\hat{p}.. n}$ and $\sigma$.
\end{enumerate}
\end{lem}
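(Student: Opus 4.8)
The plan is to read both properties off from the correctness of the greedy decider (Lemma~\ref{lem:GreedyCorrectness}), the composition toolbox of \secref{1d-composition}, and the explicit description of $Q(\cdot)$ and $P(\cdot)$ in Lemma~\ref{lem:QpCharact}. The guiding observation is that, for a fixed entry $\hat p$, the loop body simply runs the greedy algorithm of \secref{1d-greedydec} on $\pi_{\hat p..n}$ and the truncated curve $\sigma_{1..Q'}$, $Q':=Q'(\hat p)$, except that $\sigma$-steps are taken maximally, $\pi$-steps minimally, and we halt as soon as the $\sigma$-pointer reaches $Q'$.

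For the first property, I would first argue that any vertex attaining the maximum of $\pi_{\hat p..n}$ lies strictly after $p$: the accepted steps give a feasible monotone traversal of $\pi_{\hat p..p}$ with $\sigma_{1..q}$, so each vertex $\pi_i$ with $\hat p\le i\le p$ is seen by some $\sigma$-vertex of index $\le q<Q'=Q'(\hat p)$, whereas by definition of $Q'(\hat p)$ no $\sigma$-vertex of index $<Q'(\hat p)$ sees the maximum of $\pi_{\hat p..n}$. Hence for every $e\in[\hat p,p]$ we have $\max_{i\in[e..n]}\pi_i=\max_{i\in[\hat p..n]}\pi_i$, so $Q'(e)=Q'$. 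Now suppose $Q(e)<\infty$; by Lemma~\ref{lem:QpCharact} this gives $\ddF(\pi_{e..n},\sigma_{1..Q'})\le\delta$, and the traversal above, stopped the first time it reaches $\pi_e$, gives $\ddF(\pi_{\hat p..e},\sigma_{1..q_e})\le\delta$ for some $q_e\le q\le Q'$. Applying \lemref{composeInitialPiece} to $I=[\hat p..e]$ (with $\sigma_{1..q_e}$) and $J=[e..n]$ (with $\sigma_{1..Q'}$) yields $\ddF(\pi_{\hat p..n},\sigma_{1..Q'})\le\delta$; but no greedy step exists at $(p,q)$, so Lemma~\ref{lem:GreedyCorrectness} gives $\ddF(\pi_{\hat p..n},\sigma_{1..Q'})>\delta$, a contradiction. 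Hence $Q(e)=\infty$.

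For the second property, I would first apply the first one with $e=\hat p$: if $Q(\hat p)<\infty$, the loop cannot terminate stuck with $q<Q'$, so it terminates with $q=Q'$ and makes the call $\FindSigmaExits(\pi_{p..n},\sigma_{Q'..\bar q})$, where $Q'=Q'(\hat p)=Q(\hat p)$ by Lemma~\ref{lem:QpCharact}. It remains to identify the terminal pointer $p$ with $P(\hat p)$. One inequality is immediate: the loop's traversal witnesses $\ddF(\pi_{\hat p..p},\sigma_{1..Q'})\le\delta$ and visits the lowest vertex of $\sigma_{1..Q'}$ at some $\pi$-vertex $\le p$ seeing it, so $p\ge P(\hat p)$ by the description of $P$ in Lemma~\ref{lem:QpCharact}. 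For $p\le P(\hat p)$ I would maintain the invariant that the pointer stays $\le P(\hat p)$ throughout: a $\MaxGreedyStep_\sigma$ leaves the pointer unchanged; and for a $\MinGreedyStep_\pi$ one uses (again Lemma~\ref{lem:QpCharact}) that $\pi_{P(\hat p)}$ sees \emph{all} of $\sigma_{1..Q'}$, so $P(\hat p)$ is either a legal target of a minimal $\pi$-step---whence the step, being minimal, lands at or before it---or lies beyond the first $\pi$-vertex blocking reachability, in which case $\MinGreedyStep_\pi$ stops before it or reports that no step exists. Combining the two inequalities gives $p=P(\hat p)$. Finally, $(P(\hat p),Q(\hat p))$ is a greedy pair of $\pi_{\hat p..n}$ and $\sigma$ because the $\sigma$-greedy-step conditions depend only on the traversed stretch of $\sigma$, and---since $\sigma_{Q'}$ sees all of $\pi_{\hat p..n}$---truncating $\sigma$ to $\sigma_{1..Q'}$ does not change the $\pi$-greedy-step conditions used along the loop either (cf.\ \propref{greedyToLaterEnd}); so the loop's sequence of pairs is a legitimate greedy-pair sequence on the untruncated curves.

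I expect the main obstacle to be this last invariant: dovetailing the two-phase loop---a maximal $\sigma$-step followed by a minimal $\pi$-step---with the claim that the $\pi$-pointer never overshoots $P(\hat p)$ is delicate, because a $\MaxGreedyStep_\sigma$ can enlarge the set of $\sigma$-vertices the $\pi$-pointer must keep visible, and one must rule out that the ensuing minimal $\pi$-step advances past the point where all of $\sigma_{1..Q'}$ becomes visible. The composition lemmas of \secref{1d-composition} are the tools to carry out this case analysis.
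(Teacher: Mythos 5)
Your treatment of property 1 is correct and in fact slightly cleaner than the paper's: you first show that the maximum of $\pi_{\hat p..n}$ lies strictly after $p$ (so $Q'(e)=Q'(\hat p)$ for all $e\in[\hat p,p]$) and then combine \lemref{composeInitialPiece} with the first assertion of \lemref{GreedyCorrectness}, whereas the paper argues via $\stop_\sigma(\pi_{p..n},\sigma_{q..Q'(\hat p)})\le Q'(e)$ and the ``in particular'' part of that lemma. For the identification $p=P(\hat p)$ your invariant (the $\pi$-pointer never exceeds $P(\hat p)$) is a genuinely different route from the paper's proof by contradiction, which assumes a smaller feasible endpoint $p'<p$ and refutes it using minimality of the $\pi$-steps and maximality of the $\sigma$-steps. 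Your invariant does hold; the delicate case you flag --- the pointer sitting exactly at $P(\hat p)$ --- is resolved not by the composition lemmas but by the loop structure: since $\pi_{P(\hat p)}$ sees all of $\sigma_{1..Q'}$ we have $\reach_\sigma(P(\hat p),q)=[q+1..Q']$, so the maximal $\sigma$-step jumps to $Q'$ and the guard $q\ne Q'$ blocks any further $\pi$-step.

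The genuine gap is in the last claim, that $(P(\hat p),Q(\hat p))$ is a greedy pair of $\pi_{\hat p..n}$ and the \emph{untruncated} $\sigma$. You justify this by asserting that the loop's pair sequence is a legitimate greedy sequence on the untruncated curves, citing \propref{greedyToLaterEnd}; but that property goes the wrong way (a greedy step on the long curve is one on the shorter curve), and the paper explicitly notes the converse fails. Concretely, a $\MinGreedyStep_\pi$ from $(p,q)$ to $p'$ computed w.r.t.\ $\sigma_{q..Q'}$ may have $\pi_{p'}>\pi_p$ with equal visibility sets on $[q..Q']$, while some $\sigma_k$ with $k>Q'$ satisfies $\pi_p-\delta\le\sigma_k<\pi_{p'}-\delta$; then $\vis_\sigma(p,q)\not\subseteq\vis_\sigma(p',q)$ on the full curve and the step is not greedy there. (That $\sigma_{Q'}$ sees all of $\pi_{\hat p..n}$ says nothing about $\sigma$-vertices of index greater than $Q'$.) The paper closes this with a separate construction directly on the full curves: since every $i<P(\hat p)$ has $\pi_i>\pi_{P(\hat p)}$ by \lemref{QpCharact}, the index $P(\hat p)$ dominates all earlier indices in visibility on \emph{any} range of $\sigma$, so whenever some greedy pair of the full curves admits a $\pi$-step to an index at least $P(\hat p)$ it admits one to exactly $P(\hat p)$; from $(P(\hat p),q)$ one then steps on $\sigma$ to $Q(\hat p)$, and the remaining case (no greedy pair of the full curves can reach index $P(\hat p)$) is refuted via $\pstop\le P(\hat p)$ and \lemref{GreedyCorrectness}. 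You need an argument of this kind rather than transferring the loop's steps.
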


\begin{proof}
For the first statement, assume for contradiction that $Q(e)<\infty$.
By Lemma~\ref{lem:QpCharact}, $Q(e)=Q'(e)$, which implies that
for all $q'<Q'(e)\le Q'(\hat{p})$, we have $\sigma_{q'}<\sigma_{Q'(e)}$ and
hence $\vis_{\pi}(p,q')\subseteq\vis_{\pi}(p,Q'(e))$. Hence, $\stop_{\sigma}(\pi_{p.. n},\sigma_{q.. Q'(\hat{p})})\le Q'(e)$,
since otherwise $Q'(e)\gets\GreedyStep_{\sigma}(\pi_{p.. n},\sigma_{q.. Q'(\hat{p})})$.
By Lemma~\ref{lem:GreedyCorrectness}, this proves that $\ddF(\pi_{\hat{p}.. n},\sigma_{1.. Q'(e)})> \delta$. 
Since $\ddF(\pi_{\hat{p}..e},\sigma_{1..q'})\le \delta$ for some $q'<Q'(e)$, Lemma~\ref{lem:composeInitialPiece} yields $\ddF(\pi_{e..n},\sigma_{1..Q'(e)})>\delta$. 
This is a contradiction to $Q(e)=Q'(e)$.

For the second statement, note that if $Q(\hat{p})<\infty$, then
by Lemma~\ref{lem:QpCharact}, $Q(\hat{p})=Q'(\hat{p})$. Hence Lemma~\ref{lem:GreedyCorrectness}
yields that the algorithm finds a feasible traversal of $\pi_{\hat{p}.. p}$
and $\sigma_{1.. Q'(\hat{p})}$ for some $\hat{p}\le p\le n$.
This shows that $P(\hat{p})\le p<\infty$. Let $\sigma':=\sigma_{1.. Q(\hat{p})}$
and assume that there is a $p'<p$ with $\ddF(\pi_{\hat{p}.. p'},\sigma')\le \delta$
and let $(\tilde{p},\tilde{q})$ be the greedy point of $\pi_{\hat{p}.. n}$
and $\sigma'$ right before the algorithm made a greedy step on $\pi$
to some index in $(p',p]$. By maximality of the greedy steps on $\sigma$,
there exists $\tilde{q}<\qmin<Q(\hat{p})$ such that $\pi_{\tilde{p}}$
does not see $\sigma_{\qmin}$, since otherwise $Q(\hat{p})\in\reach_{\sigma'}(\tilde{p},\tilde{\sigma})$
with $\vis_{\pi}(\tilde{p},\tilde{q})\subsetneq\vis_{\pi}(\tilde{p},Q(\hat{p}))$, i.e., $Q(\hat{p})$ 
would be a greedy step on $\sigma'$. By minimality of greedy steps
on $\pi$, $\vis_{\sigma'}(\tilde{p},\tilde{q})\supsetneq\vis_{\sigma'}(i,\tilde{q})$
for all $\tilde{p}\le i\le p'$. Hence, no vertex on $\pi_{\tilde{p}.. p'}$
sees $\sigma_{\qmin}$, which proves $\ddF(\pi_{\tilde{p}.. p'},\sigma')> \delta$.
Since $(\tilde{p},\tilde{\sigma})$ is a greedy pair of $\pi_{\hat{p}.. p'}$
and $\sigma'$, this yields that $\ddF(\pi_{\hat{p}.. p'},\sigma')> \delta$
by Lemma~\ref{lem:GreedyCorrectness}, which is a contradiction to
the assumption. Hence, the algorithm calls $\FindSigmaExits(\pi_{p.. n},\sigma_{Q'(\hat{p})..\bar{q}})$,
where $p=P(\hat{p})$ and $Q'(\hat{p})=Q(\hat{p})$.

It remains to show that $(P(\hat{p}),Q(\hat{p}))$ is also a greedy
pair of $\pi_{\hat{p}.. n}$ and the \emph{complete} curve $\sigma$.
By Lemma~\ref{lem:QpCharact}, every $\hat{p}\le p<P(\hat{p})$ satisfies $\pi_{p}>\pi_{P(\hat{p})}$
and hence $\vis_{\sigma}(p,q)\subseteq\vis_{\sigma}(P(\hat{p}),q)$
for all $1\le q\le m$. Hence, if at some greedy pair $(p,q)$, $q\le Q(\hat{p})$, a greedy
step $p'\gets\GreedyStep_{\pi}(\pi_{p.. n},\sigma)$ with $p'\ge P(\hat{p})$
exists, then also $P(\hat{p})\gets\GreedyStep_{\pi}(\pi_{p.. n},\sigma)$,
which shows that $(P(\hat{p}),q)$ is a greedy point of $\pi_{\hat{p}.. n}$
and $\sigma$. If $q=Q(\hat{p})$, then $(P(\hat{p}),Q(\hat{p}))$ is a greedy point pair. Otherwise, by Lemma~\ref{lem:QpCharact}, $P(\hat{p})$ sees
all of $\sigma_{q.. Q(\hat{p})}$ and $\sigma_{q}<\sigma_{Q(\hat{p})}$,
hence $Q(\hat{p})\in\GreedyStep_{\sigma}(\pi_{P(p).. n},\sigma)$
and $(P(\hat{p}),Q(\hat{p}))$ is a greedy step of $\pi_{\hat{p}.. n}$
and $\sigma$.

It is left to consider the case that for all greedy pairs $(p,q)$, $q\le Q(\hat{p})$,
of $\pi_{\hat{p}.. n}$ and $\sigma$, no greedy step to some $p'\ge P(\hat{p})$
exists. Then there is some $(p,q)$ with $p<P(\hat{p})$ and $q\le Q(\hat{p})$ for which
no greedy step exists at all. We have $\pstop:=\stop_{\pi}(\pi_{p.. n},\sigma_{q.. m})\le P(\hat{p})$,
since otherwise $P(\hat{p})$ would be a greedy step. Since Lemma~\ref{lem:GreedyCorrectness}
shows that $\ddF(\pi_{\hat{p}.. \pstop},\sigma_{1..q})> \delta$, this contradicts $\ddF(\pi_{\hat{p}.. P(\hat{p})},\sigma_{1.. Q(\hat{p})})\le \delta$.

\end{proof}
\begin{lem}
Algorithm~\ref{alg:SigmaExitspiEntries} correctly computes $F^{\sigma}$.\end{lem}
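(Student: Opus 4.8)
The plan is to prove both inclusions $F\subseteq F^\sigma$ and $F^\sigma\subseteq F$, where $F$ is the set returned by Algorithm~\ref{alg:SigmaExitspiEntries}. The inclusion $F\subseteq F^\sigma$ is the easy one: every $f$ placed into $F$ is returned by a call that, by \lemref{algProps}(2), is of the form $\FindSigmaExits(\pi_{P(\hat p)..n},\sigma_{Q(\hat p)..\bar{q}})$ for an entry $\hat p\in E$ with $Q(\hat p)<\infty$, and the soundness half of the correctness of \FindSigmaExits{} --- which only needs that its steps are feasible --- gives $\ddF(\pi_{P(\hat p)..n},\sigma_{Q(\hat p)..f})\le\delta$. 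Concatenating at the common configuration $(P(\hat p),Q(\hat p))$ with a witness of $\ddF(\pi_{\hat p..P(\hat p)},\sigma_{1..Q(\hat p)})\le\delta$, which exists by definition of $P(\hat p)$, yields $\ddF(\pi_{\hat p..n},\sigma_{1..f})\le\delta$ with $\hat p\in E$, so $f\in F^\sigma$.

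For $F^\sigma\subseteq F$ I would first note that irrelevant entries ($Q(e)=\infty$) admit no $f$ with $\ddF(\pi_{e..n},\sigma_{1..f})\le\delta$, so $F^\sigma$ is the union over relevant $e\in E$ of $\{f:\ddF(\pi_{e..n},\sigma_{1..f})\le\delta\}$, and then establish two auxiliary facts. \textbf{(a) Free sweep.} When $Q'(p)<\infty$ the vertex $\sigma_{Q'(p)}$ sees every vertex of $\pi_{p..n}$ by the definition of $Q'$; together with $Q(e)=Q'(e)$ for relevant $e$ (\lemref{QpCharact}) this lets me turn any witness of $\ddF(\pi_{e..n},\sigma_{1..f})\le\delta$ with $Q(e)\le f$ into a witness of $\ddF(\pi_{P(e)..n},\sigma_{Q(e)..f})\le\delta$: let $j_0$ be the first $\pi$-index traversed by the witness while it stays at $\sigma_{Q(e)}$ (restricting before $j_0$ forces $j_0\ge P(e)$ by definition of $P(e)$), and prepend to the tail of the witness the feasible walk along $\pi$ from $P(e)$ to $j_0$ while staying at $\sigma_{Q(e)}$. \textbf{(b) Run structure.} By \lemref{algProps}(1) an entry dropped in an iteration whose loop never reaches $q=Q'$ is irrelevant, and if the loop does reach $q=Q'$ then $Q(\hat p)<\infty$ (extend the resulting traversal of $\pi_{\hat p..p}$ and $\sigma_{1..Q'}$ over all of $\pi_{\hat p..n}$ via the free sweep). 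Hence the entries that trigger a \FindSigmaExits{} call together with an update $\bar{q}\gets Q(\hat p)-1$ are exactly the relevant entries processed directly --- call them the \emph{updating entries} $r_1<r_2<\cdots$ --- and, using \lemref{noSkip} and \lemref{QpCharact}, every relevant $e$ that is dropped is dropped by some $r_j$ with $Q(r_j)=Q(e)$ and $P(r_j)=P(e)$, so the call made in that iteration coincides with the one that would be made were $e$ processed directly. Finally, \lemref{noSkip} gives $Q(r_j)\ge Q(r_{j+1})$ and $P(r_j)<r_{j+1}$ for consecutive updating entries.

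Then I would prove, by induction on the relevant entries in increasing order, that $f\in F$ whenever $\ddF(\pi_{e..n},\sigma_{1..f})\le\delta$ for a relevant $e$. The iteration handling $e$ performs the call $\FindSigmaExits(\pi_{P(e)..n},\sigma_{Q(e)..\bar{q}_e})$, where $\bar{q}_e=Q(e_{\mathrm{prev}})-1$ for the preceding updating entry $e_{\mathrm{prev}}$, or $\bar{q}_e=m$ if $e$ is the first relevant entry (the base case, where nothing is lost since always $f\le m$). If $f\le\bar{q}_e$, then fact (a) and the completeness half of the correctness of \FindSigmaExits{} --- whose precondition that $(P(e),Q(e))$ is a greedy pair of $\pi_{\hat p..n}$ and $\sigma$ is exactly what \lemref{algProps}(2) supplies --- show that $f$ is returned. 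If instead $f>\bar{q}_e$, i.e.\ $f\ge Q(e_{\mathrm{prev}})$, I would show $\ddF(\pi_{e_{\mathrm{prev}}..n},\sigma_{1..f})\le\delta$ and conclude by the induction hypothesis applied to $e_{\mathrm{prev}}<e$: take a witness of $\ddF(\pi_{e_{\mathrm{prev}}..P(e_{\mathrm{prev}})},\sigma_{1..Q(e_{\mathrm{prev}})})\le\delta$ (it ends at $(P(e_{\mathrm{prev}}),Q(e_{\mathrm{prev}}))$), extend it by walking $\pi$ forward while staying at $\sigma_{Q(e_{\mathrm{prev}})}$ (feasible since $\sigma_{Q(e_{\mathrm{prev}})}$ sees all of $\pi_{e_{\mathrm{prev}}..n}$) up to the first $\pi$-index $j_2\ge e>P(e_{\mathrm{prev}})$ traversed by some witness of $\ddF(\pi_{e..n},\sigma_{1..f})\le\delta$ while it stays at $\sigma_{Q(e_{\mathrm{prev}})}$, and append that witness's tail.

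The hard part will be the $\bar{q}$-bookkeeping in the second inclusion: Algorithm~\ref{alg:SigmaExitspiEntries} deliberately shrinks the $\sigma$-range of each \FindSigmaExits{} call (this is what keeps the overall cost near-linear, as \lemref{redfsp} requires), so the crux of completeness is the claim that an exit $f>\bar{q}_e$ reachable from $e$ was already reachable --- hence already found --- from the previous updating entry. That is where I expect the real work, and it relies on both \lemref{noSkip} (to place $e_{\mathrm{prev}}$ and $P(e_{\mathrm{prev}})$ relative to $e$) and the free-sweep consequence of \lemref{QpCharact}. The remaining obligations --- distinguishing processed-versus-dropped and relevant-versus-irrelevant entries, and verifying that each \FindSigmaExits{} precondition and each traversal concatenation is legitimate --- should be routine once these invariants are set up.
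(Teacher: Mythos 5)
Your proposal is correct and follows essentially the same route as the paper's proof: soundness from feasibility of the steps, reduction of completeness to relevant entries via \lemref{algProps}, dominance of dropped entries via \lemref{noSkip} and \lemref{QpCharact}, and an induction over the processed entries showing that exits above $\bar q$ were already found from the previous updating entry. The only difference is cosmetic: where the paper invokes \lemref{GreedyCorrectness}, \lemref{intersection} and \lemref{composeInitialPiece} to splice traversals, you build the same traversals by hand using the "free sweep" observation that $\sigma_{Q(e)}$ sees all of $\pi_{e..n}$.
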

\begin{proof}
Clearly, any exit found is contained in $F^{\sigma}$, since $\SigmaExitsFromPi$
and $\FindSigmaExits$ only use feasible steps. For the converse,
let $e\in E$ be an arbitrary entry and consider the set $F_{e}^{\sigma}=\{q\mid\ddF(\pi_{e.. n},\sigma_{1.. q})\le\delta\}$
of $\sigma$-exits corresponding to the entry $e$. 

We first show that if $F_{e}^{\sigma}\ne\emptyset$ and hence $Q(e)<\infty,$
we have $F_{e}^{\sigma}=\FindSigmaExits(\pi_{P(e).. n},\sigma_{Q(e).. m})$.
Let $\bar{e}\in F_{e}^{\sigma}$. By Lemma~\ref{lem:algProps}, $(P(e),Q(e))$
is a greedy pair of $\pi_{e.. n}$ and $\sigma$ and hence also
of $\pi_{e.. n}$ and $\sigma_{1..\bar{e}}$. Lemma~\ref{lem:GreedyCorrectness}
thus implies $\ddF(\pi_{P(e).. n},\sigma_{Q(e)..\bar{e}})\le \delta$
and consequently $\bar{e}\in\FindSigmaExits(\pi_{P(e).. n},\sigma_{Q(e).. m})$.
The converse clearly holds as well.

Note that $e$ is not considered as $\hat{p}$ in any iteration of
the algorithm if and only if the algorithm considers some $\hat{p}$
with $e\in[\hat{p}+1.. p]$, where either (i) the algorithm finds
a greedy pair $(p,q)$ of $\pi_{\hat{p}.. n}$ and $\sigma_{1.. Q'(\hat{p})}$
that allows no further greedy steps, or (ii) the algorithm calls $\FindSigmaExits(\pi_{p.. n},\sigma_{Q'(\hat{p})..\bar{q}})$,
where $p=P(\hat{p})$ by Lemma~\ref{lem:algProps}. In the first
case, $F_{e}^{\sigma}=\emptyset$ since Lemma~\ref{lem:algProps}
proves $Q(e)=\infty$. In the second case, if $F_{e}^{\sigma}\ne\emptyset$,
we have $Q(e)<\infty$, and hence by Lemma~\ref{lem:noSkip}, $Q(e)=Q(\hat{p})$
and $P(\hat{p})\le P(e)$. Since $\sigma_{Q(\hat{p})}$ sees all of
$\pi_{P(\hat{p}).. n}$, any exit reachable from $(P(e),Q(e))$
is reachable from $(P(\hat{p}),Q(\hat{p}))$ as well. Hence $F_{e}^{\sigma}\subseteq F_{\hat{p}}^{\sigma}$. 

Let $\hat{p}_{1}\le..\le\hat{p}_{k}$ be the entries considered
as $\hat{p}$ by the algorithm. It remains to show that the algorithm
finds all exits $\bigcup_{i=1}^{k}F_{\hat{p}_{i}}^{\sigma}$. We inductively
show that the algorithm computes $F_{\hat{p}_{i}}^{\sigma}\setminus\bigcup_{j<i}F_{\hat{p}_{j}}^{\sigma}$
in the loop corresponding to $\hat{p}=\hat{p}_i$. The base
case $i=1$ follows immediately. Note that for every $i\ge2$, the
corresponding loop computes $\FindSigmaExits(\pi_{P(\hat{p}_{i}).. n},\sigma_{Q(\hat{p}_{i}).. Q(\hat{p}_{i-1})-1})=F_{\hat{p}_{i}}^{\sigma}\cap[Q(\hat{p}_{i}).. Q(\hat{p}_{i-1})-1]$.
The claim follows if we can show $F_{\hat{p}_{i}}^{\sigma}\cap[Q(\hat{p}_{i-1}).. m]\subseteq F_{\hat{p}_{i-1}}$.
Let $\bar{e}\in F_{\hat{p}_{i}}^{\sigma}$ with $\bar{e}\ge Q(\hat{p}_{i-1})$.
Then $\ddF(\pi_{\hat{p_{i}}.. n},\sigma_{1..\bar{e}})\le \delta$.
Together with $\ddF(\pi_{\hat{p}_{i-1}.. n},\sigma_{1.. Q(\hat{p}_{i-1})})\le \delta$,
Lemma~\ref{lem:intersection} shows that $\ddF(\pi_{\hat{p}_{i-1}.. n},\sigma_{1..\bar{e}})\le \delta$
and hence $e\in F_{\hat{p}_{i-1}}^{\sigma}$.

\end{proof}
\begin{lem}
Algorithm~\ref{alg:SigmaExitspiEntries} runs in time $\bigO((n+m)\log 1/\eps)$.\end{lem}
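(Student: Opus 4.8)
The plan is to charge the running time of Algorithm~\ref{alg:SigmaExitspiEntries} to four separate buckets: (i) one-time preprocessing together with the bookkeeping of the entry set $E$, (ii) the greedy-step operations performed inside the inner loop, (iii) the evaluations of $Q'(\hat p)$, and (iv) the recursive calls to $\FindSigmaExits$. For (i), I would build the range-search structures $\Ind{\pi}$ and $\Ind{\sigma}$ once, in time $\bigO((n+m)\log 1/\eps)$, and keep $E$ as a sorted list of integers from $[1..n]$ (sorted in $\bigO(n)$ by bucket sort); then each ``pop minimal index'' and each update $E\gets E\cap[p+1,n]$ only advances a pointer, contributing $\bigO(n)$ in total. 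For (iii), I would use the characterization $Q'(p)=\min\{q\mid\sigma_q\ge\max_{i\in[p..n]}\pi_i-\delta\}$ from \lemref{QpCharact} and evaluate it with one $\Ind{\pi}.\textsc{maxHeight}$ query and one $\Ind{\sigma}.\textsc{minIndex}$ query, i.e.\ in time $\bigO(\log 1/\eps)$; since each index of $[1..n]$ is popped as $\hat p$ at most once (after processing $\hat p$ the set $E$ is truncated to indices beyond the current $p\ge\hat p$), the outer loop runs at most $n$ times and bucket (iii) costs $\bigO(n\log 1/\eps)$.

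The heart of the analysis is bucket (ii). First I would establish the invariant that the variable $p$ is non-decreasing over the \emph{entire} execution and ranges over integers in $[1..n]$: inside an inner loop $p$ moves only via $\MinGreedyStep_\pi$, which strictly increases it, and whenever a new $\hat p$ is popped it exceeds the previous final value of $p$ because of the truncation $E\gets E\cap[p+1,n]$. Consequently $p$ is advanced at most $n$ times over the whole run. Next I would argue that every inner-loop iteration, with at most two exceptions per call of the outer loop, strictly advances $p$. Indeed, an iteration that does not advance $p$ is one in which $\MinGreedyStep_\pi$ is either skipped because $q=Q'$ (the loop then terminates) or is called and fails. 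If it fails, then $\MaxGreedyStep_\sigma$ succeeded in this iteration (otherwise no greedy step was found and the loop terminates), so afterwards the pair $(p,q)$ admits no greedy step on $\sigma$ at all by maximality of $\MaxGreedyStep_\sigma$; since deleting the already-traversed prefix of $\sigma$ does not affect $\vis_\sigma$ on the surviving indices, the \emph{next} iteration again finds no $\sigma$-step and re-issues the identical, still failing $\MinGreedyStep_\pi$ call, hence it terminates the loop. So at most one ``$\sigma$ advanced but $\pi$ stuck'' iteration and one terminating iteration occur per inner loop. Summing, the total number of inner-loop iterations is $\bigO(n)$, and as each iteration issues a constant number of greedy-step operations, each running in $\bigO(\log 1/\eps)$ on reduced-free-space inputs (by the preceding implementation lemma), bucket (ii) is $\bigO(n\log 1/\eps)$.

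For bucket (iv) I would invoke \lemref{algProps}: a call $\FindSigmaExits(\pi_{P(\hat p)..n},\sigma_{Q(\hat p)..\bar q})$ is made precisely for the considered entries $\hat p$ with $Q(\hat p)<\infty$, and at that moment $\bar q$ equals $m$ (for the first such entry) or $Q(\hat p^-)-1$, where $\hat p^-$ is the previously considered relevant entry. Since relevant entries are processed in increasing order and, by the monotonicity noted right after \lemref{QpCharact}, $Q$ is non-increasing on relevant entries, the $\sigma$-ranges $[Q(\hat p_i)..\bar q_i]$ used in consecutive relevant calls are pairwise disjoint subintervals of $[1..m]$ (and a call whose $\sigma$-range is empty costs only $\bigO(\log 1/\eps)$). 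By \lemref{findSigmaExitsTime} the $i$-th such call runs in $\bigO((\bar q_i-Q(\hat p_i)+1)\log 1/\eps)$, and these lengths sum to at most $m$; adding the at most $n$ possibly-empty calls, bucket (iv) costs $\bigO((n+m)\log 1/\eps)$. Adding the four buckets gives the claimed $\bigO((n+m)\log 1/\eps)$.

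The step I expect to be the main obstacle is the fine-grained iteration count of the inner loop in bucket (ii): one must verify that ``being stuck on $\sigma$'' propagates into the following iteration so that an iteration failing to advance $p$ is essentially forced to be the last of its inner loop, and this relies delicately on the maximality guarantee of $\MaxGreedyStep_\sigma$ combined with the insensitivity of $\vis_\sigma$ to truncating an already-traversed prefix of $\sigma$. The disjointness of the $\sigma$-ranges in bucket (iv) is the second delicate point, but it should follow cleanly from the monotonicity of $Q$ on relevant entries and the fact that Algorithm~\ref{alg:SigmaExitspiEntries} discards every entry dominated by the one just handled.
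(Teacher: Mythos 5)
Your proposal is correct and follows essentially the same route as the paper: the calls to $\FindSigmaExits$ are charged to pairwise disjoint $\sigma$-intervals (giving $\Oh(m\log 1/\eps)$ via \lemref{findSigmaExitsTime}), and the greedy-step calls are charged to the at most $n$ distinct values of $p$, using the maximality of $\MaxGreedyStep_\sigma$ to argue that only $\Oh(1)$ calls occur between consecutive advances of $p$. Your bucket~(ii) merely spells out in more detail the paper's terse ``by maximality'' justification, and your handling of the $Q'$ queries and empty-range calls fills in bookkeeping the paper leaves implicit.
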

\begin{proof}
Consider the total cost of the calls $\FindSigmaExits(\pi_{I_i}, \sigma_{J_i})$. Since all $J_i$ are disjoint and $\bigcup_i J_i \subseteq [1..m]$, \lemref{findSigmaExitsTime} bounds the total cost of such calls by $\Oh(\sum_i |J_i| \log(1/\eps))=\Oh(m\log(1/\eps))$. Let $p_1,\dots,p_k$ denote the distinct indices considered as $p$ during the execution of the algorithm. Between changing $p_i$ to $p_{i+1}$, we will make at most one call to $\MaxGreedyStep_\sigma$ (by maximality) and at most once call to $\MinGreedyStep_\pi$. Hence $k \le n$ bounds the number of calls to greedy steps by $\Oh(n\log(1/\eps))$.
\end{proof}

\bibliography{frechet}

\end{document}